\documentclass[submission,copyright,creativecommons]{eptcs}

\usepackage{amsthm}

\usepackage{iftex}

\ifpdf
  \usepackage{underscore}         
  \usepackage[T1]{fontenc}        
\else
  \usepackage{breakurl}           
\fi

\title{Group Knowledge of Hypothetical Values}
\author{Alexandru Baltag
\institute{Institute for Logic, Language and Computation \\ University of Amsterdam\\
\\
Amsterdam, the Netherlands}
\email{thealexandrubaltag@gmail.com}
\and
Sonja Smets
\institute{Institute for Logic, Language and Computation \\ University of Amsterdam\\
\\
Amsterdam, the Netherlands}
\email{\quad s.j.l.smets@uva.nl}
}

\usepackage[T1]{fontenc}
%
\usepackage{graphicx}
%
%

\usepackage{hhline,multirow,tabularx,tikz,xstring,amsmath}

\usepackage{amsfonts,mathrsfs,epic,amsbsy,amstext, amssymb, amscd,amsmath, textcomp}
\usepackage[all]{xy}

\renewcommand{\phi}{\varphi}

\newcommand{\bq}{\begin{quote}}
\newcommand{\eq}{\end{quote}}

\newtheorem{Th}{Theorem}
\newtheorem{ax}{Axiom}
\newtheorem{df}{Definition}
\newtheorem{pr}{Proposition}
\newtheorem{cl}{Corollary}
\newtheorem{re}{Remark}
\newtheorem{as}{Assumption}
\newtheorem{wg}{Wild Guess}

\newtheorem{lemma}{Lemma}

\newcommand{\Agents}{\mathcal{A}}




\newcommand{\ot}{\leftarrow}


\newcommand{\bM}{{\mathbf M}}

\newcommand{\bS}{{\mathbf S}}

\newcommand{\myComment}[1]{}

\newcommand{\ux}{\overline{{x}}}
\newcommand{\uy}{\overline{{y}}}
\newcommand{\uz}{\overline{{z}}}

\newcommand{\bth}{\begin{Th}}
\newcommand{\Eth}{\end{Th}}
\newcommand{\bax}{\begin{ax}}
\newcommand{\eax}{\end{ax}}
\newcommand{\blm}{\begin{lm}}
\newcommand{\elm}{\end{lm}}
\newcommand{\bdf}{\begin{df}}
\newcommand{\edf}{\end{df}}
\newcommand{\bpr}{\begin{pr}}
\newcommand{\epr}{\end{pr}}
\newcommand{\bcl}{\begin{cl}}
\newcommand{\ecl}{\end{cl}}
\newcommand{\bre}{\begin{re}}
\newcommand{\ere}{\end{re}}
\newcommand{\bas}{\begin{as}}
\newcommand{\eas}{\end{as}}
\newcommand{\bwg}{\begin{wg}}
\newcommand{\ewg}{\end{wg}}
\newcommand{\bex}{\begin{ex}}
\newcommand{\eex}{\end{ex}}

\newcommand{\bit}{\begin{itemize}}
\newcommand{\eit}{\end{itemize}\par\noindent} \newcommand{\beq}{\begin{equation}}
\newcommand{\eeq}{\end{equation}\par\noindent} \newcommand{\beqa}{\begin{eqnarray*}}
\newcommand{\eeqa}{\end{eqnarray*}\par\noindent} \newcommand{\beqn}{\begin{eqnarray}}
\newcommand{\eeqn}{\end{eqnarray}\par\noindent}

\newcommand{\sima}{\sim_a}

\newcommand{\simB}{\sim_B}

\newcommand{\simA}{\sim_A}

\newcommand{\bD}{\mathbf{D}}

\newcommand{\gA}{\mathfrak A}


\begin{document}
\maketitle

\begin{abstract}
In recent years, epistemic logics have been extended with operators $K_a x$ for knowledge of (the value of) a variable $x$ (by an agent $a$). We study dynamic versions of these logics, enriched with modalities for semi-public data-exchange events (e.g., public announcements, data-sharing within a subgroup, or changing the value of a variable).
To obtain a complete axiomatization of data-exchange events, in the presence of equality $x=y$ and $K_ax$, one needs to extend the logic further: first, with an operator for distributed knowledge $K_A x$ of the value (by a \emph{group} of agents $A$); next, with a conditional version of this: distributed knowledge $K_A^\varphi x$ (of the value by a group) given some hypothetical condition (expressed by some proposition $\varphi$); then, with definite descriptions $x_A^\varphi$, denoting the `hypothetical' value of $x$ according to $A$'s (distributed) knowledge given condition $\varphi$. In order to deal with common knowledge in the presence of semi-public data exchanges, we also need to add a \emph{novel conditional version} of the recent concept of \emph{common distributed knowledge}. We investigate the resulting logic, giving examples and presenting a complete axiomatization and a decidability proof.
\end{abstract}
%
%

%
\section{Introduction}

\vspace{-2mm}

Information may come in both propositional and non-propositional (e.g. numerical or graphic) form. Such \emph{data} can be encoded as \emph{values of local variables} (e.g. a cryptographic key or a password), that can be stored in certain \emph{locations} or `sources'
(e.g., databases, websites, etc.). Each source can be associated with an \emph{agent}, either because it actually is the knowledge base of a (natural or artificial) agent, or because we think of the source itself as an abstract `agent' (possessing exactly the
information that is stored at it). More complex data may have an \emph{extended location}, being \emph{distributed} among a number of agents.\footnote{One can recover the value of such complex variables only by accessing several sources.}

In this setting, we will consider a number of \emph{data-exchange actions} or `events'.
The agents may \emph{jointly receive some propositional information}, in the form of a \emph{public announcement}.
An agent may \emph{change the values of her local variables}, according to some substitution function. Or she may \emph{gain access to another site}, in which case she can be assumed to instantly `read' (and copy in her own knowledge base) all the
information stored at that source. An agent may in fact gain simultaneous access to multiple such sources, and at the same time different agents may gain access to various sources. In this way, an agent may store data that are not `owned' by her (e.g. values of some other agents' local variables)\footnote{But we assume that each agent can change only the values of her own local variables.}. Groups of agents may \emph{share or exchange data} in a symmetric or asymmetric manner. Finally, we can also consider events that combine all these features. In general, such data-exchange events can be \emph{public} (when it is common
knowledge that the data stored at a given location has become visible to
everybody, cf. the WikiLeaks case), or \emph{semi-public} (when the data are accessible only
to some agents, but this fact is common knowledge), or \emph{semi-private} (when
some databases are publicly accessed, others are privately accessed, and these
facts are known to some but not to all of the agents), or even \emph{fully private}
(e.g. a \emph{secret hacking} of a database by an unsuspected intruder).

However, in this extended abstract, we follow the approach in \cite{BS20}, by
\emph{restricting ourselves to `semi-public' data-exchanges}. The reason is that the syntax, the axioms and the proofs are simpler in this case, allowing us to focus at the new sources of complexity posed by the introduction of explicit variables and of the new operator of (conditional, distributed) knowledge of the value of a variable.
The generalization to arbitrary events is left to the full journal version of this paper.

\smallskip

To reason about these scenarios, we introduce a logic with data variables $x$ and equality of values $x=y$ (as well as other predicates and functions of variables), endowed with a Kripke-based semantics that can model both the traditional account of `knowing that' ($K_a \varphi$ for \emph{agent $a$ knows that $\varphi$}) and an account of `knowing what' ($K_a x$ for \emph{$a$ knows the value of $x$}). While work on the former (`propositional') knowledge $K_a \varphi$ is well known in standard epistemic logic, work on the later form $K_a x$ traces back to Plaza \cite{Plaza} and subsequent investigations by \cite{vEGWang,Yanjing,WangFan1,WangFan2,GuWang,Ding,Hong,Baltag2016} on formalizing `knowledge de re'. While Plaza axiomatized the static logic of `knowing what', the extension with public announcement operators $[!\varphi]\psi$ was only later axiomatized by Wang and Fan \cite{WangFan2} via the introduction of a \emph{conditional or `hypothetical' version of `knowing what'}: $K_a^\varphi x$ means that $a$ knows the value of $x$ given the information that proposition $\varphi$ was the case. Note that this type of \emph{conditional knowledge may not imply the knowledge of the `real' value of $x$} (in the actual world), but only the fact that the agent \emph{can uniquely determine a hypothetical value} of $x$ (that would apply only to worlds in which $\varphi$ was the case).

Part of the novelty of our setting is that we have a \emph{way to explicitly refer to such hypothetical values}, via the introduction of a new term $x_a^\varphi$, denoting the \emph{(unique) value that $x$ would have according to agent $a$ if $\varphi$ were true}. (Here, we take $x_a^\varphi$ to be undefined if such a unique value doesn't exist, either because the agent $a$ knows that $\varphi$ is false, or because the hypothesis $\varphi$ is not enough for $a$ to determine a unique such value of $x$.)
These complex terms denoting hypothetical values seem to be required by the presence of equality $x=y$ and other predicates: e.g. we need a way to express the fact that the current value of $x$ (at a world $s$ at which $\varphi$ is false) is the same as the hypothetical value of $y$ (at worlds indistinguishable from $s$ in which $\varphi$ were true); and this can be expressed by writing that $x=y_a^\varphi$ holds at $s$.

The presence of data-exchange events forces us to further extend the static base of our logic. As noted in \cite{BS20}, when an agent $a$ gains access to another agent $b$'s knowledge base, her new state of knowledge matches the \emph{distributed knowledge} of the group $\{a, b\}$. So on the side of propositional knowledge, we need operators $K_A \varphi$ for \emph{distributed knowledge within group $A$}.\footnote{The notation $D_A\varphi$ is typically used for distributed knowledge, but we prefer here $K_A\varphi$ because we think of this as capturing a natural notion of (virtual) group knowledge. Furthermore, this will allow us to use the notation $C_{\gA}\varphi$ for ``common distributed knowledge", playing on the analogy between the relationship of individual knowledge $K_a$ with common knowledge $C_A$ on the one hand, and the relationship of distributed knowledge $K_A$ with common distributed knowledge $C_\gA$ on the other hand.} A similar move is also necessary on the side of ``knowing what'': one needs operators $K_A x$ that express that \emph{the group $A$ has distributed knowledge of the value of $x$}. And, because of the simultaneous presence of public announcements, we need in fact a \emph{conditional version} of these operators: $K_A^\varphi x$ stands for group $A$ having a \emph{conditional distributed knowledge value of $x$, given the hypothesis $\varphi$.} Furthermore, due again to the presence of equality and other predicates, we are lead to generalize the terms for hypothetical-values to groups: $x_A^\varphi$ will denote the \emph{(unique) value that $x$ would have according to (the distributed knowledge of) group $A$ if $\varphi$ were true}.\footnote{In addition to such terms for hypothetical values $x_A^\varphi$, we also have other complex terms of more traditional kinds, formed by applying functions $F(x_1, \ldots, x_n)$, or defined by cases.}

We similarly follow \cite{BS20} into generalizing common knowledge $C_A \varphi$ to super-groups $\gA$: for a family $\gA$ of groups of agents, $C_\gA \varphi$ denotes \emph{common distributed knowledge} of $\varphi$, equivalent to the infinite conjunction of all finite iterations of $K_A K_B\ldots \varphi$ for all $A,B, \ldots\in \gA$: every group $A\in \gA$ has distributed knowledge that $\varphi$, and also has distributed knowledge that every other group $B\in \gA$ has distributed knowledge that $\varphi$, etc. Once again, this move is necessitated by the presence of (semi-public) data-exchange events, but the notion of common distributed knowledge has independent motivation and applications \cite{BS20}.
In fact, a further extension is needed here, in presence of public announcements: the novel notion of \emph{conditional common distributed knowledge} $C_\gA^\theta\varphi$ appears for the first time in our current submission.\footnote{We stopped here, due to the limited size of this extended abstract. But the generalization to arbitrary events would require a further, wilder extension of the notion of common (distributed) knowledge, namely the introduction of polyadic modalities as in \cite{BS24}, denoting ``conditional levels of group knowledge'' in the spirit of \cite{Parikh}.}

\smallskip

In this paper we study the resulting logic of semi-public data-exchange events in the presence of (conditional) group knowledge of (hypothetical) values of variables, (conditional) common distributed knowledge and the other ingredients mentioned above. The main results are a complete axiomatization and a proof of decidability for this Dynamic Logic of (group) Knowledge of (hypothetical) values $DLKV$.

\vspace{-4mm}

\section{Syntax and Semantics}

\vspace{-2mm}

A \emph{vocabulary} $\mathcal{V}=(V, \Agents, ag, Pred, Funct, ar)$ consists of the following:
\begin{itemize}
\item a (finite or infinite) set $\Agents$ of \emph{agents} (or `locations', or databases) $a, b, \ldots$;
\item a (finite or infinite) set $C$ of \emph{constants}, that include \emph{two logical constants} $0, 1$, and \emph{one `improper' constant} $\uparrow$ (denoting the `\emph{undefined}' value).
\item a (finite or infinite) set $V$ of \emph{(basic) local variables} $v_a^1$, $v_a^2, \ldots$, each being `located' at some location $a\in\Agents$ (or `owned' by some agent $a$) where its correct value is always stored; we indicate the location/owner via the subscript in $v_a$, but we sometimes drop the subscript, writing just $v$ (or $v^1, v^2, \ldots$), when we are not interested in the variable's location;
\item a set $Pred$ of \emph{predicate symbols} $P, Q, \ldots$, including the  \emph{equality predicate} ($=$), together with an \emph{arity} $ar: Pred\to \mathbb{N}$, mapping each symbol $P\in Pred$ to a number $ar(P)\in \mathbb{N}$, where $ar(=)=2$
    \item and a set $Funct$ of \emph{function symbols} $F, G, \ldots$, also endowed with an \emph{arity map} $ar: Funct\to \mathbb{N}$, associating to each function symbol $F\in Funct$ a natural number $ar(F)\in \mathbb{N}$.
\end{itemize}



\par\noindent\textbf{Groups and Super-groups} Given the vocabulary $\mathcal{V}$, a \emph{group} $A\subseteq\Agents$ is a finite, non-empty set of agents in $\Agents$. We use letters $A, B, \ldots$ to denote groups. A \emph{supergroup} $\mathfrak{A} \subseteq \mathcal{P}(\Agents)$ is a finite, non-empty set of groups. We use letters $\mathfrak{A}, \mathfrak{B},\ldots$ to denote supergroups.

\smallskip
\par\noindent\textbf{Syntax of $DLKV$}. The \emph{Dynamic Logic of (group) Knowledge of (hypothetical) Values} $DLKV(\mathcal{V})$ (over the vocabulary $\mathcal{V}$) has a syntax, consisting of: a set $Var:=Var (\mathcal{V})$ of compound \emph{variables} (or
`terms') $x$ (over the given vocabulary $\mathcal{V}$); a set $Fml;=Fml(\mathcal{V})$ of \textit{formulas} $\varphi$ (over $\mathcal{V}$);
and a set $\mathcal{E}$ of (`semi-public') \emph{events} $e$;
these sets are simultaneously defined by mutual recursion:
$$
\begin{array}{cc cc cc cc cc cc cc}
x & ::= &  c &|& v &|&
x|_\varphi y &|&
F(\overline{x}) &|& x_A^\varphi
&|&
e(x)
\\
\vspace{0.2cm}
\varphi & :: =  & P\overline{x}  &|&  \neg\varphi &|& \varphi \wedge \varphi         &|&
K_A \varphi
&|&
C_{\mathfrak{A}}^\varphi \varphi
&|&
[e] \varphi
\\
e & ::= & ! \Phi /\sigma && && && && &&
\end{array}
$$
where: $c\in C$ is any constant symbol; $v\in V$ is any basic variable; $x\in Var$ is any variable (term), and
$\overline{x}=(x_1, \ldots, x_n)$ are $n$-tuples of variables in $Var$; $P\in Pred$ is any predicate symbol of arity $n$; $F$ is any $n$-ary function symbols\footnote{In particular, recall that the predicate symbols include the equality predicate $=$, and the constants $0,1,\uparrow$. As usual, we will write $x=y$ instead of $=xy$.}; $a\in \Agents$ is any agent, $A\subseteq \Agents$ is any group of agents, and $\mathfrak{A}\subseteq\mathcal{P}(\Agents)$ is any supergroup; $\Phi\subseteq Fml$ is a finite set of (already formed) formulas;
and finally $\sigma: V\cup \Agents\to Var\cup \mathcal{P}(\Agents)$ is a `substitution' map, assigning to each basic variable $v\in V$ some term $\sigma(v)\in Var$ and assigning to each agent $a\in \Agents$ some group of agents $\sigma(a)\subseteq \Agents$. These components are subject to the following two constraints, for all $a\in \Agents$ and all basic local variables $v_a$ of agent $a$:
\vspace{-1mm}
\begin{enumerate}
\item $a\in \sigma(a)$ (\emph{each agent continuously accesses her own database});
\item $K_a \sigma(v_a)\in \Phi$ (\emph{agents know what values they reassign to their own local variables}).
\end{enumerate}
\par\noindent\textbf{The static fragment $LKV$}. The \emph{Logic of (group) Knowledge of (hypothetical) Variables} $LKV(\mathcal{V})$ is the `static fragment' of $DLKV$, obtained by removing the dynamic operators $[e]\varphi$ and $e(x)$ from (all the three components of) the recursive definition of the language $DLKV$.


\smallskip
\par
\noindent\textbf{Precondition, Access and Postconditions of an event} Furthermore, for any event
$e=!\Phi/\sigma$, any basic variable $v\in V$ and any agent $a\in\Agents$, we use the notations

\vspace{0.1cm}
\par\noindent
{\centerline{$pre_e \, \, :=\,\, \bigwedge \Phi, \,\,\,\,\, \,\,\,\,\, e(a)\,\, :=\,\, \sigma(a),
\,\,\,\,\, \,\,\,\,\, post_e (v) \, \, :=\,\, \sigma(v)$}}
\par\noindent
The first of these is called the \emph{precondition} of event $e$ (intuitively characterizing the conditions of possibility of this event);
the second is called the \emph{access map} of event $e$ (intuitively giving for each agent $a\in \Agents$ the list $e(a)\subseteq \Agents$ of agents or locations whose information is being accessed by agent $a$ during this event);
while the third is the \emph{postcondition map} of event $e$, assigning to each variable $v\in V$ the term $\sigma (v)\in Var$ (and intuitively capturing the variable-changes induced by the event).

\smallskip
\par
\noindent\textbf{Informal meaning of our language} 
Our variables are meant to express `data' (which can be numerical, propositional, graphical, etc), and will take various values in various states (or else they can be undefined, in which case they will be assigned the `improper' value $\uparrow$, for `undefined'). The basic variables are either constants $c$ (taking the same value in all states, and keeping their value unchanged) or local variables $v_a$, each `owned' by some agent $a$. From these, we recursively form more complex variables by: \emph{applying function} symbols $F(x_1, \ldots, x_n)$; \emph{defining new terms by cases} ($x|_\varphi y$ is a variable taking the same value as $x$ if $\varphi$ is true, and otherwise taking the same value as $y$); forming \emph{definite descriptions} $x_A^\varphi$ to denote the hypothetical value that $x$ would take according to group $A$ if $\varphi$ were true; and \emph{applying the (substitution involved in) event $e$ to any variable $x$:} $e(x)$ takes the value that $x$ will acquire after event $e$

At the propositional side, we have: atoms $P\ux$ describing relations between the values of a tuple $\ux=(x_1, \ldots, x_n)$ of variables; Boolean operators; group (distributed) knowledge $K_A\varphi$; conditional common distributed knowledge $C_\gA^\theta \varphi$ (pf $varphi$, among the groups in a supergroup $\gA$, conditional on some hypothesis $\theta$); and dynamic modalities $[e]\varphi$, saying as usually that $\varphi$ will hold after event $e$.

Finally, the expressions $e:=!\Phi/\sigma$ for semi-public data-exchange events are of a particular simple kind.
Our events have no epistemic structure (-no epistemic relations as in in standard Dynamic Epistemic Logic), since
the structure of a `semi-public' event is fully transparent: \emph{what is happening} (-the presuppositions of this action, who gains access to whose database and which variables are changed and how) \emph{is common knowledge}; although \emph{the content of this happening} (-the actual data that are being accessed, the current values of variables, as well as the next values acquired after the event) \emph{may remain private}. So event expressions are simple pairs of a set $\Phi$ of precondition formulas and a substitution $\sigma$. The set $\Phi$ comprises all presuppositions or `preconditions' of event $e$: the propositions that must be true in order for the event to happen. The precondition formula $pre_e$ collects them into one proposition, capturing in a sense the propositional information carried by event $e$. The substitution $\sigma$ captures the event's \emph{effect}: for any agent $a$, $e(a):=\sigma(a)$ is the list of all the sources accessed by $a$ during this event; while for any local variable $v$, $post_e(v):=\sigma(v)$ denotes the value that $v$ acquires after the event.


\smallskip
\par
\noindent\textbf{Extending Access to Groups and Supergroups} For any event $e\in \mathcal{E}$, we can extend its access map from single agents to groups $A$ and supergroups ${\mathfrak A}$, by putting:
\vspace{0.1cm}
\par\noindent
{\centerline{$e(A) \, :=\, \bigcup_{a\in A} e(a), \,\,\,\,\,\,\,\,\,\,\,\, \, e[\mathfrak{A}] \, :=\, \{e(A): A\in \mathfrak{A}\}$}

\smallskip

\par
\noindent\textbf{More Abbreviations}. We define as usual the other Boolean connectives. We also use the abbreviations, for variables $x\in Var$ and tuples $\ux=(x_1, \ldots, x_n)\in Var^n$:
\vspace{-2mm}
$$\top \, :=\, (\uparrow=\uparrow),  \,\,\,\,\, \,\,\,\,\, \bot  \, :=\, \neg\top,  \,\,\,\,\, \,\,\,\,\, ?_\varphi \, :=\,
1|_\varphi 0,  \,\,\,\,\, \,\,\,\,\, x\!\!\uparrow \, :=\, (x=\uparrow),
 \,\,\,\,\, \,\,\,\,\,
x\!\!\downarrow \, :=\, \neg x\!\!\uparrow,  \,\,\,\,\, \,\,\,\,\,
\ux\!\!\downarrow \, :=\, \bigwedge_{i=1}^{n} x_i\!\!\downarrow ,
\vspace{-4mm}$$
$$K_A^\theta \varphi \,\, :=\,\, K_A (\theta\to \varphi), \,\,\,\,\, \,\,\,\,\,   K_A^\theta x  \,\, :=\,\, K_A^\theta (x= x_A^\theta), \,\,\,\,\, \,\,\,\,\, K_A^\theta \ux \,\, :=\,\, \bigwedge_{i=1}^n K_A^\theta x_i,  \,\,\,\,\, \,\,\,\,\,
K_A x \,\, :=\,\, K_A^\top x,   \,\,\,\,\, \,\,\,\,\, K_A \ux \,\, :=\,\, K_A^\top \ux, \vspace{-3mm}$$
$$\langle K_A\rangle \varphi  \,\, :=\,\ \neg K_A\neg \varphi,
\,\,\,\,\, \,\,\,\,\, \langle K_A^\theta\rangle \varphi  \,\, :=\,\ \neg K_A^\theta\neg \varphi,
\,\,\,\,\, \,\,\,\,\, \langle C_{\gA}^\theta\rangle \varphi  \,\, :=\,\ \neg C_{\gA}^\theta \neg \varphi,
\,\,\,\,\, \,\,\,\,\, \langle e\rangle \varphi  \,\, :=\,\ \neg [e] \neg \varphi,$$
$$C_A^\theta \varphi \,\, :=\,\, C_{\{\{a\}: a\in A\}}^\theta \varphi, \,\,\,\,\, \,\,\,\,\,
C_A \varphi \,\, :=\,\, C_A^\top \varphi, \,\,\,\,\, \,\,\,\,\,
C_\gA x  \,\, :=\,\, C_\gA (\bigwedge_{A\in \gA} K_A x),
\,\,\,\,\, \,\,\,\,\, C_A x  \,\, :=\,\, C_{\{\{a\}: a\in A\}} x  \, =\, C_A (\bigwedge_{a\in A} K_a x).\vspace{-2mm}$$
So \emph{`knowledge what' is definable in our logic, in all its forms} (individual, common, distributed and common distributed knowledge of the value of a variable), via the above abbreviations defining $K_a x$, $K_A x$, $C_A x$ and $C_\gA x$. Also, note that, for the empty tuple $\lambda=()$, we have $K_A^\theta \lambda = \bigwedge \emptyset =\top$ by definition; while the Boolean variable $?_\varphi$ takes value $1$ if $\varphi$ is true, and value $0$ otherwise.

\smallskip

\par
\noindent\textbf{Abbreviations for events} Whenever some component of an event is `trivial', we may skip it. E.g. if the set of preconditions $\Phi=\emptyset$ in $e=!\Phi/\sigma$ (so $pre_e=\top$), we can write $e=!\sigma$; if in addition the access map is trivial, i.e. we have $\sigma(a)=\{a\}$ for all agents $a\in \Agents$, then we can just write the (non-trivial) post-conditions (only for those variables $v$ for which $\sigma(v)$ is not the same term as $v$): e.g. writing $!(v:= x)$ for a semi-public substitution event which only affects variable $x$; and dually, if both $\Phi=\emptyset$ and the post-conditions are trivial (leaving all variables the same), we can write just the (non-trivial part of the) access map, e.g. writing $!(a:b)$ for a semi-public event in which the only thing happening is that agent $a$ gains access to agent $b$'s knowledge base (so $\sigma(a)=\{a, b\}$ and $\sigma(i)=\{i\}$ for all other agents $i\neq a$).

\smallskip

\par
\noindent\textbf{Extended location} For any \emph{static expression} (i.e., term or formula $\alpha\in Var\cup Fml$ in the static language $LKV$), the set $\mathcal{A}(\alpha)$ of \emph{agents involved in $\alpha$} can be thought of as the ``\emph{extended location}" of that expression: its value (or truth value) is distributed among the agents in $\mathcal{A}(\alpha)$. Formally, we put:
\vspace{-2mm}
$$\mathcal{A}(v_a)\,\, =\,\, \{a\}, \,\,\,\,\,  \mathcal{A}(c)=\emptyset,  \,\,\,\,\, \mathcal{A}(\varphi\wedge \psi)  \,\, =\,\, \mathcal{A}(\varphi)\cup \mathcal{A}(\psi), \,\,\,\,\, \mathcal{A}(\neg\varphi)=\mathcal{A}(\varphi),
\vspace{-2mm}$$
$$\mathcal{A}(x_A^\varphi)= \mathcal{A}(K_A \varphi)  \,\, =\,\, A,  \,\,\,\,\,  \,\,\,
\mathcal{A}(F(\overline{x}))=\mathcal{A}(P\overline{x})\,\, =\,\, \bigcup_{i=1,n} \mathcal{A}(x_i) \,\,\,\,\, \mbox{ for $\overline{x}=(x_i)_{1\leq i\leq n}$}.
\vspace{-4mm}$$
\par\noindent\textbf{Semantics: epistemic data models} We assume given a \emph{first-order model} $\bD=(D, I)$, consisting of: a \emph{domain of values (objects)} $D$, that includes an object $\uparrow_\bD\in D$ (the \emph{`undefined' value}); and an \emph{interpretation function} $I$, mapping each constant symbol $c\in C$ into some value $c_\bD:=I(c)\in D$, each functional symbol $f$ of arity $n$ into a function $I(f): D^n\to D$, and each
relational symbol $P$ of arity $n$ into a set $I(P)\subseteq D^n$ of $n$-tuples of values in $D$, with \emph{the equality symbol $=$ being interpreted as the identity relation $I(=)=\{(d,d):d\in D\}$ on the set $D$}.

An \emph{epistemic data model over $\bD$} is a tuple $\bM= (S, \sim, \bullet(\bullet))$, where:
$S$ is a set of \emph{states} (or `possible worlds'), typically denoted by $s, w, \ldots$; $\sim: \Agents\to \mathcal{P}(S\times S)$ maps each agent $a\in\Agents$ to some \emph{equivalence relation} $\sima\subseteq W\times W$,
called \emph{agent $a$'s `indistinguishability' relation};
and $\bullet(\bullet):S\times V \to D$ maps states $s$ and basic variables $v\in V$ into \emph{values} $s(v)\in D$; where we require that \emph{agents know their own data}, i.e.:
\vspace{-2mm}
$$s\sim_a w \, \mbox{ implies } s(v_a)=w(v_a).$$
\par\noindent\textbf{Group indistinguishability} For every group $A\subseteq \Agents$, we introduce \emph{group indistinguishability relations} $\simA :=\bigcap_{a\in A} \sima$ by \emph{taking intersections}, i.e.: $s \simA w$ iff $s\sima w$ for all $a\in A$.

\medskip
\par\noindent\textbf{Satisfaction, Term-Interpretation, Update} We simultaneously define three  semantic notions: (1) a \emph{satisfaction relation} $s\models_{\bM} \varphi$ between states $s\in S$ (in a data model $\bM$) and formulas $\varphi\in Fml$; (2) an \emph{extended value function} mapping $(state, variable)$-pairs $(s,x)\in S\times Var$
into values $s(x)_\bM\in D$; and (3) an \emph{update function} mapping data models $\bM= (S, \sim, \bullet(\bullet))$ and events $e$ into \emph{updated models} $e(\bM)=(S^e, \sim^e, \bullet(\bullet)^e)$ (over the same first-order model $\bD$).
Whenever the model is understood, we skip the subscript $\bM$, writing simply $s\models \varphi$ and $s(x)$. The definition is by mutual recursion:
\[
\begin{array}{llll}
s\models_\bM P x_1\ldots x_n \  \ & \mbox{ iff } \ \ & (s(x_1)_\bM, \ldots, s(x_n)_\bM)\in I(P)\\
s\models_\bM\neg\varphi  \  \ & \mbox{ iff } \ \ & s\not\models_\bM\varphi \\
s\models_\bM\varphi\wedge\psi \  \ & \mbox{ iff } \ \ &  w\models_\bM\varphi \mbox{ and } w\models_\bM\psi
\\
s\models_\bM K_A \varphi \ \ & \mbox{ iff } \ \ & w\models_\bM\varphi \mbox{ for all $w\simA s$}
\\
s\models_\bM C_{\gA}^\theta\varphi  \ \ & \mbox{ iff } \ \ & w\models_\bM\varphi \mbox{ for every chain $s=s_0\sim_{A^1} s_1 \sim_{A^2} \cdots \sim_{A^n} s_n =w$}
\\
{} \ \ & {}  \ \ & \mbox{ (of any finite length $n\geq 0$) with $A_k\in {\gA}$ and $s_k\models_\bM\theta$ for all $k\geq 1$}
\\
s\models_\bM [e] \varphi \ \ & \mbox{ iff } \ \ & s\models_\bM pre_e \mbox{ implies } s\models_{e(\bM)} \varphi
\\
\\
s(c)_\bM  \ \ & = \ \ & c_\bD \mbox{ as given by the interpretation $I(c)$}\\
s(v)_\bM \ \ & = \ \ & s(v) \mbox{ as given in the model $\bM$} \\
s(x|_\varphi y)_\bM  \ \ & = \ \ & s(x)_\bM  \,\,\,\,\, \mbox{ if $s\models_\bM\varphi$, and }\\
s(x|_\varphi y)_\bM  \ \ & = \ \ & s(y)_\bM  \,\,\,\,\, \mbox{ if $s\not\models_\bM\varphi$ }\\
s(F(x_1,\ldots, x_n))_\bM  \ \ & = \ \ & (I(F)) (s(x_1)_\bM, \ldots, s(x_n)_\bM)
\\
s(x_A^\varphi)_\bM \ \ & = \ \ &  \mbox{ the unique } d\in D \mbox{ s.t. } w(x)_\bM=d \mbox{ for all } w\simA s \mbox{ with }
w\models_\bM\varphi,\\
{}\ \ & {} \ \ & \mbox{ if such a unique $d$ exists, and}\\
s(x_A^\varphi)_\bM \ \ &  = \ \ & \uparrow_\bD   \,\,\,\,\,  \mbox { otherwise.}
\\
s(e(x))_\bM \ \ &  = \ \ & s(x)_{e(\bM)} \,\,\,\,\, \mbox{ if $s\models_\bM pre_e$, and}\\
s(e(x))_\bM \ \ &  = \ \ & \uparrow_\bD \,\,\,\,\,  \mbox { otherwise.}
\\
\\
\bM=(S, \sim, \bullet(\bullet)) \ \ &  \mapsto \ \ & e(\bM)=(S^e, \sim^e, \bullet(\bullet)_{e(\bM)}), \,\,\, \,\,\ \mbox{  where }
\\
S^e \ \ & = \ \ & \| pre_e\|_\bM \,\,\, = \{s\in S: s\models_\bM pre_e\}
\\
s\sim_a^e w  \ \ & \mbox{ iff } \ \ & s\sim_{e(a)} w \,\, \, \,\,\ \mbox{ (for $s,w\in S^e$)}
\\
s(v)_{e(\bM)} \ \ & = \ \ & s(post_e(v))_\bM \,\,\,\,\, \mbox{ (for $s\in S^e$)}.
\end{array}
\]

\begin{pr}\label{Preservation}
Let $s,w$ be states s.t. $s\simA w$ for some group $A\subseteq\Agents$, let $\varphi$ be some formula s.t. $\mathcal{A}(\varphi)\subseteq A$, and let $x\in Var$ be a term s.t. $\mathcal{A}(x)\subseteq A$. Then we have:
\begin{enumerate}
\item $s\models \varphi$ iff $w\models \varphi$;
\item $s(x)=w(x)$.
\end{enumerate}
\end{pr}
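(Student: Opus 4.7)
The plan is a simultaneous induction on the structure of static formulas and terms of $LKV$, proving the two claims in parallel (since $\mathcal{A}$ is only defined on static expressions, the proposition is implicitly about the static fragment). The single driving observation is that whenever $B \subseteq A$ one has $\simA \,\subseteq\, \simB$, so $s \simA w$ implies $s \simB w$ and the equivalence classes $[s]_B$ and $[w]_B$ coincide.

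The base cases are immediate: for a constant $c$, $s(c) = c_\bD = w(c)$ by definition of the interpretation; for a basic local variable $v_b$ with $\mathcal{A}(v_b) = \{b\} \subseteq A$, we have $b \in A$, so $s \simA w$ entails $s \sim_b w$, and the \emph{agents know their own data} condition directly yields $s(v_b) = w(v_b)$. The compositional non-modal cases -- function terms $F(\overline{x})$, atomic predicates $P\overline{x}$, case-defined terms $x|_\varphi y$, and Boolean combinations -- follow by routine application of the inductive hypothesis to the subexpressions, whose extended locations are each contained in $A$.

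The modal cases $K_B \varphi$ and $x_B^\varphi$ are where the driving observation pays off. Since $\mathcal{A}(K_B \varphi) = \mathcal{A}(x_B^\varphi) = B \subseteq A$, the common class $[s]_B = [w]_B$ means that the universal quantifier in the semantics of $K_B \varphi$ ranges over precisely the same set of worlds at $s$ and at $w$, and the selection set $\{u : u \simB s,\, u \models \varphi\} = \{u : u \simB w,\, u \models \varphi\}$ used to interpret $x_B^\varphi$ is also identical -- so either both states assign the same unique value to $x_B^\varphi$, or neither does and both default to $\uparrow_\bD$. Note that no inductive hypothesis on the embedded $\varphi$ or $x$ is actually required here: the evaluation depends only on the $\simB$-class of the state, which is exactly why the definition of $\mathcal{A}$ for these constructs is just $B$.

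The hardest case is conditional common distributed knowledge $C_{\gA}^\theta \varphi$, where I read the extended location as $\bigcup_{B \in \gA} B \cup \mathcal{A}(\varphi)$, all contained in $A$; in particular $s \simB w$ holds for every $B \in \gA$. Given any valid chain $s = s_0 \sim_{B^1} s_1 \sim_{B^2} \cdots \sim_{B^n} s_n$ of positive length from $s$ (with $s_1, \ldots, s_n \models \theta$ and all $B^k \in \gA$), transitivity of $\sim_{B^1}$ combined with $w \sim_{B^1} s$ produces an equally valid chain $w \sim_{B^1} s_1 \sim_{B^2} \cdots \sim_{B^n} s_n$ with the same intermediate worlds, the same endpoint, and the same $\theta$-conditions; the converse shift is symmetric. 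The sets of endpoints of positive-length valid chains from $s$ and from $w$ therefore coincide, and the only remaining discrepancy is the empty-length chain, whose endpoint is $s$ respectively $w$; here the inductive hypothesis on $\varphi$ closes the gap. The main obstacle is precisely this chain-shifting bookkeeping for $C$, together with the routine verification that the extended-location clause for $C$ is formulated strongly enough to feed the inductive hypothesis on the embedded formula $\varphi$.
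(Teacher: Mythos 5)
The paper states Proposition~\ref{Preservation} without giving a proof, so there is no official argument to compare against; judged on its own, your simultaneous structural induction is correct and is the expected argument. Your key observations are right: the $K_B\varphi$ and $x_B^\varphi$ cases need no inductive hypothesis because their semantics depends only on the $\sim_B$-class of the state, which $s$ and $w$ share once $B\subseteq A$ (by group monotonicity of $\simA$), and your chain-shifting argument for $C_{\gA}^\theta\varphi$ correctly exploits the fact that $\theta$ is evaluated only at the states $s_k$ with $k\geq 1$, so no constraint on $\mathcal{A}(\theta)$ is needed. The only point worth flagging is that the paper's definition of $\mathcal{A}$ omits the clauses for $x|_\varphi y$ and $C_{\gA}^\theta\varphi$; the readings you supply are the natural ones and make the induction go through.
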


\vspace{-4mm}

\section{An Example: The Numbers' Game Puzzle}


There are two brothers, Alex and Bob, their sister Daniela, as well as an evil intruder Eve.  Each of Alex, Bob and Daniela is given a sheet of paper on which there is a secret natural number $n_a, n_b, n_d \in \{0, 1, 2, \ldots\}$ (Alex receives $n_a$, etc), so that each of them can see only his/her number. It is common knowledge that Daniela's number is the absolute difference of the other two numbers ($n_d= |n_a-n_b|$).


The following scenario ensues (in which all announcements are public and it is common knowledge that nobody lies):


Daniela shows to Alex her sheet of paper (containing the number $n_d$). It is common knowledge that this is happening, but nobody else can see the number.

Then Daniela asks Alex ``Do you know Bob's number?",  and Alex answers ``I don't know".

Next, Daniela also shows to Bob her sheet of paper (with her number $n_d$). Again, it is common knowledge that this is happening, but nobody else can see the number.

Then Daniela asks Bob ``Do you know Alex's number?" and Bob answers ``I don't know".

Finally, Alex says ``\emph{Now} I know Bob's number".

\medskip

\par\noindent\textbf{Exercise:}
\begin{enumerate}
\item Show that, at the end of the above scenario, it is common knowledge among all agents (including Eve) that $0< n_d\leq n_a < 2 \cdot n_d \leq n_b$.

\item Show that, at the end of the above scenario, all three numbers are common knowledge among Alice, Bob and Daniela, but it is also common knowledge that Eve doesn't know any of the numbers.
 \item Suppose that, after the above scenario, Daniela announces that her number is $n_d=1$. Show that, after that, it is common knowledge (among \emph{all} agents) that the numbers are $n_a=1$, $n_b=2$, $n_c=1$.
\end{enumerate}

\medskip\par\noindent
{\bf Modelling the Numbers' Game Example.}
We take $D=\mathbb{N}$ (the set of natural numbers) and work with four agents ($a,b,d,e$) and three basic local variables $n_a$, $n_b$, $n_d$, each in possession of the corresponding agent. We use addition $x +y $ as our only function symbol, and equality $x=y$, inequality $x\leq y$ and strict inequality $x<y$ as our predicate symbols (with their standard interpretations on $\mathbb{N}$. The set of states in the initial model $\bS^0$ is
$S^0:=\{s=(s_a,s_b, s_d)\in \mathbb{N}\times \mathbb{N} \times \mathbb{N}: s_a=s_b+s_d \mbox{ or } s_b=s_a+s_d\}$.
The epistemic equivalence relations $\sim_i \subseteq S^0\times S^0$ are given by putting $s\sim_i w$ iff $s_i =w_i,$ for each $i\in \{a, b,d\}$, while $\sim_e= S^0\times S^0$ is the universal relation on $S^0$. The value map is the obvious one:
$s(n_i):=s_i$, for $i\in \{a, b, d\}$.

Our scenario starts with the semi-public data-sharing event $!(a:d)$ (in which Daniela shows her sheet to Alex): this changes our model to an updated model $\bS^1$, obtained by resetting Alex's relation to the intersection $\sim_a' :=\sim_a \cap \sim_d$ of his old relation with Daniela's (and leaving everything else the same). The next step is the announcement $!(\neg K_a n_b)$ (in which Alex publicly announces that he doesn't know Bob's number): this deletes all states having $s_d=0$ or $s_a<s_d$ (since these states satisfy $K_a n_b$ in $\bS^1$), leaving everything else the same. So in the resulting model $\bS^2$, we are left only states $s=(s_a, s_b, s_d)$ with $0< s_d \leq s_a$ (and $|s_a - s_b|=s_d$). In the next round, we have another semi-public data-sharing event $!(b:d)$, which resets Bob' relation to the intersection $\sim_b' := \sim_b\cap \sim_d$, and leaving everything else the same in the resulting model $\bS^3$. After this, the public announcement $!(\neg K_b n_a)$ eliminates all states having $0 \leq s_b< 2 \cdot s_d$ (since these states satisfy $K_b n_a$ in $\bS^3$). So in the updated model $\bS^4$ we are left only with states with $0< s_d \leq s_a$ and $2\cdot s_d \leq s_b$. Finally, the public announcement $!(K_a n_b)$ eliminates all states with $2\cdot s_d \leq s_a$ (since those satisfy $\neg K_a n_b$ in $\bS^4$). Hence, the final model $\bS^5$ consists only of states satisfying $0< s_d \leq s_a< 2\cdot s_d \leq s_b$ (and $|s_a-s_b|=s_d$).  It is obvious that the formula $C_{\{a, b, d, e\}} (0< n_d\leq n_a < 2 \cdot n_d \leq n_b)$ is true in all the states of $\bS^5$, which takes care of item 1 in our Exercise.

For item 2, it is easy to see that the formula $C_{\{a,b,e\}} \overline{n}$ holds in all states of $\bS^5$, where $\overline{n}=(n_a, n_b, n_e)$ is the tuple consisting of our three local variables.
Finally, to deal with item 3, we perform on $\bS^5$ another public announcement $! (n_d=1)$, which deletes all states except for $s=(1,2,1)$.

\vspace{-4mm}

\section{Axiomatizations and Main Results}

\vspace{-2mm}

The proof system $\mathbf{DLKV}$ for the full dynamic logic $DLKV$ is in Table \ref{tb0}, where we used the abbreviations
$K_A^\theta\varphi:= K_A(\theta\to \varphi)$, $K_a\varphi:=K_{\{a\}}\varphi$,
$K_A^\varphi x := K_A^\varphi (x=x_A^\varphi)$, and $K_A x := K_A^\top x$. The proof system $\mathbf{LKV}$ is obtained by taking only the axioms and rules in groups (I)-(V) (restricted to the static fragment $LKV$).

\smallskip

\begin{Th}\label{CompLKV}
The proof system $\mathbf{LKV}$
is sound and complete for the static fragment $LKV$, and its logic $LKV$ is decidable.
\end{Th}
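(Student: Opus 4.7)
The plan is to handle soundness by routine verification and to prove completeness by a two-stage canonical model construction: first, build a canonical \emph{pseudo-model} in which each group-relation $\sim_A$ is primitive (subject only to $A\subseteq B \Rightarrow \sim_B\subseteq\sim_A$ plus the equivalence axioms); then unravel into a genuine epistemic data model where $\sim_A=\bigcap_{a\in A}\sim_a$. This is a standard strategy for distributed knowledge going back to Fagin-Halpern-Moses-Vardi, extended here to accommodate the novel term language and the conditional common distributed knowledge operator. Soundness of $\mathbf{LKV}$ is checked axiom by axiom; the only non-obvious cases are the characterizing axioms for $x_A^\varphi$ (expressing that the term denotes the unique $\sim_A$-invariant value of $x$ over $\varphi$-worlds when such a value exists, and $\uparrow$ otherwise) and the fixed-point axiom together with the induction rule governing $C_{\mathfrak{A}}^\theta$.

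For completeness, I take maximally consistent sets of $LKV$-formulas as states of the pseudo-model. The first-order domain $\mathbf{D}$ is built as a quotient of the set of closed terms by an equivalence generated by the equality atoms appearing in MCSs, so that a term $x$ is interpreted at a state $s$ by its equivalence class. A truth lemma is then proved by mutual induction on formulas and terms. The novel cases are: (i) the term $x_A^\varphi$, where an existence lemma — built from the defining axioms for hypothetical-value terms plus the $K_A$-axioms — forces the desired uniqueness/collapse behaviour; and (ii) the modality $C_{\mathfrak{A}}^\theta\varphi$, where a standard existence lemma using the fixed-point axiom and induction rule produces a witnessing finite chain $s=s_0\sim_{A_1}s_1\sim_{A_2}\cdots\sim_{A_n}s_n$ of $\theta$-worlds with $A_i\in\mathfrak{A}$ and $\neg\varphi\in s_n$ whenever $\neg C_{\mathfrak{A}}^\theta\varphi\in s$.

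Once the canonical pseudo-model is built, the unraveling step produces a tree-like genuine data model in which each pseudo-state is duplicated along $\sim_A$-paths so that the semantic intersection $\bigcap_{a\in A}\sim_a$ collapses back onto the $\sim_A$ of the pseudo-model. Since Proposition~\ref{Preservation} guarantees the $\mathcal{A}(\cdot)$-locality of truth and of term values, the unraveling preserves satisfaction of all $LKV$-formulas and the denotations of all terms. For decidability, I close the input formula $\varphi_0$ under subformulas and subterms, together with all auxiliary formulas $K_A^\theta(x=x_A^\theta)$ for the hypothetical-value terms appearing in $\varphi_0$, yielding a finite Fischer-Ladner-style closure $\Sigma$. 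A filtration of the canonical pseudo-model through $\Sigma$ yields a finite pseudo-model; a bounded unraveling then gives a finite genuine model, establishing the finite model property and hence decidability.

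The main obstacle I expect is the coordination between the term clauses of the truth lemma and the pseudo-to-real-model unraveling: the interpretation of $x_A^\varphi$ must reflect the uniqueness dictated by the $K_A^\varphi(x=x_A^\varphi)$ atoms in each MCS while gracefully collapsing to $\uparrow$ in the non-unique case, and this uniqueness must survive the duplication of worlds performed by unraveling. Related subtleties — equalities across $\sim_A$-classes, preservation of $\mathcal{A}$-locality of terms, and the proper treatment of conditional equalities via the $x|_\varphi y$ constructor — all feed into the single task of showing that the quotient first-order domain $\mathbf{D}$ remains well-defined throughout both construction stages. This is where the proof goes beyond the Wang-Fan treatment of single-agent ``knowing what'' and beyond the classical distributed-knowledge completeness proof, and where the delicate new bookkeeping specific to $LKV$ resides.
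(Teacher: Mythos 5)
Your overall architecture (maximally consistent sets as states, a quotient first-order domain, a simultaneous truth lemma for formulas and terms, and an unravelling into a tree so that the primitive group relations become genuine intersections of individual ones) matches the paper's strategy in spirit. But there is a genuine gap in the order of your two stages. You build the canonical pseudo-model over maximally consistent sets of the \emph{full} language and claim that ``a standard existence lemma using the fixed-point axiom and induction rule produces a witnessing finite chain'' for $\neg C_{\mathfrak{A}}^\theta\varphi$, relegating filtration to the decidability part. This existence lemma is false in the infinite canonical model: like common knowledge and the PDL star, $C_{\mathfrak{A}}^\theta$ is non-compact, so $\neg C_{\mathfrak{A}}^\theta\varphi$ can sit in an MCS together with all its finite-chain approximants while no actual chain of MCSs witnesses $\neg\varphi$. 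The proof of the existence lemma requires forming the disjunction $\eta:=\bigvee\{\widehat{\Gamma}:\Gamma \mbox{ reachable by a $\theta$-chain}\}$ and applying the Induction axiom to it, and $\eta$ is a formula only when there are finitely many reachable theories, each a finite conjunction. Hence the finite Fischer--Ladner-style closure must come \emph{first}: the paper works throughout with maximally consistent subsets of a finite closure $\Sigma(\varphi_0)$ (its ``$\Sigma$-theories''/``types'') and proves the $C$-existence lemma there. A filtration performed after the fact cannot repair a truth lemma that already failed at stage one.

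A second, smaller issue: the domain must be a quotient of \emph{(state, term) pairs}, not of terms. An equality $x=y$ may hold at one state and fail at another, so quotienting the set of closed terms ``by the equality atoms appearing in MCSs'' is either inconsistent (if you pool equalities from all states) or underspecified (if you do not say which state's equalities are used). The paper's construction takes $D:=(H\times Var_\Sigma)/\approx$, where $\approx$ is generated by within-state equality together with a cross-state identification $(h,x)\to(h',x')$ requiring a mediating term $y$ with $\mathcal{A}(y)\subseteq A$ along an $A$-labelled tree edge; this is exactly the bookkeeping needed both to make the interpretation of $x_A^\varphi$ well-defined and to \emph{witness non-uniqueness} (via a dedicated closure condition forcing $\langle K_A\rangle(\varphi\wedge x\neq y)$ into types where $x_A^\varphi$ is undefined although $\varphi\wedge x\!\!\downarrow$ is epistemically possible). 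You correctly flag this as the main obstacle, but the proposal does not supply the mechanism that resolves it.
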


Using this result, we can establish completeness and decidability for the full dynamic logic $DLKV$:

\begin{Th}\label{CompDLKV}
The proof system $\mathbf{DLKV}$ in Table \ref{tb0} is sound and complete for the dynamic logic $DLKV$. Moreover, the logic $DLKV$ is provably co-expressive with the static logic $LKV$, and thus it is decidable.
\end{Th}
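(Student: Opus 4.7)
The plan is to follow the standard reduction methodology for dynamic epistemic logics: show that every $DLKV$ formula $\varphi$ is provably equivalent in $\mathbf{DLKV}$ to a formula $\varphi^\ast$ of the static fragment $LKV$, and then derive both completeness and decidability from Theorem \ref{CompLKV}. Soundness is a routine inductive verification that each axiom of Table \ref{tb0} is valid under the semantics of Section 2, in particular exploiting the definition of the update map $\bM \mapsto e(\bM)$ and the fact that $\sim^e_a$ is built from $\sim_{e(a)}$.

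The heart of the argument is a family of reduction laws pushing each dynamic operator one layer inward, defined simultaneously by mutual recursion on the two syntactic categories. On the term side I expect the reductions $e(c)=c$, $e(v)=post_e(v)$, $e(F(x_1,\ldots,x_n))=F(e(x_1),\ldots,e(x_n))$, $e(x|_\varphi y)=e(x)|_{[e]\varphi}e(y)$, and a more delicate rule for $e(x_A^\varphi)$ that uses the lifted access group $e(A)$ together with a reformulated hypothesis. On the formula side one has the standard Plaza-style clauses for $[e]P\overline{x}$, $[e]\neg\varphi$, $[e](\varphi\wedge\psi)$, together with the epistemic clause $[e]K_A\varphi \leftrightarrow (pre_e \to K_{e(A)}(pre_e\to[e]\varphi))$. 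The critical reduction is the one for $[e]C_\gA^\theta\varphi$: this is precisely why the paper introduces the conditional common distributed knowledge operator in the first place, since the rewrite must target a $C_{e[\gA]}^{\psi}\chi$ expression where both the hypothesis and the body are pre-composed with $e$, yielding essentially $pre_e \to C_{e[\gA]}^{pre_e \wedge [e]\theta}[e]\varphi$ (possibly with minor syntactic adjustments).

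To conclude co-expressiveness, I would define a well-founded complexity measure on $DLKV$ expressions, for instance a lexicographic pair whose principal component counts nested dynamic occurrences weighted by the structural depth of the surrounding subexpression, and whose secondary component is ordinary term/formula complexity; each reduction axiom is then verified to strictly decrease this measure. Iterated rewriting terminates in a static expression $\varphi^\ast \in LKV$, and since every rewrite step is a provable equivalence, $\mathbf{DLKV} \vdash \varphi \leftrightarrow \varphi^\ast$. Completeness follows immediately: if $\models \varphi$ then $\models \varphi^\ast$, so by Theorem \ref{CompLKV} we get $\mathbf{LKV} \vdash \varphi^\ast$, hence $\mathbf{DLKV} \vdash \varphi^\ast$ and therefore $\mathbf{DLKV} \vdash \varphi$. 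Decidability is then obtained by translating any input formula to its static normal form and invoking the decision procedure for $LKV$.

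The main obstacle will be calibrating the termination measure so that the two mutually recursive rewriting systems (events over formulas and events over terms) interact cleanly. The substitution $\sigma$ inside an event introduces fresh occurrences of the term $post_e(v)$, which may itself be structurally complex, and the reduction for $e(x_A^\varphi)$ enlarges the conditional hypothesis; a naive count of dynamic symbols does not decrease. The solution is classical in spirit (compare the reduction for public announcement logic with common knowledge), but here it must be carried out in a setting where the access-map lifting $e[\gA]$ and the conditional common-knowledge rewrite are entangled, and where term-level dynamics can in principle reintroduce events inside static subformulas. Verifying that a suitable lexicographic ordering handles every case is the only technically substantial step; the remaining verifications are bookkeeping.
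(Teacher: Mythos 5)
Your overall route is the same as the paper's: verify soundness directly, show that every $DLKV$ expression is provably reducible to a static one via the reduction axioms of groups (VI)--(VII), and then derive completeness and decidability from Theorem \ref{CompLKV}. Your reduction clauses are essentially the paper's axioms (your $[e]C_{\gA}^\theta\varphi$ clause matches the $C_{\mathfrak{A}}$-Update axiom, since $\langle e\rangle\theta$ unfolds to $pre_e\wedge[e]\theta$ by Lemma \ref{conj-red}; your unguarded $e(c)=c$ and the use of $[e]\varphi$ rather than $\langle e\rangle\varphi$ in the disjunctive-term clause are harmless under the $pre_e$ guard). Where you genuinely diverge is in how termination is organized, and the ``main obstacle'' you flag at the end is precisely the point the paper's decomposition is designed to dissolve. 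Rather than hunting for a global lexicographic measure on a single rewriting system, the paper splits the argument into three pieces: a Replacement of Equivalents lemma (Lemma \ref{re}); a \emph{one-step reduction} lemma (Lemma \ref{one-step reduction}) showing that if the event $e$ is \emph{already reducible} (all its preconditions and postconditions provably equivalent to static expressions), then $[e]\theta$ and $e(x)$ are reducible for every \emph{static} $\theta$ and $x$, by induction on the static expression with $e$ held fixed; and a final induction on the sub-expression order (Lemma \ref{reduction}), in which an event counts as strictly more complex than its preconditions and postconditions, and $[e]\varphi$ as more complex than $e$. Because subexpressions (including the event's own components) are reduced first and Replacement of Equivalents is applied before the one-step lemma is invoked, the reintroduced material ($post_e(v)$, the enlarged hypothesis in $e(x_A^\varphi)$, the lifted $e[\gA]$) is always already static, so no delicate weighting is ever needed. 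I would recommend adopting that two-stage induction: it turns the step you describe as ``the only technically substantial one'' into routine bookkeeping, whereas the lexicographic measure you sketch is left unspecified and is exactly where a naive count fails.
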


\smallskip

Soundness of $\mathbf{LKV}$ and $\mathbf{DLKV}$ are easy verifications, so we skip the details. The \emph{proofs of completenes and decidability} are intricate, and they form the content of the next two sections.
\begin{table}[h!]
\begin{tabular}{ll}
\hline
\textbf{(I)} & \textbf{Axioms and rules of classical propositional logic}
\\
 \textbf{(II)} & \textbf{Axioms for equality}:
\end{tabular}
\\
\begin{tabular}{ll}
(Reflexivity) & $x=x$
\ \\
(Indiscernability) & $\ux=\uy \to (P\ux \uz \leftrightarrow P\uy\uz)$
\ \\
(Functionality) & $\ux=\uy \to F(\ux)=F(\uy)$
\ \\
(Definition by Cases) & $(\varphi \, \to\, x|_\varphi y= x) \wedge (\neg\varphi \, \to \, x|_\varphi y =y)$
\end{tabular}
\\
\begin{tabular}{ll}
\textbf{(III)} & \textbf{Axioms and rules for (distributed) knowledge}:
\end{tabular}
\\
\begin{tabular}{ll}
 (Necessitation) &  From $\varphi$, infer $K_A \varphi$
 \ \\
(Distribution) & $K_A (\varphi\to \psi)\, \to \, (K_A \varphi\to K_A\psi)$
\ \\
(Veracity) & $K_A\varphi \, \to \, \varphi$
\ \\
(Positive Introspection) & $K_A\varphi \, \to \, K_A K_A\varphi$
\ \\
(Negative Introspection) & $\neg K_A\varphi \, \to \, K_A\neg K_A\varphi$
\ \\
(Group-Monotonicity) & $K_A \, \varphi \to \, K_B \varphi$, whenever $A\subseteq B$
\end{tabular}
\\
\begin{tabular}{ll}
\textbf{(IV)} & \textbf{Axioms and rules for conditional common distributed knowledge}:
\end{tabular}
\\
\begin{tabular}{ll}
 (C-Necessitation) &  From $\varphi$, infer $C_{\mathfrak A}^\theta \varphi$
 \ \\
(C-Distribution) & $C_{\mathfrak A}^\theta (\varphi\to \psi)\, \to \, (C_{\mathfrak A}^\theta \varphi\to C_{\mathfrak A}^\theta\psi)$
\ \\
(Fixed Point) & $C_{\mathfrak A}^\theta\varphi \,\, \to \,\, \left(\varphi \wedge \bigwedge_{A\in {\mathfrak A}} K_A^\theta C_{\mathfrak A}^\theta\varphi\right)$
\ \\
(Induction) & $C_{\mathfrak A}^\theta (\varphi \to  \bigwedge_{A\in {\mathfrak A}} K_A^\theta\varphi) \,\, \to \,\, (\varphi\to C_{\mathfrak A}^\theta\varphi)$
\end{tabular}
\\
\begin{tabular}{ll}
\textbf{(V)} & \textbf{Axioms for conditional knowledge of values}
\end{tabular}
\\
\begin{tabular}{ll}
(Non-vacuous Knowledge of Values) & $x_A^\varphi\!\!\downarrow \, \, \to \, \, \left(\, K_A^\varphi x \wedge \langle K_A\rangle \varphi \, \right)$
\ \\
(Knowledge of Constants \& Local Variables) & $K_a c \wedge K_a (v_a)$
\ \\
(Knowledge of Hypothetical Values) & $K_A x_A^\varphi$
\ \\
(Knowledge of Predicates) & $K_A\overline{x} \, \to \, (P \overline{x}  \to K_A P\overline{x})$
\ \\
(Knowledge of Functions) & $K_A^\varphi \overline{x} \, \to \, K_A^\varphi F(\overline{x})$
\ \\
(Known Equality) & $K_A^\theta (x=y) \, \to\, (K_A^\theta x \to K_A^\theta y)$
\ \\
(Anti-Monotonicity) & $K_A(\varphi \to \theta) \, \to \, (K_A^\theta x \to K_A^\varphi x)$
\ \\
\hline
\end{tabular}
\\
\begin{tabular}{ll}
\textbf{(VI)} & \textbf{Dynamic reduction axioms for propositional formulas}
\end{tabular}
\\
\begin{tabular}{ll}
($[e]$-Necessitation) &  From $\varphi$, infer $[e]\varphi$
 \ \\
($[e]$-Distribution) & $[e] (\varphi\to \psi)\, \to \, ([e]\varphi\to [e]\psi)$
\ \\
(Atomic Change) & $[e] Px_1\ldots x_n \, \leftrightarrow \, \left( pre_e \to P e(x_1)\ldots e(x_n) \right)$
\ \\
(Partial Functionality)  & $[e] \neg\varphi \, \leftrightarrow \, \left( pre_e \to \neg [e] \varphi \right)$
\ \\
(Knowledge Update) & $[e] K_A\varphi \, \leftrightarrow \, \left( pre_e \to K_{e(A)} [e] \varphi \right)$
\ \\
($C_{\mathfrak{A}}$-Update) & $[e] C_{\mathfrak{A}}^\theta \varphi \, \leftrightarrow \, \left( pre_e \to C_{e[\mathfrak{A}]}^{\langle e\rangle \theta} [e] \varphi \right)$
\end{tabular}
\\
\begin{tabular}{ll}
\textbf{(VII)} & \textbf{Dynamic reduction axioms for terms}
\end{tabular}
\\
\begin{tabular}{ll}
(Survival of Value) & $e(x)\!\!\downarrow \, \, \to \, \, pre_e$ \ \\
(Preservation of Constants) & $pre_e \, \, \to \, \, e(c) \, = \, c$
\ \\
(Change of Basic Values) & $pre_e  \, \, \to \, \, e(v)  \, = \, post_e(v)$
\ \\
(Change of Disjunctive Terms) & $pre_e \, \, \to \, \, e(x|_\varphi y)  \, = \, e(x)|_{\langle e\rangle \varphi} e(y)$
\ \\
(Change of Functional Terms) & $pre_e \, \, \to \, \, e(F(x_1, \ldots, x_n)) \, = \, F(e(x_1), \ldots, e(x_n))$
\ \\
(Change of Hypothetical Values) & $pre_e  \, \, \to \, \, e(x_A^\varphi)  \, = \, e(x)_{e(A)}^{\langle e \rangle \varphi}$
\ \\
\hline
\end{tabular}
\caption{The Proof System $\mathbf{DLKV}$. The Proof System $\mathbf{LKV}$ consists only of the axioms and rules in groups (I)-(V) (restricted to the static fragment $LKV$).
}\label{tb0}
\end{table}

\newpage

To illustrate the power of our axiomatic system, we note
a number of interesting theorems.

\begin{pr}\label{theorems}
The following are provable in the system $\mathbf{LKV}$:
\begin{itemize}
\item (Strong Introspection) $\varphi \to K_A\varphi$, whenever $\mathcal{A}(\varphi)\subseteq A$
\item (Term Introspection)  $K_A x$, whenever $\mathcal{A}(x)\subseteq A$
\item  (Explicit Value Introspection) $\langle K_A\rangle \varphi \to x_A^\varphi=x$, whenever $\mathcal{A}(x)\subseteq A$
\item (Conditional Knowledge) $K_A^\varphi \psi  \leftrightarrow K_A (\varphi\to \psi)$
\item (Non-Vacuous Knowledge) $\langle K_A\rangle \varphi \to \left( K_A^\varphi x  \leftrightarrow  x_A^\varphi\!\!\downarrow \right)$
\item (True Conditional Value) $(x_A^\varphi\!\!\downarrow \wedge \varphi)\to x_A^\varphi=x)$
\item $K_A(\varphi \to x=y) \to x_A^\varphi =y_A^\varphi$
\item $\left( K_A (\varphi \to x=y) \wedge \varphi\wedge x\!\!\downarrow \right)\, \to \, x_A^\varphi\!\!\downarrow$,  whenever $\mathcal{A}(y)\subseteq A$
\end{itemize}
\end{pr}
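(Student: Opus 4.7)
I would prove these items in the listed order, treating items 1 and 2 by a single mutual induction and then using them as lemmas for items 3, 5, 6, 7, 8. Item 4 (Conditional Knowledge) requires no argument, being the unfolded abbreviation $K_A^\theta \psi := K_A(\theta \to \psi)$.

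For Term Introspection (item 2) I would induct on the term: constants $c$ follow from (Knowledge of Constants) $K_a c$ (for some $a \in A$) combined with Group-Monotonicity, (Knowledge of Hypothetical Values) applied to $c_A^\top$, and (Known Equality) used to bridge the syntactically distinct terms $c_a^\top$ and $c_A^\top$; local variables $v_a$ with $a\in A$ are analogous; hypothetical-value terms $y_B^\varphi$ with $B\subseteq A$ are direct from (Knowledge of Hypothetical Values) plus Group-Monotonicity; functional and disjunctive terms follow from the induction hypothesis together with (Knowledge of Functions) and (Definition by Cases), the latter also invoking item 1 on the guard formula. For Strong Introspection (item 1), atomic formulas $P\ux$ reduce to (Knowledge of Predicates) applied to item 2 on each argument; Boolean cases are routine via (Distribution); $K_B\psi$ with $B\subseteq A$ uses (Positive Introspection) together with Group-Monotonicity, and dually (Negative Introspection) for $\neg K_B\psi$; the hardest case, $C_{\mathfrak{B}}^\theta\psi$ with $\bigcup \mathfrak{B}\subseteq A$, iterates (Fixed Point) under Group-Monotonicity.

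For Explicit Value Introspection (item 3), item 2 gives $K_A x$, and (Anti-Monotonicity) applied to the trivially provable $K_A(\varphi \to \top)$ yields $K_A^\varphi x$, i.e.\ $K_A(\varphi \to x = x_A^\varphi)$; combined with $\langle K_A\rangle \varphi$ this gives a $\simA$-witness at which $x = x_A^\varphi$. Since both $x$ and $x_A^\varphi$ have extended location $\subseteq A$, item 1 applied to the equality $x = x_A^\varphi$ propagates it from the witness back to the current state. Item 5 follows from (Non-vacuous Knowledge of Values) for the right-to-left direction, and from the same witness-and-transport argument (using (Knowledge of Hypothetical Values)) for the other direction. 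Item 6 is immediate by combining (Non-vacuous Knowledge of Values), veracity of $K_A(\varphi\to x=x_A^\varphi)$, and the hypothesis $\varphi$. For item 7, I would case-split on whether $x_A^\varphi$ or $y_A^\varphi$ is defined: if at least one is, (Non-vacuous Knowledge of Values) plus (Known Equality) yield $K_A(\varphi \to x_A^\varphi = y_A^\varphi)$, which by item 1 (the equality being $A$-introspective) together with a $\langle K_A\rangle\varphi$-witness propagates to $x_A^\varphi = y_A^\varphi$ at the current state; if both are undefined, both equal $\uparrow$ and the equality is trivial. Item 8 finally combines $\varphi$ with veracity of $K_A(\varphi \to x=y)$ to obtain $x=y$ (and hence $y\!\!\downarrow$) at the current state; item 3 applied to $y$ (using $\mathcal{A}(y)\subseteq A$) gives $y_A^\varphi = y$, so $y_A^\varphi\!\!\downarrow$; and item 7 then transfers definedness to $x_A^\varphi$.

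The main obstacle is the constants case of item 2: the axiom $K_a c$ unfolds to $K_a(c = c_a^\top)$, but the target $K_A c$ unfolds to $K_A(c = c_A^\top)$, which involves the distinct term $c_A^\top$; bridging these requires a careful chain of Group-Monotonicity, (Knowledge of Hypothetical Values), and (Known Equality). A secondary obstacle is the $C_{\mathfrak{B}}^\theta$ case of item 1, where introspection must be threaded through indistinguishability chains of arbitrary length using (Fixed Point) and (Induction), and item 7, whose case-split on definedness is essential to handle the edge case where both hypothetical values collapse to $\uparrow$.
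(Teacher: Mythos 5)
The paper states this proposition without proof, so there is no official argument to measure yours against; I can only assess the plan on its own terms. Its overall architecture is reasonable: item 4 is indeed just the unfolded abbreviation; a mutual induction on items 1--2 is the right skeleton (Knowledge of Predicates reduces atomic Strong Introspection to Term Introspection on the arguments, and the guard of $x|_\varphi y$ sends you back to item 1); and your reductions of items 3, 6, 7 and 8 to items 1--2 plus the axioms do go through, once the semantic-sounding ``witness-and-transport'' steps are recast syntactically as the derived principle $\langle K_A\rangle\psi\to\psi$ for $\mathcal{A}(\psi)\subseteq A$ (the contrapositive of Strong Introspection), which is legitimate.

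The genuine gap is exactly where you locate ``the main obstacle'', and your sketch does not close it. To lift $K_a c$ to $K_A c$ for $a\in A$ you must pass from $K_A(c=c_a^\top)$ --- which Group-Monotonicity does give --- to $K_A(c=c_A^\top)$. Your plan is to bridge via Known Equality, which first requires establishing $K_A\,c_a^\top$, i.e.\ $K_A\bigl(c_a^\top=(c_a^\top)_A^\top\bigr)$. But Knowledge of Hypothetical Values only yields $K_B\,y_B^\varphi$ when the two group subscripts coincide, so it gives $K_a\,c_a^\top$, not $K_A\,c_a^\top$: you now face the same single-agent-to-group lifting problem for the term $c_a^\top$ that you started with for $c$, one nesting level deeper, and the regress does not terminate. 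The same defect infects your claim that the case $y_B^\varphi$ with $B\subsetneq A$ is ``direct from Knowledge of Hypothetical Values plus Group-Monotonicity'': Group-Monotonicity turns $K_B\bigl(y_B^\varphi=(y_B^\varphi)_B^\top\bigr)$ into $K_A\bigl(y_B^\varphi=(y_B^\varphi)_B^\top\bigr)$, whereas the target $K_A\,y_B^\varphi$ unfolds to $K_A\bigl(y_B^\varphi=(y_B^\varphi)_A^\top\bigr)$ --- a different formula, since the inner definite description still carries the subscript $B$. What is missing is an actual derivation of a group-lifting principle $K_B x\to K_A x$ for $B\subseteq A$ (equivalently, of $x_B^\top=x_A^\top$ under the hypothesis $K_B x$); none of the listed axioms delivers this directly, and until you exhibit such a derivation the base cases of item 2, and with them the whole mutual induction, remain open.
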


\vspace{-4mm}

\section{Completeness and Decidability for the Static Logic $LKV$}\label{A}
\vspace{-2mm}
Throughout this section, \emph{we fix a formula $\varphi_0\in Fml$}. We'll prove Theorem \ref{CompLKV} by
the method of \emph{quasi-models}.
To obtain an appropriate analogue of Fischer-Ladner closure, we need an auxiliary notion:

\smallskip
\par\noindent\textbf{Restricted Vocabulary and Restricted Terms} For any finite set of formulas $\Sigma\subseteq Fml$, the \emph{$\Sigma$-restricted set of terms} $Var_\Sigma$ and the \emph{$\Sigma$-restricted vocabulary}
$\mathcal{V}_\Sigma :=  =(V_\Sigma, \Agents_\Sigma, ag_\Sigma, Pred_\Sigma, Funct_\Sigma, ar_\Sigma)$ are formed as follows: $Var_\Sigma$ is \emph{the set of all subterms (of any term) occurring in formulas of $\Sigma$};
$V_\Sigma:= V\cap Var_\Sigma$ is \emph{the set of all basic variables occurring (as subterms of any term) in formulas of $\Sigma$}; $\Agents_\Sigma$ is the set of agents occurring (inside terms or modalities in formulas) in $\Sigma$ or as locations $ag(v)$ of basic variables $v\in V_\Sigma$; $Pre_\Sigma$ is the set of predicates (occurring in formulas) in $\Sigma$; $Funct_\Sigma$ is the set of function symbols occurring in (terms of) $Var_\Sigma$; $ag_\Sigma$ is the restriction of $ag$ to $V_\Sigma$; and $ar_\Sigma$ is the restriction of $ar$ to $Pred_\Sigma\cup Funct_\Sigma$.

Clearly, if $\Sigma$ is finite, then $Var_\Sigma$ and all the sets in $\mathcal{V}_\Sigma$ are finite. But note that the $\Sigma$-restricted set of terms $Var_\Sigma$ is only \emph{a finite subset} of the (typically infinite) set $Var(\mathcal{V}_\Sigma)$ of terms of the language $LKV(\mathcal{V}_\Sigma)$.

\smallskip
\par\noindent\textbf{Closure} Given our fixed formula $\varphi_0$, \emph{the closure of $\varphi_0$} is a finite set $\Sigma=\Sigma(\varphi_0)\subseteq Fml$, defined as the smallest set of formulas satisfying the following properties: $\varphi_0\in \Sigma$; $\Sigma$ is closed under subformulas and single negations $\sim\varphi$; if $(K_A\varphi)\in \Sigma$ and $B\subseteq\Agents_\Sigma$, then $(K_B\varphi)\in \Sigma$; if $K_A(\varphi\to\psi)\in \Sigma$, then $(K_A\varphi)\in \Sigma$ (and so also $(K_B\varphi)\in \Sigma$;
if $(x|_\varphi y)\in Var_\Sigma$, then $\varphi\in \Sigma$;
if $(x_A^\varphi), y, z\in Var_\Sigma$, then $K_A(\varphi \to y=z) \in\Sigma$ (and so also
$(K_A^\varphi x), \langle K_A\rangle (\varphi\wedge y\neq z), \langle K_A\rangle \varphi, \varphi\in\Sigma$);
if $\ux=(x_1, \ldots, x_n)$ has all $x_1, \ldots, x_n\in Var_\Sigma$ and $P\in Pred_\Sigma$, then $P\ux, (\ux \!\!\downarrow)\in \Sigma$ (-and so in particular, if $x,y\in Var_\Sigma$
then $(x=y), x\!\!\downarrow, x\!\!\uparrow\in \Sigma$ and $\uparrow\in Var_\Sigma$).

\smallskip
\par\noindent\textbf{Types} Let $\Sigma$ be the closure of $\varphi_0$. A \emph{$\Sigma$-type} is a subset of $\Sigma$ with the following properties:
\begin{enumerate}
\item\label{neg} for every $\varphi\in\Sigma$: $(\sim\varphi)\in \Delta$ iff $\varphi\not\in\Delta$;
\item\label{conj} for every $(\varphi\wedge \psi)\in \Sigma$: $(\varphi\wedge \psi)\in \Delta$ iff $\varphi\in\Delta$ and $\psi\in \Delta$;
\item\label{eq1} for every $x\in Var_\Sigma$: $(x=x)\in\Delta$;
\item\label{eq4} for every  $P\ux\uz\in \Sigma$: if $(\ux=\uy), P\ux\uz\in \Delta$ then $P\uy\uz\in \Delta$;\footnote{This, together with clause \ref{eq1}, has the consequence that \emph{$\Delta$-equality is an equivalence relation on variables in $Var_\Sigma$}: $(x=y)\in \Delta$ implies $(y=x)\in \Delta$; and $(x=x'), (x'=x'')\in \Delta$ imply $(x=x'')\in \Delta$.}
\item\label{eq5} for $F(\ux), F(\uy)\in Var_\Sigma$: if  $(\ux=\uy)\in \Delta$, then $(F(\ux)=F(\uy))\in \Delta$;
\item\label{cases} for every $(x|_\varphi y)\in\Sigma$: $\varphi\in \Delta$ implies $(x|_\varphi y)\in\Delta$; and $\varphi\not\in\Delta$ implies $(x|_\varphi y=y)\in \Delta$;
\item\label{prop-veracity} if $(K_A\varphi)\in\Delta$ then $\varphi\in\Delta$;
\item\label{vac-know}  if $(x_A^\varphi\!\!\downarrow)\in \Delta$, then $K_A^\varphi x, \langle K_A\rangle\varphi\in\Delta$;
\item\label{veracity} if $\varphi, (x_A^\varphi\!\!\downarrow)\in\Delta$, then $(x=x_A^\varphi)\in\Delta$;
\item\label{hypo-eq} if $K_A(\varphi\to x=y)\in \Delta$ then $(x_X^\varphi=y_A^\varphi)\in \Delta$;
\item\label{undefined} $(c_A^\varphi=c)\in \Delta$ or $(c_A^\varphi=\uparrow)\in \Delta$, for all $c\in C \cap Var_\Sigma$;
\item\label{diff} for every $x_A^\varphi\in Var_\Sigma$ and every $y\in Var_\Sigma$ s.t. $\mathcal{A}(y)\subseteq A$: if $(x_A^\varphi\!\!\uparrow), \varphi, x\!\!\downarrow \in \Delta$, then $\langle K_A \rangle (\varphi\wedge x\neq y)\in \Delta$;
\item\label{common} if $(C_{\gA}^\theta\varphi)\in \Delta$ and $A\in \gA$, then $\varphi, (K_A^\theta C_{\gA}^\theta\varphi)\in\Delta$.
    \end{enumerate}

\par\noindent\textbf{Accessibility relations on types}
For types $\Delta, \Delta'$ and group $A\subseteq \Agents_\Sigma$, we put:
\vspace{-2mm}
$$\Delta \simA \Delta' \,\, \mbox{ iff } \,\, \varphi\in \Delta \Leftrightarrow \varphi\in\Delta' \, \mbox{ for all } \varphi\in\Sigma \mbox{ s.t. } \mathcal{A}(\varphi)\subseteq A.
\vspace{-2mm}$$
\begin{pr}\label{Access}
The relations $\simA$ are equivalence relations on types, satisfying Group Monotonicity:
\vspace{-2mm}
$$\Delta\simA \Delta' \mbox{ and } B\subseteq A \mbox{ imply } \Delta\simB \Delta'.\vspace{-2mm}$$
\end{pr}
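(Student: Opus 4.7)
My plan is to observe that Proposition \ref{Access} is essentially an unpacking of the definition: since $\Delta \simA \Delta'$ is defined as a biconditional $\varphi \in \Delta \Leftrightarrow \varphi \in \Delta'$ uniformly quantified over all $\varphi \in \Sigma$ with $\mathcal{A}(\varphi) \subseteq A$, both the equivalence properties and Group Monotonicity will fall out from elementary reasoning about $\Leftrightarrow$ and $\subseteq$. There is no real obstacle here; in contrast to the standard modal case (where the analogous fact uses the semantics of $K_A$ and often relies on a truth lemma), here we are working purely with the syntactic definition of $\simA$, so no induction on formula structure is required.

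For the three equivalence properties I would proceed directly. \emph{Reflexivity:} for any $\Sigma$-type $\Delta$ and any $\varphi \in \Sigma$ with $\mathcal{A}(\varphi) \subseteq A$, the biconditional $\varphi \in \Delta \Leftrightarrow \varphi \in \Delta$ is trivially true, so $\Delta \simA \Delta$. \emph{Symmetry:} if $\Delta \simA \Delta'$, then for every qualifying $\varphi$ we have $\varphi \in \Delta \Leftrightarrow \varphi \in \Delta'$; swapping the two sides of the biconditional (valid for $\Leftrightarrow$) yields $\varphi \in \Delta' \Leftrightarrow \varphi \in \Delta$, hence $\Delta' \simA \Delta$. \emph{Transitivity:} if $\Delta \simA \Delta'$ and $\Delta' \simA \Delta''$, then for each qualifying $\varphi$ we can chain the two biconditionals $\varphi \in \Delta \Leftrightarrow \varphi \in \Delta'$ and $\varphi \in \Delta' \Leftrightarrow \varphi \in \Delta''$ to obtain $\varphi \in \Delta \Leftrightarrow \varphi \in \Delta''$, giving $\Delta \simA \Delta''$.

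For Group Monotonicity, suppose $\Delta \simA \Delta'$ and $B \subseteq A$. Pick any $\varphi \in \Sigma$ with $\mathcal{A}(\varphi) \subseteq B$. Then by transitivity of $\subseteq$ we have $\mathcal{A}(\varphi) \subseteq A$, so $\varphi$ satisfies the defining condition used for $\Delta \simA \Delta'$; hence $\varphi \in \Delta \Leftrightarrow \varphi \in \Delta'$. Since this holds for every such $\varphi$, by definition $\Delta \simB \Delta'$, completing the proof. The only point worth flagging is that the result is a proof-theoretic analogue of the semantic fact $\simA = \bigcap_{a \in A}\sima$ specialised to types; the syntactic formulation makes this monotonicity automatic rather than something requiring a separate argument.
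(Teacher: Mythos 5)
Your proof is correct and takes exactly the route the paper does: the paper's own proof simply states that the proposition "follows directly from the definition of the relations $\simA$ on types," and your argument is the straightforward unpacking of that — reflexivity, symmetry, and transitivity of the biconditional, plus the observation that $\mathcal{A}(\varphi)\subseteq B\subseteq A$ makes monotonicity automatic.
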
 \vspace{-2mm}
\begin{proof} \vspace{-2mm}
This follows directly from the definition of relations $\simA$ on types.
\end{proof}

\vspace{-3mm}

\begin{pr}\label{Access2}
If $\Delta\simA \Delta'$ and $(K_A\varphi)\in \Delta$, then $\varphi\in \Delta'$.
\vspace{-2mm}
\end{pr}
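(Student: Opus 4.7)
The plan is to observe that this is essentially an immediate consequence of two facts that are already available: the extended location of the formula $K_A\varphi$ is exactly $A$, and types satisfy veracity (clause \ref{prop-veracity}).

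First I would unpack the hypotheses. Since $(K_A\varphi)\in\Delta\subseteq\Sigma$, the formula $K_A\varphi$ lies in the closure $\Sigma$. The definition of extended location gives $\mathcal{A}(K_A\varphi)=A$, so in particular $\mathcal{A}(K_A\varphi)\subseteq A$. Therefore $K_A\varphi$ is precisely the kind of formula that the relation $\simA$ is designed to preserve: by the defining biconditional of $\simA$, the assumption $\Delta\simA\Delta'$ yields $(K_A\varphi)\in\Delta'$.

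Now I would invoke clause \ref{prop-veracity} of the definition of a $\Sigma$-type, applied to $\Delta'$ rather than $\Delta$: from $(K_A\varphi)\in\Delta'$ we conclude $\varphi\in\Delta'$, which is what is required. (The auxiliary fact that $\varphi\in\Sigma$ whenever $K_A\varphi\in\Sigma$ follows from closure of $\Sigma$ under subformulas, so applying clause \ref{prop-veracity} is legitimate.)

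There is no real obstacle here; the proof is a two-step unfolding of definitions. The only subtlety worth flagging for the reader is why $K_A\varphi$ itself is transferred from $\Delta$ to $\Delta'$, namely that its extended location is $A$, so that $\simA$ — which is the identity on the fragment of $\Sigma$ whose extended location lies in $A$ — cannot distinguish the two types on this formula.
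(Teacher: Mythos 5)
Your proof is correct and follows exactly the paper's own argument: use $\mathcal{A}(K_A\varphi)=A$ together with the definition of $\simA$ on types to transfer $K_A\varphi$ from $\Delta$ to $\Delta'$, then apply the veracity clause on types to $\Delta'$. Nothing is missing.
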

\begin{proof}
\vspace{-1mm}
Since $\mathcal{A}(K_A\varphi)=A$ and $\Delta\simA \Delta'$, we use the definition of $\simA$ on types and the fact that
$(K_A\varphi)\in \Delta$ to infer that $(K_A\varphi)\in \Delta'$. This together with condition \ref{prop-veracity} on types, gives us the desired conclusion.
\end{proof}


\par\noindent\textbf{Hat notation}. For every type $\Delta$, we put
$\widehat{\Delta} \, \,\,\, :=\,\, \, \, \bigwedge\Delta$ for the \emph{conjunction of all formulas in $\Delta$}.

\begin{pr}\label{Access3}
Given types $\Delta$ and $\Lambda$, if $\widehat{\Delta}\wedge \langle K_A \rangle \widehat{\Lambda}$ is consistent, then $\Delta \simA \Lambda$.
\vspace{-1mm}
\end{pr}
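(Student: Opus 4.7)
The plan is to argue by contraposition: assume $\Delta \not\simA \Lambda$ and show that $\widehat{\Delta}\wedge \langle K_A \rangle \widehat{\Lambda}$ is inconsistent. By the definition of $\simA$ on types, failure of $\Delta\simA\Lambda$ must be witnessed by some $\varphi\in\Sigma$ with $\mathcal{A}(\varphi)\subseteq A$ that lies in one of the two types but not the other. Without loss of generality, $\varphi\in\Delta$ and $\varphi\notin\Lambda$; by clause \ref{neg} of the definition of types, this means $(\sim\varphi)\in\Lambda$. Observe also that $\mathcal{A}(\sim\varphi)=\mathcal{A}(\varphi)\subseteq A$, since the single-negation operation does not introduce any new occurrences of agents.

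The key lever is the \emph{Strong Introspection} theorem listed in Proposition \ref{theorems}: whenever $\mathcal{A}(\psi)\subseteq A$, we have $\vdash \psi\to K_A\psi$. Applying it to $\psi = \mathord\sim\varphi$ yields $\vdash (\sim\varphi)\to K_A(\sim\varphi)$. Since $(\sim\varphi)\in\Lambda$, propositional logic gives $\vdash \widehat{\Lambda}\to K_A(\sim\varphi)$, and by monotonicity of $\langle K_A\rangle$ (a standard consequence of Distribution and Necessitation) this lifts to $\vdash \langle K_A\rangle\widehat{\Lambda}\to \langle K_A\rangle K_A(\sim\varphi)$.

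Next I need the S5 validity $\vdash \langle K_A\rangle K_A\psi \to \psi$. This is derived inside group (III): from Veracity $K_A\psi\to\psi$, contraposition gives $\neg\psi\to \neg K_A\psi$; then Negative Introspection gives $\neg K_A\psi\to K_A\neg K_A\psi$, and composing and taking the contrapositive once more yields $\langle K_A\rangle K_A\psi\to\psi$. Specializing to $\psi=\mathord\sim\varphi$ and chaining with the previous implication produces $\vdash \langle K_A\rangle\widehat{\Lambda}\to (\sim\varphi)$. Since $\varphi\in\Delta$ gives $\vdash\widehat{\Delta}\to\varphi$, it follows that $\widehat{\Delta}\wedge\langle K_A\rangle\widehat{\Lambda}\vdash \varphi\wedge(\sim\varphi)$, i.e., the conjunction is inconsistent, as required.

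There is really no hard step here; the only points of care are (i) checking that the single-negation convention preserves $\mathcal{A}(\cdot)$ so that Strong Introspection applies to $\sim\varphi$, and (ii) the short S5 derivation of $\langle K_A\rangle K_A\psi\to\psi$ from Veracity and Negative Introspection. Everything else is routine propositional and normal-modal manipulation.
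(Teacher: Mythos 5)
Your proof is correct and follows essentially the same route as the paper: contraposition via a witness formula $\varphi\in\Sigma$ with $\mathcal{A}(\varphi)\subseteq A$ separating the two types, refuted by the Strong Introspection theorem of Proposition \ref{theorems}. The paper's version is marginally shorter because it applies Strong Introspection to $\varphi$ on the $\Delta$ side, so that consistency of $\widehat{\Delta}\wedge \langle K_A \rangle \widehat{\Lambda}$ directly yields the inconsistent $\varphi\wedge\neg K_A\varphi$, avoiding your detour through the S5 principle $\langle K_A\rangle K_A\psi\to\psi$.
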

\vspace{-1mm}
\begin{proof} \vspace{-2mm} Let $\Delta$ and $\Lambda$ be types as above, and suppose towards a contradiction that we have $\Delta \not\simA \Lambda$. Then there must exist $\varphi\in \Sigma$ s.t. $\mathcal{A}(\varphi)\subseteq A$ and $\varphi\in\Delta$, but $(\sim\!\varphi)\in\Lambda$. From this together with the assumption that $\widehat{\Delta}\wedge \langle K_A \rangle \widehat{\Lambda}$ is consistent, we infer that $\varphi\wedge \langle K_A \rangle \neg\varphi$ is consistent, and thus $\varphi\wedge \neg K_A\varphi$ is consistent.

But this contradicts the fact that $\vdash\, \varphi\to K_A\varphi$ is an $\mathbf{LDK}$-theorem for formulas $\varphi$ with $\mathcal{A}(\varphi)\subseteq A$.\end{proof}

\par\noindent\textbf{Quasi-Models}
A \emph{quasi-model for $\varphi_0$} is a set $S$ of types over $\Sigma=\Sigma(\varphi_0)$, with the following properties:
\begin{description}
\item[(*)] $\varphi_0\in \Delta_0$ for some type $\Delta_0\in S$;
\item[(**)] if $\langle K_A \rangle \varphi\in \Delta\in S$, then there is some $\Delta'\in S$ with $\Delta\simA \Delta'$ and $\varphi\in \Delta'$;
    \item[(***)] if  $\langle C_{\mathfrak A}^\theta \rangle \varphi\in \Delta\in S$, then there is some finite chain
    $\Delta=\Delta_0\sim_{A^1} \Delta_1 \sim_{A^2}\cdots \sim_{A^n} \Delta_n$,
    with $0\leq n \leq |S|-1$, $A^1, \ldots, A^n\in \gA$, $\theta\in \Delta_1, \ldots, \Delta_n\in S$ and $\varphi\in \Delta_n$.
\end{description}

\par\noindent\textbf{The $\Sigma$-canonical quasi-model} A \emph{$\Sigma$-theory} is any maximally consistent subset of $\Sigma$.  We denote by
$S_\Sigma$ the \emph{set of all $\Sigma$-theories}. We will \emph{show that $S_\Sigma$ is a (finite) ``canonical" quasi-model} for $\Sigma$.

\begin{lemma}\label{Types}
Every $\Sigma$-theory is a $\Sigma$-type.
\vspace{-1mm}
\end{lemma}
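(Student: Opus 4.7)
The plan is to verify each of the thirteen clauses in the definition of $\Sigma$-type for an arbitrary maximally consistent $\Delta \subseteq \Sigma$. The general pattern is: for each clause, identify the relevant axiom of $\mathbf{LKV}$ or derived theorem from Proposition \ref{theorems}, note that the required formulas live in $\Sigma$ by closure, and then use maximal consistency of $\Delta$ to transfer provable implications to membership in $\Delta$.

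First I would dispense with the Boolean clauses \ref{neg} and \ref{conj}, which are immediate from maximal consistency together with classical propositional logic. Clause \ref{eq1} uses the Reflexivity axiom $x=x$ together with the fact that $(x=x) \in \Sigma$ whenever $x \in Var_\Sigma$ (by closure under $P\ux$ for $P$ being equality). Clauses \ref{eq4} and \ref{eq5} follow from the Indiscernability and Functionality axioms respectively, applied to formulas already in $\Sigma$ by closure. Clause \ref{cases} is a direct translation of the Definition by Cases axiom. Clause \ref{prop-veracity} uses Veracity ($K_A\varphi \to \varphi$) together with the fact that $\varphi\in\Sigma$ whenever $K_A\varphi\in\Sigma$ (subformula closure).

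For the ``value" clauses, clause \ref{vac-know} is exactly the Non-vacuous Knowledge of Values axiom, noting that $K_A^\varphi x, \langle K_A\rangle\varphi \in \Sigma$ whenever $x_A^\varphi \in Var_\Sigma$ (by the closure specification). Clause \ref{veracity} follows from the True Conditional Value theorem from Proposition \ref{theorems}: $(x_A^\varphi\!\!\downarrow \wedge \varphi) \to x=x_A^\varphi$, where again $(x=x_A^\varphi)\in\Sigma$ by closure (it is an equality between variables already in $Var_\Sigma$). Clause \ref{hypo-eq} is the theorem $K_A(\varphi\to x=y) \to x_A^\varphi = y_A^\varphi$ from Proposition \ref{theorems}, and the hypothesis $K_A(\varphi\to x=y)$ is explicitly placed in $\Sigma$ by the closure clause for $x_A^\varphi \in Var_\Sigma$. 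Clause \ref{undefined} requires a small derivation: for a constant $c$, we have $K_a c$ by the Knowledge of Constants axiom, and hence $K_A c$ by Group Monotonicity; by the Non-vacuous Knowledge of Values / Strong Introspection machinery one derives $\vdash c_A^\varphi\!\!\downarrow \to c_A^\varphi = c$, so either $c_A^\varphi\!\!\downarrow$ lies in $\Delta$ (giving the first disjunct via clause \ref{veracity}-style reasoning inside the theory) or $c_A^\varphi\!\!\uparrow$ lies in $\Delta$ (giving the second). Clause \ref{diff} is the contrapositive of the last theorem in Proposition \ref{theorems}: if $y \in Var_\Sigma$ with $\mathcal{A}(y) \subseteq A$, then $\vdash (\varphi \wedge x\!\!\downarrow \wedge K_A(\varphi \to x=y)) \to x_A^\varphi\!\!\downarrow$, whose contrapositive, together with the provable equivalence between $\neg K_A(\varphi\to x=y)$ and $\langle K_A\rangle(\varphi\wedge x\neq y)$, gives the claim; the relevant formula $\langle K_A\rangle(\varphi\wedge x\neq y)$ is placed in $\Sigma$ by the corresponding closure clause. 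Finally, clause \ref{common} follows from the Fixed Point axiom for $C_{\gA}^\theta$, which delivers $\varphi$ and $K_A^\theta C_{\gA}^\theta \varphi$ for each $A\in\gA$; both of these lie in $\Sigma$ by subformula closure and the closure clause on $K_A(\varphi\to\psi)$.

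The main thing to watch is not any single clause but the bookkeeping: one must confirm at each step that the \emph{target} formula indeed lies in $\Sigma$, so that maximal consistency can actually place it in $\Delta$. The closure $\Sigma(\varphi_0)$ has been defined precisely to make this work, so the proof becomes a clause-by-clause pairing of an axiom or provable theorem with the matching closure condition. No nontrivial obstacle remains beyond this routine verification.
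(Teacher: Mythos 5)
Your proposal is correct and takes essentially the same approach as the paper, whose proof of this lemma is just the one-line remark that the claim is ``easy to check, using the axioms of $\mathbf{LKV}$ and Proposition \ref{theorems}''; your clause-by-clause pairing of each type condition with an axiom or derived theorem plus the matching closure condition is exactly the intended verification. (The only small imprecision is in clause \ref{undefined}, where the theorem you want is Explicit Value Introspection applied to the constant $c$ with $\mathcal{A}(c)=\emptyset$, combined with clause \ref{vac-know}.)
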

\begin{proof}
\vspace{-2mm}
This is easy to check, using the axioms of $\mathbf{LKV}$ and Proposition \ref{theorems}.
\end{proof}

\vspace{-2mm}

\begin{lemma}\label{K-ExistenceLemma}
For every $\Delta\in S_\Sigma$, if $\langle K_A \rangle \varphi\in \Delta$ then there is some $\Delta'\in S_\Sigma$ with $\Delta\simA \Delta'$ and $\varphi\in \Delta'$.
\end{lemma}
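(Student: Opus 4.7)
My plan is to prove the Existence Lemma by the standard Henkin-style construction adapted to the $\Sigma$-restricted setting: given $\Delta \in S_\Sigma$ with $\langle K_A \rangle \varphi \in \Delta$, define the candidate set
\[
\Gamma \ :=\ \{\chi \in \Sigma : \mathcal{A}(\chi) \subseteq A,\ \chi \in \Delta\}\ \cup\ \{\varphi\},
\]
show that $\Gamma$ is $\mathbf{LKV}$-consistent, and then extend it (using finiteness of $\Sigma$) to a maximally consistent subset $\Delta' \subseteq \Sigma$, which is automatically a $\Sigma$-theory, hence in $S_\Sigma$.

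The consistency step is the only non-routine ingredient, and it is where the work of Proposition~\ref{theorems} pays off. Suppose for contradiction that $\Gamma$ is inconsistent. Since $\Sigma$ is finite, so is $\Gamma_0 := \Gamma \setminus \{\varphi\}$, and we have $\vdash \bigwedge \Gamma_0 \to \neg\varphi$. Because every $\chi \in \Gamma_0$ satisfies $\mathcal{A}(\chi) \subseteq A$, the Strong Introspection theorem gives $\vdash \chi \to K_A\chi$ for each such $\chi$, and therefore $\vdash \bigwedge \Gamma_0 \to K_A \bigwedge \Gamma_0$. Combining this with the $K_A$-necessitation and distribution applied to $\bigwedge \Gamma_0 \to \neg\varphi$ yields $\vdash \bigwedge \Gamma_0 \to K_A \neg \varphi$. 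Since every conjunct of $\bigwedge \Gamma_0$ lies in $\Delta$, the formula $K_A \neg \varphi$ is consistent with $\widehat{\Delta}$. Moreover $K_A \neg \varphi \in \Sigma$, since the closure conditions on $\Sigma$ guarantee that the subformula $K_A\neg\varphi$ of $\langle K_A\rangle\varphi = \neg K_A \neg\varphi \in \Sigma$ belongs to $\Sigma$. Maximality of $\Delta$ as a consistent subset of $\Sigma$ then forces $K_A \neg \varphi \in \Delta$, contradicting $\neg K_A \neg \varphi = \langle K_A\rangle\varphi \in \Delta$.

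Once consistency of $\Gamma$ is established, extending it to a maximally consistent $\Delta' \subseteq \Sigma$ is straightforward: enumerate $\Sigma$ as $\psi_1,\ldots,\psi_N$ and at each step add $\psi_k$ to the current set if the result remains consistent. By Lemma~\ref{Types}, $\Delta'$ is a $\Sigma$-type and hence lies in $S_\Sigma$. Verifying $\Delta \simA \Delta'$ is immediate by construction: if $\chi \in \Sigma$ with $\mathcal{A}(\chi) \subseteq A$ and $\chi \in \Delta$, then $\chi \in \Gamma \subseteq \Delta'$; conversely, if such a $\chi \in \Delta'$ but $\chi \notin \Delta$, then $\sim\!\chi \in \Delta$ by clause (\ref{neg}) of the type definition, and since $\mathcal{A}(\sim\!\chi) \subseteq A$ we would have $\sim\!\chi \in \Gamma \subseteq \Delta'$, contradicting consistency of $\Delta'$. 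Finally $\varphi \in \Delta'$ since $\varphi \in \Gamma$, completing the proof.

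The main obstacle is thus the contradiction-deriving computation above; its essential leverage comes from the Strong Introspection theorem, which lets us ``internalize'' the whole $A$-local part of $\Delta$ inside the $K_A$-modality and hence convert an external consistency failure into the formal $\mathbf{LKV}$-derivability of $K_A\neg\varphi$ under $\widehat{\Delta}$. Everything else is the usual Lindenbaum extension machinery in a finite closure.
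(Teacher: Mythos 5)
Your proposal is correct and follows essentially the same route as the paper's proof: you isolate the $A$-local part of $\Delta$, use Strong Introspection together with Necessitation and Distribution to show that inconsistency of that set with $\varphi$ would force $K_A\neg\varphi\in\Delta$ (contradicting $\langle K_A\rangle\varphi\in\Delta$), and then Lindenbaum-extend within the finite closure $\Sigma$. The only cosmetic difference is that the paper's candidate set explicitly adds the negations $\sim\theta$ for $\theta\in\Sigma\setminus\Delta$ with $\mathcal{A}(\theta)\subseteq A$, whereas you recover the converse inclusion for $\simA$ afterwards via the negation clause on types; these amount to the same thing since $\Sigma$ is closed under single negations.
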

\begin{proof}
\vspace{-2mm}
Put
$\Delta_A\, :=\, \{\theta: \theta\in\Delta \mbox{ s.t. } \mathcal{A}(\theta)\subseteq A\}
\cup \{\sim \theta: \theta\in (\Sigma -\Delta) \mbox{ s.t. } \mathcal{A}(\theta)\subseteq A\}$.

\emph{Claim}: $\Delta_A \cup \{\varphi\}$ is consistent wrt the system $\mathbf{LKV}$.
\\
\emph{Proof of Claim}: Suppose not. Then we have $\vdash \widehat{\Delta_A}\to \sim \varphi$. Applying Necessitation and Distribution, we obtain $\vdash K_A\widehat{D_A} \to K_A\sim \varphi$. On the other hand, by inspecting the structure of $\Delta_A$, it is easy to see that $\mathcal{A}(\Delta_A)\subseteq A$, so by Strong Introspection (Proposition \ref{theorems}9) we have $\vdash \widehat{\Delta_A} \to K_A \widehat{\Delta_A}$. Putting these together, we get $\vdash \widehat{\Delta_A}\to K_A\sim \varphi$. Since $\Delta_A\subseteq \Delta$ and $\Delta$ is closed under $\Sigma$-consequences, we have $(K_A\sim\varphi)\in \Delta$. But this contradicts the assumption that $\langle K_A \rangle \varphi\in \Delta$ (given the consistency of $\Delta$).
\\
Using our Claim and the standard Lindenbaum Lemma, we get that $\Delta_A \cup \{\varphi\}$ has a $\Sigma$-maximally consistent extension $\Delta'\in S_\Sigma'$. So we have $\varphi\in \Delta'$ and $\Delta_A\subseteq \Delta'$, which implies that $\Delta\sim_A \Delta'$.
\end{proof}

\begin{lemma}\label{C-ExistenceLemma}
For every $\Sigma$-theory $\Delta\in S_\Sigma$, if  $\langle C_\gA^\theta \rangle \varphi\in \Delta$, then there is some finite chain of $\Sigma$-theories
    $\Delta=\Delta_0\sim_{A^1} \Delta_1 \sim_{A^2}\cdots \sim_{A^n} \Delta_n$,
    with $0\leq n \leq |S_\Sigma|-1$, $A^1, \ldots, A^n\in \gA$, $\theta\in \Delta_1, \ldots, \Delta_n\in S_\Sigma$ and $\varphi\in \Delta_n$.
\end{lemma}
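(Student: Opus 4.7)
My plan is to follow the standard canonical-model recipe for common knowledge, adapted to the new \emph{conditional} variant. I would define a one-step relation $\rightsquigarrow$ on $S_\Sigma$ by setting $\Lambda \rightsquigarrow \Lambda'$ iff some $A \in \gA$ has $\Lambda \simA \Lambda'$ and $\theta \in \Lambda'$, and take $R$ to be its reflexive-transitive closure. A valid $\gA$-chain through $\theta$-theories from $\Delta$ to $\Delta_n$ with $\varphi \in \Delta_n$ is exactly a witness to $\Delta R \Delta_n$; once such a chain exists, the length bound $n \leq |S_\Sigma|-1$ can be enforced by a routine pruning step that collapses any repeated theory while preserving the $\gA$-labels and the $\theta$-condition on post-initial states. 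So it suffices to show that if $\langle C_\gA^\theta\rangle\varphi \in \Delta$, then some $\Lambda \in W := \{\Lambda \in S_\Sigma : \Delta R \Lambda\}$ has $\varphi \in \Lambda$.

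I would argue this by contradiction, assuming $\sim\!\varphi \in \Lambda$ for every $\Lambda \in W$ (here $\varphi \in \Sigma$ thanks to closure of $\Sigma$ under single negations and subformulas, applied to the abbreviation $\langle C_\gA^\theta\rangle\varphi = \neg C_\gA^\theta \neg\varphi \in \Delta$). The key auxiliary formula is
\[
\psi \,\,:=\,\, \bigvee_{\Lambda \in W} \widehat{\Lambda},
\]
which trivially satisfies $\vdash \widehat{\Delta} \to \psi$ (since $\Delta \in W$) and $\vdash \psi \to \neg\varphi$ (since each disjunct contains $\sim\!\varphi$ as a conjunct). The crux is the \emph{local induction step}
\[
(\dagger)\quad \vdash \psi \to K_A(\theta \to \psi), \quad \text{for every } A \in \gA.
\]
Granting $(\dagger)$, C-Necessitation applied to $\psi \to \bigwedge_{A \in \gA} K_A^\theta\psi$ combined with the Induction axiom delivers $\vdash \psi \to C_\gA^\theta \psi$; composing with $\vdash \psi \to \neg\varphi$ via C-Necessitation and C-Distribution yields $\vdash \widehat{\Delta} \to C_\gA^\theta \neg\varphi$. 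Since $C_\gA^\theta \neg\varphi \in \Sigma$, maximal consistency of $\Delta$ forces $C_\gA^\theta \neg\varphi \in \Delta$, contradicting $\neg C_\gA^\theta \neg\varphi \in \Delta$.

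The main obstacle will be $(\dagger)$, precisely because $\neg\psi$ is a Boolean combination of characteristic conjunctions $\widehat{\Lambda}$ that generally lies outside $\Sigma$, so Lemma \ref{K-ExistenceLemma} does not apply off the shelf. My plan here is again a contradiction: if $(\dagger)$ fails, then some $\Lambda \in W$ makes $\widehat{\Lambda} \wedge \langle K_A\rangle(\theta \wedge \neg\psi)$ $\mathbf{LKV}$-consistent. Mimicking the proof of Lemma \ref{K-ExistenceLemma}, and invoking Strong Introspection from Proposition \ref{theorems} to collapse $\widehat{\Lambda_A}$ under $K_A$ (using $\mathcal{A}(\Lambda_A) \subseteq A$), I would show that $\Lambda_A \cup \{\theta, \neg\psi\}$ is $\mathbf{LKV}$-consistent, where $\Lambda_A$ is the $A$-relevant fragment of $\Lambda$ defined in that earlier proof. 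A $\Sigma$-Lindenbaum extension then produces a $\Sigma$-theory $\Lambda'' \in S_\Sigma$ with $\Lambda_A \cup \{\theta\} \subseteq \Lambda''$ and $\Lambda'' \cup \{\neg\psi\}$ still $\mathbf{LKV}$-consistent. Unwinding definitions, $\Lambda \simA \Lambda''$ and $\theta \in \Lambda''$, so $\Lambda \rightsquigarrow \Lambda''$ and hence $\Lambda'' \in W$; but then $\widehat{\Lambda''}$ is a disjunct of $\psi$, making $\psi$ a $\Lambda''$-consequence and contradicting consistency of $\Lambda'' \cup \{\neg\psi\}$. This establishes $(\dagger)$ and completes the proof.
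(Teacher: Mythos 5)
Your proposal is correct and follows essentially the same route as the paper: the same reachable-set $W$ (the paper's $\Delta_{\gA}^{\Delta}$), the same disjunction $\psi$ of characteristic formulas, and the same appeal to the Induction axiom to derive $\vdash \widehat{\Delta}\to C_{\gA}^{\theta}\neg\varphi$ and reach a contradiction. The only (harmless) divergence is in the local induction step $(\dagger)$: the paper decomposes $\neg\psi$ using the theorem $\vdash\bigvee\{\widehat{\Gamma}:\Gamma\in S_\Sigma\}$ together with Proposition \ref{Access3}, whereas you rerun the Lindenbaum construction of Lemma \ref{K-ExistenceLemma}; both close the same gap.
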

\begin{proof}
\vspace{-2mm}
Put
$\Delta_{\gA}^\Delta \, :=\, \{\Gamma\in S_\Sigma: \Delta=\Gamma_0\sim_{A^1}\Gamma_1 \cdots \sim_{A^n} \Gamma_n=\Gamma
\mbox{ s.t.} n\geq 0, \theta\in \Gamma_k\in S_\Sigma \mbox{ and } A^k\in\gA \mbox{ for all } k\geq 1\}$.
We need to prove that there exists $\Gamma \in \Delta_{\gA}^\Delta$ s.t. $\varphi\in \Gamma$.\footnote{This shows that there exists a finite $\gA$-chain of $\Sigma$-theories $\Delta=\Gamma_0, \Gamma_1, \ldots, \Gamma_n=\Gamma$, with $n\geq 0$, $\varphi\in\Gamma$ and with $\theta\in \Gamma_k$ for all $j\geq 1$. The fact that we can take $n\leq |S|-1$ follows from the fact that there are no more than $|S_\Sigma|$ distinct such theories, so if the chain is longer we can just shorten it by cutting the repeating theories (together with the subchains connecting them).}
Suppose not. Then $(\sim\varphi)\in \Gamma$ for all $\Gamma \in \Delta_{\gA}^\Delta$. Take the formula
$\eta :=\bigvee\{ \widehat{\Gamma}: \Gamma \in \Delta_{\gA}^\Delta\}$. So we have $\vdash \eta \to \sim\varphi$, thus also $\vdash C_\gA^\theta \eta \to C_\gA^\theta \sim\varphi$.
\smallskip

\emph{Claim}: $\vdash \eta \to C_{\gA}^\theta \eta$.
\\
\emph{Proof of Claim}: Using the Induction axiom and $C_\gA^\theta$-Necessitation, we see that is enough to show that
$\vdash \eta \to K_A^\theta \eta$ for all $A\in \gA$. Suppose not: then we have that $\eta\wedge \langle K_A^\theta\rangle \neg \eta$ is consistent, for some $A\in \gA$. Unfolding the structure of $\eta$ and using the (easily verifiable) fact that $\vdash \bigvee \{\widehat{\Gamma}: \Gamma\in S_\Sigma\}$ is a theorem, we see that there must exist $\Sigma$-theories $\Gamma \in \Delta_\gA^\theta$ and $\Gamma'\in (S_\Sigma - \Delta_\gA^\theta)$ s.t. $\widehat{\Gamma} \wedge \langle K_A^\theta\rangle \widehat{\Gamma'}$ is consistent, i.e.
$\widehat{\Gamma} \wedge \langle K_A\langle (\theta\wedge \widehat{\Gamma'})$ is consistent. This means that on the one hand $\theta\wedge \widehat{\Gamma'}$ is consistent, hence $\theta\in \Gamma'$ (since $\theta\in \Sigma$ and $\Gamma'$ is a maximally consistent subset of $\Sigma$); and on the other hand, $\widehat{\Gamma} \wedge \langle K_A\rangle \widehat{\Gamma'}$ is consistent, hence by Proposition \ref{Access3} (and the fact that $\Gamma$ and $\Gamma'$ are types by Lemma \ref{Types}) we have $\Gamma\sim_A \Gamma'$. But given the way we defined the set $\Delta_\gA^\theta$, we see that from $\Gamma\in \Delta_\gA^\theta$, $A\in \gA$, $\Gamma\sim_A \Gamma'$ and $\theta\in \Gamma'$ we can infer that $\Gamma'\in \Delta_\gA^\theta$, which contradicts the fact that $\Gamma'\in (S_\Sigma - \Delta_\gA^\theta)$.

Using now our Claim and the fact that $\vdash C_\gA^\theta \eta \to C_\gA^\theta \sim\varphi$, we obtain that $\vdash \eta \to
C_\gA^\theta \sim\varphi$. But we also have $\vdash \widehat{\Delta}\to \eta$ (by the structure of $\eta$ and the fact that $\Delta\in \Delta_\gA^\theta$ by definition), hence we infer $\vdash \widehat{\Delta}\to  C_\gA^\theta \sim\varphi$. But this contradicts the assumption that
$\langle C_\gA^\theta \rangle \varphi\in \Delta$ (given the consistency of $\Delta$).
\end{proof}
\vspace{-2mm}
\begin{pr}\label{QuasiModel}
If $\varphi_0\in \Sigma$ is consistent, then there exists a quasi-model for $\varphi_0$.
\end{pr}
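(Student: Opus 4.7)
The plan is to take $S := S_\Sigma$, the set of all $\Sigma$-theories (maximally consistent subsets of $\Sigma$), as the required quasi-model for $\varphi_0$. All the technical work needed has already been packaged into the three preceding lemmas, so the argument is essentially a matter of assembling them and verifying that $S_\Sigma$ meets each clause in the definition of a quasi-model.

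First I would note that, since $\Sigma$ is finite, $S_\Sigma$ is a finite set of subsets of $\Sigma$, and by Lemma \ref{Types} every element of $S_\Sigma$ is already a $\Sigma$-type, so $S_\Sigma$ is a legitimate candidate. To establish clause (*), I would invoke the hypothesis that $\varphi_0$ is consistent: since $\varphi_0 \in \Sigma$, a standard Lindenbaum-style construction (taking a maximal $\Sigma$-consistent extension of $\{\varphi_0\}$ within the finite set $\Sigma$) produces a $\Sigma$-theory $\Delta_0 \in S_\Sigma$ with $\varphi_0 \in \Delta_0$.

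For clauses (**) and (***), nothing remains to be proved: clause (**) is literally the statement of Lemma \ref{K-ExistenceLemma} applied to $S_\Sigma$, and clause (***) is the statement of Lemma \ref{C-ExistenceLemma} (where the length bound $0 \leq n \leq |S_\Sigma|-1$ is already built into that lemma). Since there are no remaining obstacles, the proposition follows directly.

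The only conceptual point worth flagging is not really an obstacle but a sanity check: the definition of a quasi-model requires $S$ to consist of $\Sigma$-types rather than arbitrary subsets of $\Sigma$, which is why Lemma \ref{Types} is needed in the background. Any subtlety (for instance, confirming that the consequences forced by the type conditions \ref{neg}--\ref{common} are all derivable from membership in a maximally consistent $\Sigma$-subset) has already been absorbed into the proof of that lemma, so the present proposition inherits it for free.
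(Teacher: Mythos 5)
Your proposal is correct and matches the paper's own proof exactly: both take $S_\Sigma$ as the quasi-model, obtain $\Delta_0\ni\varphi_0$ by a Lindenbaum-style argument, and discharge clauses (**) and (***) by citing Lemmas \ref{K-ExistenceLemma} and \ref{C-ExistenceLemma}, with Lemma \ref{Types} guaranteeing that $\Sigma$-theories are types. Nothing further is needed.
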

\begin{proof}
\vspace{-2mm}
Take the set $S_\Sigma$ of all $\Sigma$-theories. By the Lindenbaum Lemma, there exists a maximally consistent subset $\Delta_0\in S_\Sigma$, such that $\varphi_0\in \Delta_0$. By Lemmas \ref{Types}, \ref{K-ExistenceLemma} and \ref{C-ExistenceLemma}, $S_\Sigma$ is a quasi-model for $\varphi_0$.
\end{proof}
\vspace{-2mm}
Putting this together with the \emph{soundness} of the proof system $\mathbf{LKV}$, we obtain the following:

\smallskip
\par\noindent\textbf{Corollary}. \emph{If $\varphi_0$ is satisfiable then there exists a quasi-model for $\varphi_0$.}

\smallskip

The hard part is to prove the \emph{converse} of this:

\begin{pr}\label{Satisfiability}
If $S$ is a quasi-model for $\varphi_0$, then $\varphi_0$ is satisfiable.
\end{pr}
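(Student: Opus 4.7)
The plan is to turn $S$ itself into an epistemic data model $\bM$ and then establish a Truth Lemma of the form $\Delta\models_\bM\varphi\Leftrightarrow \varphi\in\Delta$, for all $\Delta\in S$ and $\varphi\in\Sigma$; combined with clause (*) this yields $\Delta_0\models_\bM\varphi_0$, so $\varphi_0$ is satisfiable. For the accessibility structure I would take $\sima$ to be the type-level relation $\sim_{\{a\}}$ introduced before Proposition~\ref{Access}, which by that proposition is an equivalence. The closure condition that $(K_A\varphi)\in\Sigma$ implies $(K_B\varphi)\in\Sigma$ for all $B\subseteq\Agents_\Sigma$, combined with Group Monotonicity, forces $\simA=\bigcap_{a\in A}\sima$, matching the Kripke clause for $K_A$.

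The delicate step is the construction of the first-order model $\bD$ and the value map. Let $\equiv$ be the smallest equivalence on $\{(\Delta,x):\Delta\in S,\ x\in Var_\Sigma\}$ such that (i) $(\Delta,x)\equiv(\Delta,y)$ whenever $(x=y)\in\Delta$; (ii) $(\Delta,x)\equiv(\Delta',x)$ whenever $\Delta\simA\Delta'$ for some $A\supseteq\mathcal{A}(x)$ (so as to enforce Proposition~\ref{Preservation}); and (iii) the relation is compatible with each function symbol of $Funct_\Sigma$. Take $D$ to be the resulting quotient (padded with dummy values for symbols outside $\mathcal{V}_\Sigma$), with $\uparrow_\bD$ the class of $(\Delta,\uparrow)$ (well-defined since $(\uparrow=\uparrow)$ lies in every type by clause~\ref{eq1}). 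Define $\Delta(v):=[(\Delta,v)]_\equiv$ for $v\in V_\Sigma$, interpret each constant uniformly as $[(\Delta,c)]_\equiv$, and set $I(P):=\{([(\Delta,x_1)]_\equiv,\ldots,[(\Delta,x_n)]_\equiv):P\ux\in\Delta\in S\}$; clause~\ref{eq4} makes this last definition representative-independent.

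The core is then a simultaneous induction on the complexity of terms and formulas establishing (a) $\Delta(x)_\bM=\Delta(y)_\bM\Leftrightarrow (x=y)\in\Delta$ and (b) $\Delta\models_\bM\varphi\Leftrightarrow\varphi\in\Delta$. Atomic and Boolean cases are immediate from the construction; the $K_A$ case uses Proposition~\ref{Access2} for the forward direction and clause (**) of the quasi-model definition for the witness; the $C_\gA^\theta$ case uses clause (***) together with straightforward unfolding via Proposition~\ref{Access2}. Among terms, constants and basic variables are immediate, functional terms follow from (iii), and $x|_\varphi y$ follows from clause~\ref{cases} and the inductive hypothesis for $\varphi$.

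The main obstacle will be the hypothetical-value term $x_A^\varphi$, where I must match the semantic clause for $\Delta(x_A^\varphi)_\bM$ against membership in $\Delta$. For the ``defined'' direction, assuming $(x_A^\varphi\!\!\downarrow)\in\Delta$, clauses~\ref{vac-know} and~\ref{veracity} pin $\Delta(x_A^\varphi)_\bM$ down as the common $x$-value across $\simA$-related $\varphi$-types, while clause~\ref{hypo-eq} guarantees that this value is coherent under $\equiv$. The ``undefined'' direction is harder: assuming $(x_A^\varphi\!\!\uparrow)\in\Delta$ together with $\langle K_A\rangle\varphi\in\Delta$ and $x\!\!\downarrow\in\Delta$, I must produce two $\simA$-accessible $\varphi$-types on which $x$ receives distinct $\equiv$-classes. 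This is precisely the role of clause~\ref{diff}: for each candidate witness $y\in Var_\Sigma$ with $\mathcal{A}(y)\subseteq A$ we obtain $\langle K_A\rangle(\varphi\wedge x\neq y)\in\Delta$, clause (**) yields an accessible $\Delta'$ containing $\varphi$ and $(x\neq y)$, and Term Introspection (Proposition~\ref{theorems}) combined with constraint (ii) in the definition of $\equiv$ keeps $y$'s class stable across the $\simA$-class, thereby forcing two distinct values of $x$ to appear on $\varphi$-worlds of that class. Once the Truth Lemma is in hand, completeness of $\mathbf{LKV}$ follows via Proposition~\ref{QuasiModel}, and decidability follows since $|S_\Sigma|\leq 2^{|\Sigma|}$ is bounded effectively in $|\varphi_0|$, reducing validity to a finite search.
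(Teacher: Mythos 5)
Your overall architecture (quotient construction for the value domain, simultaneous Truth Lemma for formulas and terms, clauses (**) and (***) for the existence steps) matches the paper's, but you build the model directly on the set of types $S$, whereas the paper first \emph{unravels} $S$ into a tree of histories rooted at $\Delta_0$. This difference is not cosmetic, and it opens two genuine gaps. First, your claim that the closure condition plus Group Monotonicity ``forces $\simA=\bigcap_{a\in A}\sima$'' on types is unjustified: $\Delta\sim_{\{a\}}\Delta'$ for all $a\in A$ only guarantees agreement on formulas whose agent set lies in some \emph{singleton} $\{a\}$, and says nothing about formulas $\varphi$ with, say, $\mathcal{A}(\varphi)=\{a,b\}\subseteq A$. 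So $\bigcap_{a\in A}\sim_{\{a\}}$ is in general strictly coarser than $\sim_A$, and the right-to-left direction of the $K_A$ case of your Truth Lemma breaks (this is the standard ``intersection property'' failure for distributed knowledge, and it is precisely what the unravelling repairs: on the tree, $h\sima h'$ for all $a\in A$ forces every label on the unique non-redundant path to contain all of $A$, whence $last(h)\sim_A last(h')$ by Lemma~\ref{equiv}).

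Second, and more seriously, your treatment of the undefined case of $x_A^\varphi$ does not go through on the raw type set. The smallest equivalence $\equiv$ generated by your clauses (i)--(iii) can identify $(\Delta',x)$ with $(\Delta'',x)$ via arbitrary zig-zag chains passing through \emph{other} types of $S$ and through \emph{different} carrier terms at each step; clause~\ref{diff} only supplies, for one candidate term $y$ with $\mathcal{A}(y)\subseteq A$ at a time, some accessible type containing $\varphi\wedge x\neq y$, and this cannot exclude all such chains. The paper's Path Lemma~\ref{PathLemma} is exactly the tool that closes this gap: because the witness $h''$ is a \emph{fresh leaf} attached to $h'$ in the tree, any identification $(h',x)\approx(h'',x)$ must be carried by a single term $y'$ with $\mathcal{A}(y')\subseteq A$ across that single edge, and the type $\Delta'$ can be chosen in advance (via clause~\ref{diff} applied to one suitably chosen $y$) to refute precisely that. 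Without unravelling, or some substitute device giving you comparable control over $\equiv$, the ``Unknown Proper Value'' step of your induction fails, and with it the Truth Lemma for conditional terms.
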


Note that this would immediately give us the completeness and decidability of $\mathbf{LKV}$:

\smallskip\par\noindent\emph{\textbf{Proof of Theorem \ref{CompLKV}}}:
By Proposition \ref{Satisfiability}, Proposition \ref{QuasiModel} and the soundness of the system $\mathbf{LKV}$, the following three notions are equivalent: (a) $\varphi_0$ is consistent; (b) there exists a quasi-model for $\varphi_0$; (c) $\varphi_0$ is satisfiable. \emph{Completeness} immediately follows. For decidability: to decide whether $\varphi_0$ is satisfiable, take the closure $\Sigma=\Sigma(\varphi_0)$ of $\varphi_0$, and check whether $\varphi$ belongs to any type in a quasi-model over $\Sigma$.\footnote{Note that there are only finitely many types over $\Sigma$, so only finitely many quasi-models, and in fact we can calculate an upper bound on their number, that is a computable function of the complexity of $\varphi_0$. Equally important, checking that a finite subset of $\Sigma$ is a type, and that a finite set of types is a quasi-model, are both computable tasks.}

\medskip

The rest of this section is dedicated to the proof of Proposition \ref{Satisfiability}.

\bigskip

\par\noindent\textbf{Unravelling: the tree of histories} Let us fix a quasi-model $S$, a formula $\varphi_0$ and a type $\Delta_0\in S$ with $\varphi_0\in \Delta_0$. We will construct a model for $\varphi$, essentially based on an unravelling of $S$ around $\Delta_0$.
A \emph{history} is a finite sequence $h=(\Delta_0, A^1, \Delta_1, \ldots, A^n, \Delta_n)$ of any length $n\geq 0$, where $\Delta_1, \ldots,\Delta_n\in S$ are types and
$A^1, \ldots, A^n\subseteq\Agents_\Sigma$ are groups, such that we have
$\Delta_{i-1}\sim_{A^i} \Delta_i$ for all $i=1,n$.
We denote by $last(h):=\Delta_n$ the last state in history $h$, and by $\to_{A}$ the natural \emph{one-step relation} on histories, given by putting: $h\to_A h'$ iff $h'=(h, A, \Delta')$ (with $last(h) \simA \Delta'=last(h')$).
The one-step relations structure $H$ into a \textit{tree rooted at $\Delta_0$} (with the immediate successor relation given by $h\to h'$ iff $h\to_A h'$ for some group $A$). In particular, we have \textit{the tree property} : \emph{every two nodes $h, h'$ of the tree are connected by a unique non-redundant path} $h=h_0 \ot_{A^1} h_1 \ot_{A^2} \ldots \ot_{A^i} h_{i}\to_{A^{i+1}} \ldots \to_{A^n} h_n=h'$ (-in which neighboring nodes on the path are immediate successors of one another, in one order or another, and no nodes are repeated).

\smallskip
\par\noindent\textbf{Epistemic relations on histories}
To make this tree into a model for our restricted vocabulary $\mathcal{V}_\Sigma$, we define our \emph{single-agent indistinguishability relations} $\sima\subseteq H\times H$ on histories, by putting
\vspace{-2mm}
$$\sima \,\, :=\,\, \left( \bigcup_{A\ni a} \to_A \cup \bigcup_{A\ni a} \ot_A \right)^*,
\vspace{-2mm}$$
where $\ot_A$ is the converse of $\to_A$, the unions range over groups $A\subseteq \mathcal{A}_\Sigma$ s.t. $a\in A$, and $R^*$ is the reflexive-transitive closure of $R$.
Since our goal is to build a standard model, the \emph{group indistinguishability relations} $\simA\subseteq H\times H$ are taken to be simply the \emph{intersections} of all the individual relations:
\vspace{-2mm}
$$\simA \,\, :=\,\, \bigcap_{a\in A} \sima
\vspace{-2mm}$$
It is useful to give a more concrete characterizations of the knowledge relations $\sima$ and $\simA$ on histories.

\begin{lemma}\label{b-path}
The following are \emph{equivalent}, for all $a\in \mathcal{A}_\Sigma$ and histories $h,h'\in H$:
\begin{enumerate}
\item $h\sima h'$;
\item $a\in A^i$, for all groups $A^i$ that appear as transition labels on the non-redundant path from $h$ to $h'$.
\end{enumerate}
\vspace{-2mm}
\end{lemma}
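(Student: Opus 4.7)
The plan is to prove the two directions separately, relying on the tree structure of $H$ just established.

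Direction (2)$\Rightarrow$(1) is immediate. Write the non-redundant path from $h$ to $h'$ as $h=h_0, h_1, \ldots, h_n=h'$, where each consecutive pair $h_{i-1}, h_i$ is related by either $\to_{A^i}$ or $\ot_{A^i}$. If every $A^i$ contains $a$, then each one-step transition lies in $\bigcup_{A\ni a}\to_A \cup \bigcup_{A\ni a}\ot_A$, so composing them produces an element of the reflexive-transitive closure defining $\sim_a$, which gives $h\sim_a h'$.

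For direction (1)$\Rightarrow$(2), assume $h\sim_a h'$. Unfolding the reflexive-transitive closure yields a walk $h=k_0, k_1, \ldots, k_m=h'$ in $H$ where each pair $k_{j-1}, k_j$ is related by $\to_{B^j}$ or $\ot_{B^j}$ for some $B^j$ containing $a$. I would prove by induction on $m$ that every label $A^i$ on the unique non-redundant path from $h$ to $h'$ appears among $\{B^1,\ldots, B^m\}$, which suffices since each $B^j$ contains $a$. The base cases $m\leq 1$ are trivial. For $m\geq 2$, if the walk itself is non-redundant then by the uniqueness clause of the tree property it coincides with the canonical non-redundant path, and we are done. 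Otherwise, the walk contains an immediate backtracking triple $k_{j-1}, k_j, k_{j+1}$ with $k_{j-1}=k_{j+1}$ traversing the same underlying tree edge in opposite directions (so in particular $B^j=B^{j+1}$); deleting this pair yields a strictly shorter walk between $h$ and $h'$ whose labels are a subset of the original ones, and the induction hypothesis applies.

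The main obstacle is justifying the existence of such a backtracking pair whenever the walk has a repeated node. This reduces to the acyclicity of $H$: if $k_i=k_j$ is any repetition (with $i<j$ chosen so that the sub-walk $k_i,\ldots,k_j$ has no interior repetitions), then $k_i,\ldots,k_j$ is a closed walk in a tree, and any such walk must retrace at least one edge in two consecutive steps. Once this observation is made precise, the induction step is mechanical and the lemma follows.
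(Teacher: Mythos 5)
Your proof is correct and is exactly the elaboration of what the paper leaves as ``obvious'': direction (2)$\Rightarrow$(1) from the definition of $\sim_a$ as a reflexive-transitive closure, and direction (1)$\Rightarrow$(2) by reducing an arbitrary $a$-labelled walk to the unique non-redundant path via cancellation of backtracking pairs, justified by the acyclicity of the tree of histories. No gaps; the handling of the backtracking step (minimal repeated pair forces a length-two closed sub-walk retracing a single edge, hence equal labels) is precisely the detail the paper omits.
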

\vspace{-1mm}
\begin{proof}
\vspace{-2mm}
This is obvious, by the definition of $\sima$ on $H$, and the uniqueness of the non-redundant path.
\end{proof}

\begin{lemma}\label{path}
The following are \emph{equivalent}, for $A\subseteq \mathcal{A}_\Sigma$ and histories $h,h'\in H$:
\begin{enumerate}
\item $h\simA h'$;
\item $A\subseteq A^i$, for all groups $A^i$ that appear as transition labels on the non-redundant path from $h$ to $h'$.
\end{enumerate}
\end{lemma}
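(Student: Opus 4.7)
The plan is to derive this lemma directly from Lemma \ref{b-path} together with the definition of $\simA$ as the intersection $\bigcap_{a\in A}\sima$. Since the non-redundant path from $h$ to $h'$ is unique by the tree property, it is the same path whichever agent $a \in A$ we focus on, and the equivalence follows by unpacking set membership.

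For the direction $(1)\Rightarrow(2)$, I would assume $h\simA h'$, so by definition $h\sima h'$ for every $a\in A$. Fix any group $A^i$ appearing as a transition label on the (unique) non-redundant path from $h$ to $h'$. For each $a\in A$, Lemma \ref{b-path} gives $a\in A^i$. Since this holds for every $a\in A$, we conclude $A\subseteq A^i$, as required.

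For the converse $(2)\Rightarrow(1)$, I would assume that $A\subseteq A^i$ for every label $A^i$ on the non-redundant path from $h$ to $h'$. Then for every individual $a\in A$, we have $a\in A^i$ for each such label, so by Lemma \ref{b-path} we get $h\sima h'$. Taking the intersection over all $a\in A$ yields $h\simA h'$.

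Since both steps are essentially rewriting the intersection characterization of $\simA$ and quoting the single-agent lemma, there is no real obstacle here; the only thing to be careful about is that the \emph{same} non-redundant path witnesses the equivalence for every $a\in A$ simultaneously, which is guaranteed by the tree property asserted just before the definition of $\sima$ on histories.
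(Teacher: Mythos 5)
Your proof is correct and follows exactly the paper's route: the lemma is obtained from Lemma \ref{b-path} together with the intersection definition of $\simA$, with the uniqueness of the non-redundant path guaranteeing that the same witnessing path works for all $a\in A$ simultaneously. You have simply spelled out the details that the paper leaves implicit.
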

\vspace{-1mm}
\begin{proof}
\vspace{-2mm}
This follows from Lemma \ref{b-path}, by the uniqueness of the non-redundant path from $h$ to $h'$.\end{proof}

\vspace{-2mm}

Next, we show that the map $last$ preserves $\simA$-relations:

\begin{lemma}\label{equiv}
If $h\simA h'$, then $last(h)\simA last(h')$.
\end{lemma}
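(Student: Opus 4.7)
The plan is to reduce the claim to the characterization of $\simA$ on histories already established in Lemma \ref{path}, and then to walk step-by-step along the non-redundant path from $h$ to $h'$, applying Group Monotonicity (Proposition \ref{Access}) at each transition.

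First I would invoke Lemma \ref{path} to obtain that, whenever $h \simA h'$, every transition label $A^i$ appearing on the unique non-redundant path $h = h_0 \ot_{A^1} h_1 \cdots \to_{A^n} h_n = h'$ satisfies $A \subseteq A^i$. Next, I would examine a single step $h_{i-1}$ and $h_i$ on this path: by the definition of $\to_{A^i}$, either $h_i = (h_{i-1}, A^i, \Delta)$ with $last(h_{i-1}) \sim_{A^i} \Delta = last(h_i)$, or symmetrically $h_{i-1} = (h_i, A^i, \Delta')$ with $last(h_i) \sim_{A^i} \Delta' = last(h_{i-1})$. Either way, $last(h_{i-1}) \sim_{A^i} last(h_i)$ as types (using symmetry of $\simAi$ on types, which is part of Proposition \ref{Access}).

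Then, since $A \subseteq A^i$, Group Monotonicity from Proposition \ref{Access} yields $last(h_{i-1}) \simA last(h_i)$. Chaining these through all $i = 1, \ldots, n$ and using that $\simA$ is an equivalence relation on types (again by Proposition \ref{Access}), we conclude $last(h) \simA last(h')$, as desired.

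There is no real obstacle here: the lemma is essentially a bookkeeping consequence of the way we defined $\sima$ on histories as a transitive closure over edges whose labels all extend $\{a\}$, combined with the fact that the step relations $\to_{A^i}$ were designed so that passing a step causes only an $A^i$-level change in the type-valued $last$. The only point requiring any care is to make sure that the \emph{non-redundant} path suffices, rather than having to argue over arbitrary paths witnessing $h \simA h'$; but this is immediate since Lemma \ref{path} already performs that reduction by exploiting the tree property.
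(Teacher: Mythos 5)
Your proof is correct and follows essentially the same route as the paper's: both use Lemma \ref{path} to get $A\subseteq A^i$ for every label on the non-redundant path, extract $last(h_{i-1})\sim_{A^i} last(h_i)$ from the definition of $\to_{A^i}$ (in either orientation), and then combine Group Monotonicity with transitivity of $\simA$ on types from Proposition \ref{Access}. The only cosmetic difference is that the paper packages the chaining as an induction on the length of the non-redundant path, whereas you chain the steps directly.
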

\vspace{-1mm}
\begin{proof}
\vspace{-2mm}
The proof is by \emph{induction on the length $N$ of the non-redundant path} from $h$ to $h'$:

\emph{Base case}: $h=h'$. The conclusion follows trivially (given that $\simA$ are equivalence relations).

\emph{Inductive case}: Suppose the non-redundant path from $h$ and $h'$ has length $N+1$, and let us look at the last transition on this path. Given Lemma \ref{path}, this transition can be either of the form
$h_N {\to}_{A^N} h_{N+1}=h'$, or of the form $h_N {\ot}_{A^N} h_{N+1}=h'$, with $A^N\supseteq A$. By definition of $\to_A$ on histories, we have either  $h'=(h_N, A^N, last(h'))$ or  $h_N=(h', A^N, last(h_N))$, with $last(h_N)\sim_{A^N} last (h')$ in both cases. By Monotonicity (Proposition \ref{Access}) and the fact that $A\subseteq A^N$, we obtain $last(h_N)\simA last(h')$. On the other hand, we also have $last(h)\simA last(h_N)$ (-since the non-redundant path from $h$ to $h_N$ has length $N$, so by the induction hypothesis the pair $(h,h_N)$ satisfies the conclusion of our Lemma, with $h'$ replaced by $h_N$). Putting these two together (and using the transitivity of $\simA$), we conclude that $last(h)\simA last(h')$, as desired.
\end{proof}

\par\noindent\textbf{The Value Domain: a Quotient Construction} The \emph{set of values} $D$ of our model will be given by a quotient of the Cartesian product $H\times Var_\Sigma$.
We \emph{will define an equivalence relation}
$\approx$ on pairs $(history, variable)$ in $H\times Var_\Sigma$ (telling us \emph{when two such pairs represent the same value}), then we'll define our canonical set of objects $D$ to be \emph{the quotient of $H\times Var_\Sigma$ with respect to $\approx$}.

For this, we first introduce another \emph{equivalence relation} $\sim$ on $H\times Var_\Sigma$, representing ``\emph{identity of objects at a given node}". This is given by putting:
\vspace{-2mm}
$$(h,x) \sim (h', x') \,\, \mbox{ iff } \,\, h=h' \mbox{ and } (x= x')\in last(h).
\vspace{-2mm}$$
\begin{pr}\label{Sim} $\sim$ is an equivalence relation.
\vspace{-1mm}
\end{pr}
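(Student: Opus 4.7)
The plan is to verify the three defining properties of an equivalence relation directly from the definition of $\sim$ and from the corresponding axioms governing $\Sigma$-types. The relation is defined by $(h,x)\sim(h',x')$ iff $h=h'$ and $(x=x')\in last(h)$, so the history component contributes a trivial equality $h=h'$ (which is plainly reflexive, symmetric, and transitive), and the real content lies in showing that the binary relation $x\equiv_\Delta x'$ defined by $(x=x')\in \Delta$ is an equivalence relation on $Var_\Sigma$ for every type $\Delta$. Since the definition of $\sim$ forces both sides of a related pair to share the same history, this will suffice.

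First, for reflexivity I would appeal directly to clause \ref{eq1} of the definition of a $\Sigma$-type, which guarantees $(x=x)\in\Delta$ for every $x\in Var_\Sigma$; applying this with $\Delta:=last(h)$ gives $(h,x)\sim(h,x)$.

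Next, for symmetry and transitivity I would invoke the footnote to clause \ref{eq4}, which states that $\Delta$-equality is an equivalence relation on $Var_\Sigma$. For completeness I would briefly indicate why: symmetry follows by taking $P$ to be the equality predicate and applying (Indiscernability) in the form $\ux=\uy \to (\ux=\uz \leftrightarrow \uy=\uz)$ together with $(x=x)\in\Delta$ to derive $(x=y)\in\Delta \Rightarrow (y=x)\in\Delta$; transitivity follows similarly, using clause \ref{eq4} with $\uz$ instantiated appropriately. Applied pointwise at $\Delta=last(h)$, these yield symmetry and transitivity of $\sim$.

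There is no real obstacle here; the proposition is essentially a direct unpacking of the type definition, and the only subtlety is citing the right clauses (and noting that clauses \ref{eq1}, \ref{eq4} ensure closure of $Var_\Sigma$-equality under the three properties, even though $Var_\Sigma$ is only a finite subset of the full term set). I would keep the proof to a few lines, structured as the three bullets above, ending with the observation that combining these with the trivial equivalence properties of $h=h'$ gives that $\sim$ is an equivalence relation on $H\times Var_\Sigma$.
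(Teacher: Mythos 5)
Your proposal is correct and follows exactly the paper's route: the paper's own proof is the one-line observation that $\sim$ inherits its equivalence properties from equality of histories together with the fact (recorded in the footnote to the type conditions) that $\Delta$-equality is an equivalence relation on $Var_\Sigma$. Your extra unpacking of why clause \ref{eq1} and clause \ref{eq4} yield reflexivity, symmetry and transitivity of $\Delta$-equality is a sound elaboration of that footnote, not a different argument.
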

\vspace{-1mm}
\begin{proof}
\vspace{-2mm}
This follows immediately from the fact that $\Delta$-equality is an equivalence relation for every type $\Delta$ (together with the fact that equality on histories is an equivalence relation).
\end{proof}

\par\noindent Second, we introduce a \emph{one-step relation $\to$ between pairs} in $H\times Var_\Sigma$, given by putting:
\vspace{-2mm}
$$(h,x) \to (h', x') \,\, \mbox{ iff } \,\, \exists y\in Var_\Sigma\, \exists A\supseteq \mathcal{A}(y)
\mbox{ s.t. } h\to_A h',
(x=y)\in last(h)  \mbox{ \& } (x'=y)\in last(h').\vspace{-2mm}$$
Finally, \emph{\emph{we define our main equivalence relation $\approx$ on pairs}} $(history, variable)$ in $H\times Var_\Sigma$, as the \emph{smallest equivalence relation on ${\mathcal D}$, that includes the relations $\sim$ and $\to$}.
It is useful to have a more concrete characterization of $\approx$. A first such characterization is immediate from the reflexivity and symmetry of $\sim$:
if we denote the converse of $\to\subseteq H\times Var_\Sigma$ by $\leftarrow$, then
it is easy to see that \emph{$\approx$ coincides with the transitive closure $(\sim\cup\to\cup \leftarrow)^+$ of the union of the relations $\sim$, $\to$ and $\leftarrow$}.
To obtain a more useful characterization, note that the relational compositions $\to; \sim$ and $\sim; \to$ coincide with $\to$. This, together with the existence of the non-redundant path from $h$ to $h'$, immediately give us the following:

\smallskip
\begin{lemma} \label{PathLemma} (``Path Lemma")
 Let $(h, x), (h', x')\in  H\times Var_\Sigma$, and let
$h=h_0 \ot h_1 \ot \ldots \ot h_i \to\ldots \to h_n=h'$
be the non-redundant path from $h$ to $h'$. Then the following are equivalent:
\begin{enumerate}
\item $(h,x)\approx(h',x')$;
\item there exist terms $x_0, x_1, \ldots, x_i, \ldots, x_n\in Var_\Sigma$ such that
we have:
$$(h,x)= (h_0, x_0)\leftarrow (h_1, x_1) \leftarrow (h_2, x_2)\leftarrow\cdots \leftarrow
(h_i,x_i)\to \cdots
\to(h_n, x_n)= (h', x')$$
\end{enumerate}
\end{lemma}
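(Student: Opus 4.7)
The plan is to dispatch $(2)\Rightarrow(1)$ instantly from the definition of $\approx$ as the transitive closure of $\sim\cup\to\cup\leftarrow$: the displayed zig-zag is itself a witness. For the converse $(1)\Rightarrow(2)$, I would normalize any witnessing $\approx$-chain at the level of words over the alphabet $\{\sim,\to,\leftarrow\}$ down to the shape $\leftarrow^a\to^b$, and then read off the required tuple of terms.

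The two algebraic reductions I would establish are: (i) $\sim;\to=\to;\sim=\to$ together with its dual for $\leftarrow$ (already noted in the paper), which absorb any interior $\sim$-step into an adjacent arrow; and (ii) $\to;\leftarrow\subseteq\sim$. For (ii), suppose $(h,x_1)\to(h'',y)\leftarrow(h,x_3)$; since each non-root history has a \emph{unique} immediate predecessor together with a \emph{unique} label $A$, both one-step transitions share the same $h\to_A h''$. Picking the witness terms $z_1,z_2\in Var_\Sigma$ supplied by the definition of $\to$ on pairs (so $\mathcal{A}(z_i)\subseteq A$, $(y=z_i)\in last(h'')$, and $(x_1=z_1),(x_3=z_2)\in last(h)$), one obtains $(z_1=z_2)\in last(h'')$; since $(z_1=z_2)\in\Sigma$ by the closure conditions and $\mathcal{A}(z_1=z_2)\subseteq A$, it transfers via $last(h)\sim_A last(h'')$ to $last(h)$, giving $(x_1=x_3)\in last(h)$, i.e.\ $(h,x_1)\sim(h,x_3)$.

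With (i) and (ii) in hand, the normalization is a straightforward induction on chain length: an internal $\to\leftarrow$ subword becomes $\sim$ by (ii) and is then absorbed by (i), strictly shortening the chain. The terminal normal form is $\leftarrow^a\to^b$, whose projection to $H$ ascends $a$ steps and then descends $b$ steps in the tree, thereby tracing the unique non-redundant path $h=h_0\leftarrow\cdots\leftarrow h_i\to\cdots\to h_n=h'$ promised in the statement; the accompanying terms furnish $x_0,\ldots,x_n$. The principal obstacle is (ii), where one must both verify that $(z_1=z_2)$ belongs to $\Sigma$ and carefully track equality formulas across $\sim_A$-related types. A minor subtlety is the degenerate case $h=h'$, in which the normal form is a single $\sim$-step and the endpoint identifications in (2) must be understood up to $\sim$ in $last(h)$ (or, once the path has positive length, absorbed into a neighbouring arrow via (i)).
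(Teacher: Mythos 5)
Your overall strategy is the right one and in fact supplies more detail than the paper itself, which merely records that $\to;\sim \,=\, \sim;\to \,=\, \to$ and declares the lemma ``immediate'' from the existence of the non-redundant path. Your direction $(2)\Rightarrow(1)$ is indeed trivial, and your reduction (ii), $\to;\leftarrow\ \subseteq\ \sim$, is exactly the key normalization step the paper leaves implicit: the argument via uniqueness of a history's immediate predecessor and of its label $A$, the transfer of $(z_1=z_2)$ across $last(h)\sim_A last(h'')$ (legitimate because $(z_1=z_2)\in\Sigma$ by closure and $\mathcal{A}(z_1=z_2)\subseteq A$), and transitivity of $\Delta$-equality is correct. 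Your remark that the endpoint identifications in clause (2) must be read up to $\sim$ is also a genuine and needed observation.

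There is, however, one further reduction you need, and without it your rewriting does not terminate at the non-redundant path. A word of the form $\leftarrow^a\to^b$ is a fixed point of your reductions (i) and (ii), but its projection to $H$ can still be redundant: at the turning point $(g_{a-1},y_{a-1})\leftarrow(g_a,y_a)\to(g_{a+1},y_{a+1})$ the first descent step may retrace the last ascent step, i.e. $g_{a-1}=g_{a+1}$ (in a tree this is the only way the projection of $\leftarrow^a\to^b$ can repeat a node). So you also need the ``peak'' reduction: if $(g,y)\to(h_1,y_1)$ and $(g,y)\to(h_1,y_2)$ with the \emph{same} target history $h_1$, then $(h_1,y_1)\sim(h_1,y_2)$. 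This is proved by the same technique as your (ii) --- the two witnesses $z_1,z_2$ satisfy $(y=z_1),(y=z_2)\in last(g)$, hence $(z_1=z_2)\in last(g)$, which transfers down to $last(h_1)$ along $\sim_A$ and yields $(y_1=y_2)\in last(h_1)$ --- but it is a distinct pattern ($\leftarrow;\to$ with coinciding outer histories, not $\to;\leftarrow$), and your system as stated never fires on it. The case is not exotic: it is precisely what the $n=0$ instance of the lemma requires, where $(h,x)\approx(h,x')$ must collapse to $(h,x)\sim(h,x')$ and the shortest nontrivial witness is $(h,x)\leftarrow(p,y)\to(h,x')$. With this extra reduction added, each rewriting step strictly shortens the chain, the induction goes through, and the proof is complete.
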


\medskip

\par\noindent\textbf{From Quasi-Model to Model} We are first defining our \emph{first-order data model} $\bD=(D, I)$ for the restricted vocabulary $\mathcal{V}_\Sigma$: as announced, the \emph{set of `values'} $D$ is the
quotient
\vspace{-2mm}
$$D\,\, :=\,\, (H\times Var_\Sigma)/\approx \, =\, \{[h,x]: (h,x)\in H\times Var_\Sigma\},\vspace{-2mm}$$
consisting of equivalence classes modulo $\approx$
\vspace{-2mm}
$$[h,x]:=\{(h', x')\in H\times Var_\Sigma: (h,x) \approx (h', x')\}  \, \, \mbox{ (for any given pair $(h,x)\in H\times Var_\Sigma$)}.\vspace{-2mm}$$
The \emph{interpretation function} $I$ will map every constant $c\in C\cap Var_\Sigma$ into
$I(c)= c_\bD  :=\, \, [(\Delta_0), c]=[h,c]$ (for all $h\in H$),
and map all $c\in C- Var_{\Sigma}$ into $I(c):= \uparrow_\bD$; it
will also map
 $n$-ary predicate symbols $P\in Pred_\Sigma$ into $n$-ary relations $I(P)\subseteq D^n$ given by
 \vspace{-2mm}
$$I(P) \, := \, \{([h,x_1], \ldots, [h,x_n])\in D^n: h\in H, \ux=(x_1, \ldots, x_n)\in Var_\Sigma^n \mbox{ s.t. } P \ux\in last(h)\};\vspace{-2mm}$$
and it will map $n$-ary functional symbols $F\in Funct_\Sigma$ into an $n$-ary functions $I(F): D^n\to D$ given by
\vspace{-2mm}
$$I(F) ([h,x_1], \ldots, [h,x_n]) \, :=\, [h, F(x_1, \ldots, x_n)] \, \,\, \, \mbox{ if $F(x_1, \ldots, x_n)\in Var_\Sigma$, and} \vspace{-2mm}$$
$$I(f) ([h,x_1], \ldots, [h,x_n]) \, :=\, \uparrow_\bD \, \,\, \, \mbox{ otherwise.}$$
Next, \emph{our epistemic data model} $\bM= (H, \sim, \bullet(\bullet))$ is given by taking: as set of states, the set $H$ of all histories; the indistinguishability relations $\sima\subseteq H\times H$ as defined above on histories\footnote{Note that this is meant to be a (standard) model, so the group indistinguishability relations $\simA$ are just
the intersections $\bigcap_{a\in A}\sima$, thus coinciding with the general relations $\simA$ introduced above.};
and the basic assignment function $\bullet(\bullet): H\times V_\Sigma \to D$ is given by putting $h(v_i) := [h,v_i]$, for all $h\in H$ and $v\in V_\Sigma$.
\medskip
\begin{lemma} \label{IntLemma} (``Interpretation Lemma")
The interpretation $I$ is well-defined, i.e. we have the following:
\begin{enumerate}
\item $(h,\ux)\approx (h', \uy)$ implies $(h, F(\ux))\approx (h', F(\uy))$;
\item  $(h,\ux)\approx (h', \uy)$ implies that: $(P\ux\,\uz)\in last (h)$ iff $(P\uy\, \uz)\in last (h')$;
\item $I(E)$ is really the identity relation on $D$:
$$([[h,x], [h', x'])\in I(E) \, \mbox{ iff } \, [h,x]=[h', x'] \, \mbox{ iff } \, (h,x)\approx (h',x')$$
\end{enumerate}
\vspace{-1mm}
\end{lemma}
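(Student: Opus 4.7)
All three parts will be proved together, by induction on the length $n$ of the non-redundant path $h = h_0 \leftarrow \cdots \leftarrow h_i \to \cdots \to h_n = h'$ supplied by the Path Lemma (Lemma~\ref{PathLemma}). Since this path depends only on $h$ and $h'$, the hypothesis $(h,\ux)\approx(h',\uy)$ yields, componentwise, a common sequence of histories $h_0,\ldots,h_n$ together with matching chains $x_j^0 = x_j, x_j^1,\ldots,x_j^n = y_j$ of terms in $Var_\Sigma$. At each forward transition $h_k \to_{A^k} h_{k+1}$, the Path Lemma further supplies, for each component $j$, a witness $w_j^k \in Var_\Sigma$ with $\mathcal{A}(w_j^k) \subseteq A^k$, $(x_j^k = w_j^k) \in last(h_k)$ and $(x_j^{k+1} = w_j^k) \in last(h_{k+1})$; backward transitions are handled by the symmetric choice of witnesses.

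For \emph{Part~1}, we take the composite term $F(w_1^k,\ldots,w_m^k)$ as the witness for the step on $F$-terms: its location $\bigcup_j \mathcal{A}(w_j^k)$ lies in $A^k$, and the Functionality clause (clause~5) of the type definition supplies both $(F(x_1^k,\ldots,x_m^k) = F(w_1^k,\ldots,w_m^k)) \in last(h_k)$ and $(F(x_1^{k+1},\ldots,x_m^{k+1}) = F(w_1^k,\ldots,w_m^k)) \in last(h_{k+1})$. Concatenating these one-step $\to$-relations yields $(h, F(\ux)) \approx (h', F(\uy))$. For \emph{Part~2}, the same stepwise construction propagates membership of $P\ux^k\uz$ in $last(h_k)$ to $P\ux^{k+1}\uz$ in $last(h_{k+1})$: the Indiscernability clause (clause~4) applied at $last(h_k)$ moves from $\ux^k$ to the witness tuple $(w_1^k,\ldots,w_m^k)$, and clause~4 applied at $last(h_{k+1})$ then moves to $\ux^{k+1}$.

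For \emph{Part~3}, the inclusion of the diagonal into $I(=)$ is immediate from the Reflexivity clause (clause~3): for any $[h,x]\in D$, $(x=x)\in last(h)$ gives $([h,x],[h,x])\in I(=)$. Conversely, if $([h,x_1],[h,x_2]) \in I(=)$ then by definition there exist a common history $\tilde h$ and terms $\tilde x_i$ with $[\tilde h,\tilde x_i]=[h,x_i]$ and $(\tilde x_1=\tilde x_2)\in last(\tilde h)$; but then $(\tilde h,\tilde x_1)\sim (\tilde h,\tilde x_2) \subseteq \approx$, forcing $[h,x_1]=[h,x_2]$. Note that the formation of $I(=)$ as a subset of $D\times D$ independent of choice of representatives is exactly the $P=(=)$ instance of Part~2, so Parts~2 and~3 are interlocked.

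The main obstacle is ensuring that the compound witnesses $F(w_1^k,\ldots,w_m^k)$ and the auxiliary equalities $F(x_1^k,\ldots,x_m^k)=F(w_1^k,\ldots,w_m^k)$ (and, for Part~2, the atoms $P(w_1^k,\ldots,w_m^k)\uz$) introduced at each intermediate step actually lie in $Var_\Sigma$ and in $\Sigma$, so that the relevant type clauses can be invoked. Since the endpoints $F(\ux), F(\uy) \in Var_\Sigma$ (respectively $P\ux\uz \in \Sigma$) by hypothesis and $\Sigma$ is closed under subterms, the component witnesses $w_j^k$ can always be drawn from $Var_\Sigma$; closure of $\Sigma$ under equalities and predicate atoms formed from $Var_\Sigma$-terms then yields the remaining membership conditions, completing the inductive step.
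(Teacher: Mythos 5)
Your overall strategy --- induction on the length of the non-redundant path, using the closure and type conditions --- is exactly what the paper's one-sentence proof sketch indicates, and your Part~3 is fine. But the way you fill in Parts~1 and~2 has a genuine gap, and it sits precisely at the point you yourself flag as ``the main obstacle'': the membership of the intermediate compound expressions in $Var_\Sigma$ and $\Sigma$. For Part~1 you need, at each step $k$, the equality $F(x_1^k,\ldots,x_m^k)=F(w_1^k,\ldots,w_m^k)$ to be a formula of $\Sigma$ (otherwise it cannot belong to the type $last(h_k)$, and clause~5 does not apply), and you need $(h_k,F(x_1^k,\ldots,x_m^k))$ to be an element of $H\times Var_\Sigma$ at all. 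The closure conditions do not deliver this: $Var_\Sigma$ is only the set of subterms occurring in $\Sigma$, and the closure clause for atoms adds $P\ux$ to $\Sigma$ for $Var_\Sigma$-tuples but introduces no new terms; there is no clause closing $Var_\Sigma$ under application of function symbols (nor could there be, on pain of making $Var_\Sigma$ infinite). Only the endpoints $F(\ux)$ and $F(\uy)$ are guaranteed to lie in $Var_\Sigma$; the intermediate tuples $(x_1^k,\ldots,x_m^k)$ and the witness tuples are produced componentwise and independently by the Path Lemma, so $F(x_1^k,\ldots,x_m^k)$ can easily fail to be a subterm of anything in $\Sigma$. Your closing sentence (``closure of $\Sigma$ under equalities and predicate atoms formed from $Var_\Sigma$-terms then yields the remaining membership conditions'') is therefore where the argument breaks: clause~5 of the type definition is explicitly restricted to $F(\ux),F(\uy)\in Var_\Sigma$ and cannot be invoked for the manufactured witnesses.

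Part~2 has a second, independent gap: after clause~4 gives you $P(w_1^k,\ldots,w_m^k)\,\uz\in last(h_k)$, you still have to move this atom across the edge to $last(h_{k+1})$, a step you do not mention. The only tool for that is the definition of $\sim_{A^k}$ on types, which transfers exactly the formulas $\varphi$ with $\mathcal{A}(\varphi)\subseteq A^k$. The witness components satisfy $\mathcal{A}(w_j^k)\subseteq A^k$, but the fixed side-arguments $\uz$ contribute $\mathcal{A}(\uz)$, which need not be contained in $A^k$, so the atom does not transfer in general. (For empty $\uz$ the transfer does work and should be stated explicitly; with nonempty $\uz$ and $h\neq h'$ the inductive step --- and arguably the statement itself --- requires the extra arguments to be governed by their own $\approx$-hypothesis or restricted to $h=h'$.) To repair Part~1 you would need either a strengthened closure condition on $Var_\Sigma$ or an argument that avoids creating new $F$-terms at intermediate nodes, for instance one that works only at the two endpoints and routes the comparison through the Preservation-of-Values Lemma; as written, the inductive step does not go through.
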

\vspace{-1mm}
\begin{proof}
\vspace{-2mm}
The proofs are by induction on the length of the non-redundant path from $h$ to $h'$, using our conditions on types.\end{proof}

\begin{lemma} \label{Pres-Value}(``Preservation of Values")
If $h\simA h'$ are histories and $x\in Var_\Sigma$ is s.t. $\mathcal{A}(x)\subseteq A$, then $[h,x]=[h',x]$.
\vspace{-1mm}
\end{lemma}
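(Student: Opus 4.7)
\noindent\textbf{Proof plan for Lemma \ref{Pres-Value}.} The approach is to exploit the non-redundant path from $h$ to $h'$ together with the Path Lemma, taking the constant witness $x_i = x$ at every node of the path. The key observation is that the hypothesis $\mathcal{A}(x)\subseteq A$, combined with the characterization of $\simA$ on histories, guarantees that $x$ is an admissible witness at every transition.

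\noindent First, I invoke Lemma \ref{path} on the assumption $h\simA h'$: this yields the non-redundant path
$$h=h_0 \ot_{A^1} h_1 \ot_{A^2} \cdots \ot_{A^i} h_i \to_{A^{i+1}} \cdots \to_{A^n} h_n = h',$$
with the crucial property that $A\subseteq A^k$ for every label $A^k$ on this path. Combined with the hypothesis $\mathcal{A}(x)\subseteq A$, this gives $\mathcal{A}(x)\subseteq A^k$ for all $k=1,\ldots,n$.

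\noindent Next, I propose to apply the Path Lemma (Lemma \ref{PathLemma}) with the uniform choice $x_0=x_1=\cdots=x_n=x$. To verify that this sequence realizes the chain required by the Path Lemma, I check each transition step: at step $k$, I need either $(h_{k-1},x)\leftarrow (h_k,x)$ or $(h_k,x)\to(h_{k-1},x)$ (matching the direction of the history arrow). Unpacking the definition of $\to$ on pairs, it suffices to exhibit some $y\in Var_\Sigma$ with $\mathcal{A}(y)\subseteq A^k$ and such that $(x=y)$ lies in the types $last(h_{k-1})$ and $last(h_k)$. I simply take $y:=x$: the condition $\mathcal{A}(x)\subseteq A^k$ holds by the previous paragraph, and $(x=x)$ belongs to every type by the Reflexivity clause \ref{eq1} in the definition of type. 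Hence the required chain exists, and the Path Lemma yields $(h,x)\approx(h',x)$, i.e.\ $[h,x]=[h',x]$.

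\noindent The argument does not really have a hard step: everything reduces to the observation that $x=x$ is a universally available equality witness inside every type, and that the group labels on the non-redundant path are all above $\mathcal{A}(x)$. An alternative presentation would be a direct induction on the length of the non-redundant path, using Proposition \ref{Access} (Group-Monotonicity of $\simA$ on types) together with the definition of $\to$ at the inductive step; but the Path Lemma packages exactly this induction, making the constant-witness argument the cleanest route. The only mild subtlety to flag is that when the path traverses a backward edge $h_{k-1}\ot_{A^k} h_k$, one must read the pair-relation $\leftarrow$ as the converse of $\to$, and verify the witness condition at $h_k$ rather than at $h_{k-1}$; this is symmetric and causes no difficulty.
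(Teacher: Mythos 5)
Your proof is correct and follows essentially the same route as the paper: both arguments take the non-redundant path from $h$ to $h'$, use Lemma \ref{path} to get $A\subseteq A^k$ for every label, and then form the constant chain $(h_0,x)\leftarrow\cdots\to(h_n,x)$ with witness $y:=x$ (justified by $\mathcal{A}(x)\subseteq A^k$ and the reflexivity clause on types) to conclude $(h,x)\approx(h',x)$. Your write-up merely makes explicit the witness-checking that the paper leaves implicit, and routes the final step through the easy direction of the Path Lemma rather than directly through the definition of $\approx$; this is a presentational difference only.
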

\vspace{-1mm}
\begin{proof}
\vspace{-2mm}
Let
$h=h_0 \ot_{A^1} h_1 \ot_{A^2} \ldots \ot_{A^i} h_{i}\to_{A^{i+1}} \ldots \to_{A^n} h_n=h'$
be the non-redundant path from $h$ to $h'$. Since $h\simA h'$, we know that $A\subseteq A_k$ for all $k$ (by Lemma \ref{path}). Using this together with $\mathcal{A}(x)\subseteq A$ (and the definition of the relation $(h,x)\to (h',x')$ on history-variable pairs), we can see that we have
$(h,x)\leftarrow (h_1, x) \leftarrow\cdots
(h_i, x)\to \cdots
\to (h_n, x)= (h', x)$.
Thus, we have $(h,x)\approx (h', x)$, and hence $[h,x]= [h',x]$.\end{proof}

\begin{lemma} \label{Know-Value}(``Knowledge-of-Value Lemma")
If $(x_A^\varphi\!\!\downarrow)\in last(h)$, then there exists some history $h'\in H$, satisfying:
\begin{enumerate}
\item $h'\simA h$;
\item $\varphi\in last(h')$;
\item $[h'', x] = [h', x]= [h, x_A^\varphi]$ for all $h'' \simA h$ s.t. $\varphi\in last(h'')$.
\end{enumerate}
\vspace{-2mm}
\end{lemma}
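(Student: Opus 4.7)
The plan is to use the quasi-model existence clause $(**)$ to witness the hypothetical value $x_A^\varphi$ by an actual value at some $A$-accessible history, and then leverage $\simA$-preservation of $\Sigma$-formulas on types together with type condition \ref{veracity} to lift a type-level equation $(x=x_A^\varphi)$ into an $\approx$-equivalence of $(history, term)$-pairs.

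First, from $(x_A^\varphi\!\!\downarrow)\in last(h)$ I extract $\langle K_A\rangle\varphi\in last(h)$ via type condition \ref{vac-know}. Applying $(**)$ produces some $\Delta'\in S$ with $last(h)\simA \Delta'$ and $\varphi\in\Delta'$. I then set $h' := (h, A, \Delta')$, which is a legitimate history with $last(h')=\Delta'$ and $h\to_A h'$, so $h\simA h'$ by Lemma \ref{path}; this takes care of items (1) and (2). For the equality $[h,x_A^\varphi]=[h',x]$ in item (3), I aim to exhibit the one-step pair relation $(h, x_A^\varphi)\to (h', x)$ directly, taking the term $x_A^\varphi$ itself as the common witness: $\mathcal{A}(x_A^\varphi)=A$ and $(x_A^\varphi=x_A^\varphi)\in last(h)$ by \ref{eq1}, so the only non-trivial requirement is $(x=x_A^\varphi)\in last(h')$. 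I obtain this by first transferring $(x_A^\varphi\!\!\downarrow)$ along $\simA$ on types (its extended location is $A$) to conclude $(x_A^\varphi\!\!\downarrow)\in last(h')$, and then applying type condition \ref{veracity} at $\Delta'$, using $\varphi\in\Delta'$.

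For the uniformity claim $[h'', x]=[h', x]$ for all $h''\simA h$ with $\varphi\in last(h'')$, the same argument applied to $h''$ yields $(x=x_A^\varphi)\in last(h'')$, giving $(h',x)\sim(h',x_A^\varphi)$ and $(h'',x)\sim(h'',x_A^\varphi)$. I then apply the Preservation-of-Values Lemma \ref{Pres-Value} to the term $x_A^\varphi$, whose extended location is $A$: since $h'\simA h\simA h''$, transitivity of $\simA$ combined with Lemma \ref{Pres-Value} gives $[h',x_A^\varphi]=[h,x_A^\varphi]=[h'',x_A^\varphi]$, and chaining these with the $\sim$-identifications at $h'$ and $h''$ yields all three required equalities.

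The main obstacle is the bookkeeping of closure membership: every formula I want to transfer along $\simA$ on types, or feed into a type condition, must actually lie in $\Sigma$. In particular I need $(x_A^\varphi\!\!\downarrow)$, $\langle K_A\rangle\varphi$, $K_A^\varphi x$ and $(x=x_A^\varphi)$ all in $\Sigma$ whenever $x_A^\varphi\in Var_\Sigma$; each follows from the explicit closure clauses stated just before the definition of types, but overlooking one would break the argument. Once these membership checks are in place, the rest is a careful orchestration of the $\sim$ and $\to$ relations on pairs and their induced equivalence $\approx$.
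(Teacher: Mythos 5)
Your proposal is correct and follows essentially the same route as the paper's own proof: condition \ref{vac-know} plus quasi-model property (**) to produce $h'$, transfer of $(x_A^\varphi\!\!\downarrow)$ along $\simA$ followed by condition \ref{veracity} to get $(x=x_A^\varphi)$ at $last(h')$ and $last(h'')$, and Lemma \ref{Pres-Value} applied to $x_A^\varphi$ to tie everything to $[h,x_A^\varphi]$. Your extra step exhibiting the one-step relation $(h,x_A^\varphi)\to(h',x)$ is harmless but redundant, since your general argument for arbitrary $h''\simA h$ with $\varphi\in last(h'')$ already covers $h'$ itself.
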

\begin{proof}\vspace{-2mm}
From  $(x_A^\varphi\!\!\downarrow)\in last (h)$, we infer that $\langle K_A \rangle \varphi\in last(h)$ (by condition \ref{vac-know} in the definition of types), so by the Diamond Lemma there exists some $h'\simA h$ s.t. $\varphi\in last(h')$.

To prove the third claim of the lemma, let $h''\in H$ be s.t. $h''\simA h$ and $\varphi\in last(h'')$. From  $(x_A^\varphi\!\!\downarrow)\in last (h)$ and $h'', h'\simA h$, we obtain that  $(x_A^\varphi\!\!\downarrow)\in last (h')\cap last(h'')$ (by the definition of $\simA$ on histories and the fact that $\mathcal{A}(x_A^\varphi\!\!\downarrow)=A$). From this together with $\varphi \in last(h')\cap last(h'')$, we obtain that $(x=x_A^\varphi)\in last(h')\cap last(h'')$ (by condition \ref{veracity} on types), hence $(h',x)\approx (h', x_A^\varphi)$ and $(h'',x)\approx (h'', x_A^\varphi)$. On the other hand, by Lemma \ref{Pres-Value} and the fact that $\mathcal{A}(x_A^\varphi)=A$, we have $(h', x_A^\varphi)\approx (h,  x_A^\varphi)\approx (h', x_A^\varphi)$. Putting these together and using the transitivity and symmetry of the relation $\approx$, we obtain that $(h'', x)\approx (h', x)\approx (h, x_A^\varphi)$, i.e. $[h'', x]=[h', x]=]h, x_A^\varphi]$, as desired.\end{proof}

\begin{lemma} \label{Unknown-Value}(``Unknown Proper Value Lemma")
If $(x_A^\varphi\uparrow), \langle K_A\rangle (\varphi \wedge x\!\!\downarrow)\in last(h)$, then there exist histories $h'\simA h''\simA h$, with $\varphi\in last(h')\cap last(h'')$ and $[h', x]\neq [h,x]$.
\vspace{-1mm}
\end{lemma}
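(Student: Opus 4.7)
The plan is to construct two $\simA$-accessible $\varphi$-witnesses whose $x$-values are already distinct in the canonical quotient, and then conclude by a pigeonhole step that at least one of them has $x$-value different from $[h,x]$. First, the quasi-model existence property applied to $\langle K_A\rangle(\varphi\wedge x\!\!\downarrow)\in last(h)$ yields $\Delta_1\in S$ with $last(h)\simA\Delta_1$ and $\varphi,x\!\!\downarrow\in\Delta_1$; set $h_1:=(h,A,\Delta_1)$, so $h_1\simA h$ and $\varphi,x\!\!\downarrow\in last(h_1)$. Since $\mathcal{A}(x_A^\varphi\!\!\uparrow)=A$, the $\simA$-invariance of types (Proposition \ref{Access}) transfers $(x_A^\varphi\!\!\uparrow)$ to $last(h_1)$, so the preconditions of the type-clause (\ref{diff}) are available there.

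The second witness will be a one-step extension $h_2:=(h_1,A,\Delta_2)$, with $\Delta_2$ chosen by splitting on whether $x$ admits an ``$A$-local name'' at $h_1$. If there is some $y^*\in Var_\Sigma$ with $\mathcal{A}(y^*)\subseteq A$ and $(x=y^*)\in last(h_1)$, then clause (\ref{diff}) applied with $y:=y^*$ supplies $\langle K_A\rangle(\varphi\wedge x\neq y^*)\in last(h_1)$, and the quasi-model property yields $\Delta_2$ with $\varphi,(x\neq y^*)\in\Delta_2$; I then chain $[h_1,x]=[h_1,y^*]=[h_2,y^*]\neq[h_2,x]$, using the Interpretation Lemma (Lemma \ref{IntLemma}) at $h_1$, Preservation of Values (Lemma \ref{Pres-Value}) for $y^*$ (since $\mathcal{A}(y^*)\subseteq A$ and $h_2\simA h_1$), and the Interpretation Lemma again at $h_2$. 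Otherwise, apply (\ref{diff}) with $y:=\uparrow$ to obtain $\Delta_2$ with $\varphi,x\!\!\downarrow\in\Delta_2$; the Path Lemma (Lemma \ref{PathLemma}) applied to the single-edge non-redundant path $h_1\to_A h_2$ then forces any witness of $(h_1,x)\approx(h_2,x)$ to exhibit some $y\in Var_\Sigma$ with $\mathcal{A}(y)\subseteq A$ and $(x=y)\in last(h_1)\cap last(h_2)$, which the case hypothesis forbids already at $last(h_1)$. Hence $[h_1,x]\neq[h_2,x]$ in both branches.

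In either case we obtain $h_1,h_2\simA h$ with $\varphi\in last(h_1)\cap last(h_2)$ and $[h_1,x]\neq[h_2,x]$. If both $[h_1,x]$ and $[h_2,x]$ coincided with $[h,x]$ they would agree with each other, contradicting this inequality; so at least one of them differs from $[h,x]$. Labelling that one $h'$ and the other $h''$ produces the required pair.

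The main obstacle will be the second case: showing that the \emph{absence} of any $A$-local syntactic representative of $x$ at $h_1$ really prevents the quotient relation $\approx$ from identifying $[h_1,x]$ with $[h_2,x]$. Taking $h_2$ as a one-step extension of $h_1$ is what keeps this argument local, since the Path Lemma then reduces $\approx$-equivalence across the unique edge $h_1\to_A h_2$ to the existence of a common $A$-local name, exactly what the case hypothesis denies.
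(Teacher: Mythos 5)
Your proposal is correct and follows essentially the same route as the paper's proof: the same first witness from the quasi-model existence property, the same case split on whether $x$ has an $A$-local name $y$ in $last(h_1)$ (with $y:=\uparrow$ otherwise) feeding into type-clause~\ref{diff}, and the same use of the Path Lemma across the single edge to rule out $(h_1,x)\approx(h_2,x)$. Note that what you prove, $[h_1,x]\neq[h_2,x]$, is in fact the version of the conclusion actually used later in the Truth Lemma (the statement's ``$[h',x]\neq[h,x]$'' appears to be a typo for ``$[h',x]\neq[h'',x]$''), so your extra pigeonhole step is harmless but not needed.
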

\begin{proof} \vspace{-1mm}
From $\langle K_A\rangle (\varphi \wedge x\!\!\downarrow)\in last(h)$, we infer by the Diamond Lemma that there exists $h'\simA h$ with $\varphi, x\!\!\downarrow\in last(h')$. If we put $\Delta:=last(h')$, we have $last(h)\simA \Delta$ and  $\varphi, x\!\!\downarrow\in \Delta$, and also $x_A^\varphi\!\!\uparrow\in \Delta$ (by the definition of $\simA$ on histories and the fact that $x_A^\varphi\!\!\uparrow\in last(h)\simA \Delta$, together with the fact that $\mathcal{A}(x_A^\varphi\!\!\uparrow)=A$). Thus, the assumptions in clause \ref{diff} on quasi-models are satisfied, except that we need to find some suitable $y$ to apply clause \ref{diff}.

Let us choose $y$ to be some randomly chosen variable $y\in Var_\Sigma$ s.t. $(x=y)\in \Delta$ and $\mathcal{y}\subseteq A$, if such a variable exists (Case 1); while otherwise (Case 2), we let $y:=\uparrow$. In both cases, we have $\mathcal{A}(y)\subseteq A$. Applying now clause \ref{diff} on types (to $\Delta$ and $y$), we must have that $\langle K_A\rangle (\varphi\wedge x\neq y)\in \Delta$. By condition (**) in the definition of pseudo-models, there exists some $\Delta'\in S$ with $\Delta'\simA \Delta$ and $\varphi, (x\neq y)\in \Delta'$.
Take now the history $h'':=(h',A, \Delta')=(h, A, \Delta, A, \Delta')$. It is clear that $h\simA h'\simA h''$ and that $\varphi\in \Delta\cap \Delta'= last(h')\cap last(h'')$. So, to prove our Lemma, we only need to check that $(h', x)\not\approx (h'',x)$.

For this, assume towards a contradiction that we have $(h', x)\approx (h'',x)$. Using the characterization of $\approx$ in the Path Lemma \ref{PathLemma} and the fact that the non-redundant path from $h'$ to $h''$ is $(h', \to^A, h'')$, we infer that there exists $y'\in Var_\Sigma$ with $\mathcal{A}(y')\subseteq A$, $(x=y')\in last(h')=\Delta$ and $(x=y')\in last (h'')=\Delta'$. The fact that we have  $(x=y')\in\Delta$ means that we are in Case 1 above, so that our previously chosen $y$ has the property that $(x=y)\in \Delta$. Using the fact
that $\Delta$-equality is an equivalence relations on variables in $Var_\Sigma$, we obtain that $(y'=y)\in \Delta$. Since $\mathcal{A}(y), \mathcal{A}(y')\subseteq A$, we have
$\mathcal{A}(y'=y)\subseteq A$, and so from  $(y'=y)\in \Delta$ and  $\Delta\simA \Delta'$ we infer that $(y'=y)\in \Delta'$ (by the definition  of $\simA$ on histories). Putting this together with the fact that  $(x=y')\in \Delta'$ and using again the $=$-transitivity condition on types, we conclude that $(x=y)\in \Delta'$. But this contradicts the fact that $(x\neq y)\in \Delta'$ (given condition \ref{neg} on types).\end{proof}

\begin{lemma} \label{Truth}(``Truth Lemma")
For every formula $\varphi\in \Sigma$ and every variable $x\in Var_\Sigma$, the following hold for all histories $h\in H$:
\begin{enumerate}
\item $h\models \varphi$ iff $\varphi\in last(h)$;
\item $h(x)=[h,x]$.
\end{enumerate}
\vspace{-2mm}
\end{lemma}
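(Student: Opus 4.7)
The plan is to prove both clauses simultaneously by induction on the structural complexity of formulas and terms in $\Sigma\cup Var_\Sigma$. The base cases for terms are basic variables (immediate from the definition of $\bullet(\bullet)$) and constants (using $\mathcal{A}(c)=\emptyset$ together with Lemma~\ref{Pres-Value}, which forces $[h,c]=[h',c]$ for all $h,h'$ since every pair of histories is $\sim_\emptyset$-related). The inductive cases for functional terms $F(\ux)$ and disjunctive terms $x|_\varphi y$ follow mechanically from the inductive hypothesis (IH) using the definition of $I(F)$ (together with the Interpretation Lemma) and type condition~6 respectively. The atomic and Boolean cases for formulas are similarly routine: the atomic case combines the IH on each subterm $x_i$ with clause~2 of the Interpretation Lemma to match the definition of $I(P)$, and the Boolean cases use type conditions~1 and~2.

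The interesting cases are $K_A\varphi$, $C_\gA^\theta\varphi$, and $x_A^\varphi$. For $K_A\varphi$, the left-to-right direction combines Lemma~\ref{equiv} (transporting $\simA$ from histories to their last types) with Proposition~\ref{Access2} to propagate $\varphi$ along $\simA$-accessible histories, then invokes the IH. The converse direction contraposes: from $\langle K_A\rangle\neg\varphi\in last(h)$ we build a refuter $h' := (h, A, \Delta')$ using quasi-model condition~$(**)$, observing that $h\simA h'$ by definition of $\simA$ on histories. For $C_\gA^\theta\varphi$, the forward direction proceeds by induction on chain length, using type condition~13 together with Lemma~\ref{equiv} and Proposition~\ref{Access2} to push $C_\gA^\theta\varphi$ through each step of the chain (extracting $\varphi$ at the endpoint via type condition~13 again, and using the IH on $\theta$ to activate the $K_A^\theta$ implication at each intermediate node). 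The converse applies quasi-model condition~$(***)$ to obtain a refuting chain of types $\Delta_0\sim_{A^1}\cdots\sim_{A^n}\Delta_n$, and lifts it to a chain of histories by iteratively setting $h_k := (h_{k-1}, A^k, \Delta_k)$, using the IH on $\theta$ to verify that the intermediate histories satisfy $\theta$ and the IH on $\varphi$ at the endpoint.

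The main obstacle is the case $x_A^\varphi$. If $(x_A^\varphi\!\!\downarrow)\in last(h)$, the Knowledge-of-Value Lemma supplies a witness $h'\simA h$ with $\varphi\in last(h')$, and its third clause combined with the IH shows that every $\simA$-related history satisfying $\varphi$ maps $x$ to the same class $[h, x_A^\varphi]$, which is exactly the unique value demanded by the semantics. If instead $(x_A^\varphi\!\!\uparrow)\in last(h)$, then $(x_A^\varphi = \uparrow)\in last(h)$ gives $[h, x_A^\varphi] = [h,\uparrow] = \uparrow_\bD$, so we must show $h(x_A^\varphi) = \uparrow_\bD$ on the semantic side. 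We split on whether $\langle K_A\rangle(\varphi\wedge x\!\!\downarrow)\in last(h)$: if so, the Unknown Proper Value Lemma produces two $\simA$-related $\varphi$-worlds with distinct $[\cdot, x]$-classes, so by the IH the values of $x$ at these worlds disagree and no unique semantic value exists; otherwise, either no $\simA$-accessible $\varphi$-world exists at all (so by the IH on $\varphi$ the semantic clause defaults to $\uparrow_\bD$) or every such world forces $(x=\uparrow)$, which via the IH and $[h',\uparrow]=\uparrow_\bD$ yields the unique value $\uparrow_\bD$. Coordinating this three-way case analysis with the syntactic closure conditions, and tracking precisely when the IH is applicable, is where the bulk of the work lies.
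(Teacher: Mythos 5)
Your proposal matches the paper's proof essentially step for step: the same simultaneous induction on a combined complexity order, the same case analysis (including the three-way split for the conditional term $x_A^\varphi$), and the same supporting lemmas (the Interpretation Lemma, Preservation of Values, Lemma~\ref{equiv}, Proposition~\ref{Access2}, the Knowledge-of-Value and Unknown Proper Value Lemmas, and quasi-model conditions (**) and (***)). The one point you leave implicit --- ``tracking precisely when the IH is applicable'' --- is resolved in the paper by an explicitly \emph{non-standard} well-founded order in which the term $x_A^\varphi$ is declared more complex than the formulas $\langle K_A\rangle(\varphi\wedge x\!\!\downarrow)$ and $K_A(\varphi\to x=\uparrow)$, which is exactly what licenses the induction hypothesis in your undefined-value subcases.
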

\begin{proof}
\vspace{-2mm}
We need a \emph{non-standard notion of ``complexity" order on both formulas $\varphi\in \Sigma$ and terms $x\in Var_\Sigma$}, according to which: (a) a formula is more complex than its subformulas and than the terms occurring in it; (b) a term is more complex than its subterms (variables); (c) the term $x_A^\varphi$ is more complex than the formulas $\langle K_A \rangle (\varphi\wedge x\!\! \downarrow), K_A(\varphi\to x=\uparrow)$.
It is easy to see that \emph{there indeed exists a well-founded partial order on $\Sigma$ satisfying these properties}.

\smallskip

We now prove the two claims \textit{simultaneously by induction on our non-standard ``complexity" order}.

\smallskip

\textbf{\emph{Proof of the Atomic Cases for Claims 1 and 2}}:
\smallskip

(i) \emph{Propositional Atoms}: $\varphi=P\ux$, with $\ux=(x_1, \ldots, x_n)$. We have the following equivalences:

\noindent $h \models P\ux$ iff $h(\ux)\in I(P)$ iff (by induction hypothesis for claim (2)) $([h,x_1], \ldots, [h, x_n])\in I(P)$ iff
(by definition of $I(P)$) $\exists h'\in H \mbox{ s.t. }  (\, (h,\ux)\approx (h', \ux) \& (P \ux)\in last(h') \, )$ iff
(by Lemma \ref{IntLemma}) $(P \ux)\in last(h)$.

(ii) \emph{Constants}: By definition, $w^h(c)=c_\bD=[h,c]$.

(iii) \emph{Atomic variables}: $x:=v_i\in V_\Sigma$ is a basic variable. This is trivial: by definition, $w^h (v_i)=[h,v_i]$.

\smallskip

\textbf{\emph{Proof of the Inductive Cases for Claims 1 and 2}}:
\smallskip

(iv) \emph{Boolean Cases}: $\neg\varphi, \varphi\wedge\psi$. These is trivial, using conditions \ref{neg} and \ref{conj} on types.

(v) \emph{$K_A$-modal Case}: $(K_A \varphi)\in \Sigma$. From \emph{left-to-right}: assume (towards a contradiction) that we have $h\models K_A\varphi$ but $(K_A\varphi)\not\in last(h)$. Then, by the closure condition \ref{neg} on types, we have $(\langle K_A\rangle\!\sim\!\varphi)\in last(h)$, and so by the property (**) of quasi-models, there exists some type $\Delta'\in S$ s.t. $last(h)\simA \Delta'$ and $(\sim\!\varphi)\in\Delta'$. Consider the history $h':=(h, A, \Delta')$: this is a well-defined history in $H$, satisfying $h \simA h'$ and $last(h')=\Delta'$. From $h\models K_A\varphi$ we infer that $h'\models \varphi$, and so by the induction hypothesis we have $\varphi\in last(h')=\Delta'$. But this contradicts the fact that $(\sim\varphi)\in\Delta'$, given condition \ref{neg} on types.
\\\
For the \emph{right-to-left} direction: we assume that $(K_A\varphi)\in last (h)$, and we have to prove that $h\models K_A\varphi$. For this, let $h'\in H$ be s.t. $h\simA h'$, and we need to show that $h'\models \varphi$. From $h\simA h'$ we obtain that $last(h)\simA last(h')$ (by Lemma \ref{equiv}), which together with $h\models K_A\varphi$ gives us $\varphi\in last(h')$ (by Proposition \ref{Access2}). Applying the induction hypothesis, we conclude that $h'\models \varphi$, as desired.

(vi) \emph{$C_{\gA}$-modal Case}: $(C_{\gA}^\theta \varphi)\in \Sigma$. From \emph{left-to-right}: assume (towards a contradiction) that we have $h\models C_{\gA}^\theta\varphi$ but $(C_{\gA}^\theta\varphi)\not\in last(h)$. Then, by the closure condition \ref{neg} on types, we have $(\langle C_{\gA}^\theta\rangle\!\sim\!\varphi)\in last(h)$, and so by the property (**) of quasi-models, there exists some finite chain of types
$\Delta_0\sim_{A^1}\Delta_1 \ldots \sim_{A^n}\Delta_n$, with $A^1, \ldots, A^n\in \gA$, $\Delta_0=last(h)$, $\theta\in\Delta_1, \ldots, \Delta_n\in S$ and $(\sim\!\varphi)\in \Delta_n$. Consider the histories $h_1=(h, A^1, \Delta_1)$, $h_2=(h_1, A^2, \Delta_2)=(h, A^1, \Delta_1, A^2, \Delta_2)$, $\ldots$, $h_n:=(h, A_1, \Delta_1, \ldots, A_n, \Delta_n)$: these are well-defined histories in $H$, satisfying $h=h_0 \sim_{A^1} h_1 \sim_{A^2} h_2\cdots \sim_{A^n} h_n$ and $\theta\in last(h_1), \ldots, last(h_n)$. By the induction hypothesis, we have $h_1, h_2, \ldots, h_n\models \theta$. Using this and the assumption that $h\models C_{\gA}^\theta\varphi$ (as well as the semantics of $C_{\gA}$), we infer that $h_n\models\varphi$. Applying again the induction hypothesis, we have $\varphi\in last(h_n)=\Delta_n$, but this contradicts the fact that $(\sim\!\varphi)\in\Delta$ (given the
consistency condition \ref{neg} on types).
\\\
For the \emph{right-to-left} direction: we assume that $(C_{\gA}^\theta\varphi)\in last (h)$, and we have to prove that $h\models C_{\gA}^\theta\varphi$. For this, let $h=h_0\sim_{A^1} h_1 \cdots h_{n-1}\sim_{A^n} h_n$ be any chain of histories (of any length $n\geq 0$) with $A^1, \ldots, A^n\gA$ and $h_1, \ldots, h_n\theta$, and we need to show that $h_n\models \varphi$. For this,
we first prove the auxiliary claim that $(C_{gA}^\theta\varphi)\in last(h_n)$, by induction on the length $n$ of the chain. For $n=0$, we have $h_0=h$, so we already have $(C_{\gA}^\theta\varphi)\in last (h)=last(h_0)$. For the inductive step for $n\geq 1$, we can assume by induction that
$(C_{\gA}^\theta\varphi)\in last(h_{n-1})$. Using condition \ref{common} on types, we infer that $(K_{A^n} C_{\gA}^\theta\varphi) \in last(h_{n-1})$. Using the fact that $h_{n-1}\sim_{A^n} h_n$ and Proposition \ref{Access2}, we obtain that $(C_{\gA}^\theta\varphi) \in last(h_{n})$, so we have established our auxiliary claim. From this, we infer that $\varphi\in last(h_n)$ (using again condition \ref{common} on types), and so the induction hypothesis for Claim 1, we conclude that $h_n\models \varphi$, as desired.

(vii) \emph{Terms defined by cases} $x|_\varphi y$. Given $(x|_\varphi y)\in\Sigma$, we have $\varphi\in\Sigma$, so we have that either $\varphi \in last(h)$, or else $(\sim\varphi)\in last(h)$. In the first case, we have $(x|_\varphi y=x)\in last(h)$ (by condition \ref{cases} on types), hence $[h, x|_\varphi]=[h,x]$; and on the other hand, $\varphi \in last(h)$ yields $h\models\varphi$ (by the induction hypothesis for Claim 1), so by definition we have $h(x|_\varphi y)=h(x)$, and by the induction hypothesis for Claim 2 we have $h(x)=[h,x]$, thus obtaining
$h(x|_\varphi y)= h(x)= [h,x]=h(x|_\varphi)$, as desired. The second case is similar: from $(\sim\varphi)\in last(h)$ we get
$(x|_\varphi y=y)\in last(h)$ (by condition \ref{cases} on types), hence $[h, x|_\varphi]=[h,y]$; and on the other hand, $(\sim\varphi) \in last(h)$ implies $\varphi\not\in last(h)$ (by the consistency of types), which by the induction hypothesis for Claim 1 yields $h\not\models\varphi$, so by definition we have $h(x|_\varphi y)=h(y)$, and by the induction hypothesis for Claim 2 we have $h(y)=[h,y]$, thus obtaining $h(x|_\varphi y)= h(y)= [h,y]=h(x|_\varphi)$, as desired.

(viii) \emph{Functional terms} $F(\ux)\in Var_{\Sigma}$ for some $\ux=(x_1,\ldots, x_n)$. We assume by the induction hypothesis that $h(\ux)=[h,\ux]$, and using the semantics of functional terms and the definition of $I(F)$ in our history model, we obtain
$h(F(\ux))= (I(F)) (h(\ux))= (I(F)) [h,\ux]= [h, F(\ux)]$, as desired.
%

(ix)  \emph{Conditional terms} $x_A^\varphi\in Var_{\Sigma}$. We distinguish three subcases.

\emph{Case 1}: $x_A^\varphi\!\!\uparrow, \langle K_A\rangle (\varphi  \wedge x\!\!\downarrow)\in last(h)$. From $(x_A^\varphi=\bot)\in last(h)$, we get that $(h, x_A^\varphi)\sim (h,\bot)$, hence  $[h, x_A^\varphi] =[h,\bot]=\bot$. So, to prove that $h(x_A^\varphi)=[h,x_A^\varphi]$, we need to show that $h(x_A^\varphi)=\bot$. For this, we apply Lemma \ref{Unknown-Value} and the assumptions that $x_A^\varphi\!\!\uparrow, \langle K_A\rangle (\varphi  \wedge x\!\!\downarrow) \in last(h)$, to obtain histories $h', h''$ with $h\simA h'\simA h''$,
$\varphi\in last(h')\cap last(h'')$ and $[h',x]\neq [h'',x]$. By the induction hypothesis for both Claims 1 and 2, we get that $h'\models \varphi$ and $h''\models\varphi$ and $h'(x)= [h',x]\neq [h'',x]=h''(x)$. By definition, the existence of histories $h', h''$ with these properties implies that $h(x_A^\varphi)=\uparrow$, as desired.

\emph{Case 2}: $x_A^\varphi\!\!\uparrow\in last(h)$, but $\langle K_A\rangle (\varphi  \wedge x\!\!\downarrow)\not\in last(h)$. Since $\langle K_A\rangle (\varphi  \wedge x\!\!\downarrow)= \langle K_A\rangle (\varphi  \wedge x\neq \uparrow)\in \Sigma$ (because of the closure conditions on $\Sigma$ and the fact that $x_A^\varphi\!\!\uparrow\in last(h)\subseteq \Sigma$), we must have that $K_A(\varphi\to x=\uparrow)\in last(h)$ (since $last(h)$ is a type). By condition \ref{hypo-eq} on types, we have $(x_A^\varphi=\uparrow_A^\varphi)\in last(h)$, which together with condition \ref{undefined} on types gives us $(x_A^\varphi=\uparrow)\in last(h)$, i.e. $(h, x_A^\varphi)\sim (h,\uparrow)$, so $[h, x_A^\varphi]=[h, \uparrow]=\uparrow_\bD$. To prove that $[h, x_A^\varphi]=h(x_A^\varphi)$ as desired, we only need to check now that $h(x_A^\varphi)=\uparrow_\bD$.
For this, let $h' \in H$ s.t. $h\simA h'$ and $h'\models\varphi$, and we need to show $h'(x)=\uparrow_D$.
By the induction hypothesis for Claim 1 and the fact that $K_A(\varphi\to x=\uparrow)\in last(h)$, we have $h\models K_A(\varphi\to x=\uparrow)$, hence $h'\models (\varphi\to x=\uparrow)$, which together with $h'\models\varphi$ gives us $h'\models (x=\uparrow)$. Applying again the induction hypothesis for Claim 1, we have $(x=\uparrow)\in last(h)$, hence $(h,x)\sim (h,\uparrow)$ and so $[h,x]=[h,\uparrow]=\uparrow_D$, as desired.

\emph{Case 3}: $(x_A^\varphi\!\!\downarrow)\in last(h)$. Using Lemma \ref{Know-Value} and the Induction Hypothesis for Claims 1 and 2, we infer the existence of some history $h'\in H$, s.t: $h'\simA h$; $h'\models \varphi$; and $h''(x)= h'(x)=[h, x_A^\varphi]$ for all $h'' \simA h$ s.t. $h''\models \varphi$.
This establishes the existence of a unique value $d:=[h, x_A^\varphi]\in D^\bot$ satisfying the conditions in the first semantic clause for $h(x_A^\varphi)$, so by definition we have $h(x_A^\varphi)=d = [h, x_A^\varphi]$, as desired.
\end{proof}

\medskip

\par\noindent\emph{\textbf{Proof of  Proposition \ref{Satisfiability}}}: If $S$ is a quasi-model for $\varphi_0$, then by applying claim 1 of the Truth Lemma \ref{Truth} to $\varphi_0$ and to the history $h_0:=(\Delta_0)$, we conclude that $h_0\models \varphi_0$ in our history-based data model $\bM$ above.

\section{Completeness and Decidability for the dynamic logic $DLKV$}\label{B}

In this section we prove Theorem \ref{CompDLKV}. We first establish our co-expressivity result: \emph{$DLKV$ and $LKV$ are provably co-expressive}. For this, we need a few preliminary notions and results.

\smallskip\par\noindent\textbf{Expressions and subexpression-complexity} An \emph{expression} $\varepsilon$ of $DLKV$ is any formula $\varphi$, term $x$ or event $e$ of $DLKV$. The \emph{sub-expression complexity order} $>$ is the natural extension to expressions of the usual notion of subformula-complexity; more precisely, $>$ is the least transitive relation on expressions, s.t.: (1) every formula is $>$ all its subformulas and all terms occurring in it; (2) every term is $>$ all its subterms; (3) every event $e=\Phi/\sigma$ is $>$ than all formulas $\varphi\in \Phi$ and all postconditions $\sigma(v)$ (for all $v\in V$); and (4) the formula $[e] \varphi$ is $>e$.

It is easy to see that \emph{$<$ is a well-founded partial order}.

\medskip\par\noindent\textbf{Reducible expressions}
A formula $\theta$ in $DLKV$ is \emph{reducible} if it is provably equivalent to a 'static' formula; i.e., if there exists some formula $\theta'$ in the static fragment $LKV$ s.t.
$\vdash \, \theta \leftrightarrow \theta'$
is provable in $\mathbf{DLKV}$. A term $x\in Var_{DLKV}$ is \emph{reducible} if it is provably equal to a static term; i.e., if there exists some $x'\in Var_{LKV}$ s.t. $\vdash \, x=x'$ is provable in $\mathbf{DLKV}$. Finally, an event $e=\Phi/\sigma$ is \emph{reducible} if all formulas $\varphi\in \Phi$ and all its post-conditions $\sigma(v)$ are reducible (for all $v\in V$).

\smallskip
\par\noindent\textbf{Observation}. \emph{If  $e=\Phi/\sigma$ is a reducible event, then its precondition $pre_e:=\bigwedge\Phi$ is a reducible formula}.

\begin{lemma}\label{conj-red}  The following are provable in $\mathbf{DLKV}$:
\begin{enumerate}
\item \emph{(Derived Reduction Axiom for Conjunction)}
$\vdash \,[e] (\varphi\wedge \psi)  \leftrightarrow  \left([e]\varphi \wedge [e]\psi\right)$
\item \emph{(Existential Dynamic Modality)}
 $\vdash \langle e\rangle \theta \leftrightarrow \left(pre_e\wedge [e]\theta \right)$
 \end{enumerate}
\end{lemma}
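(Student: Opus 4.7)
The plan is to treat the two items essentially independently, since both follow by short manipulations from axioms already in the system.

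For item (1), I would proceed purely by normal-modal-logic reasoning from the axiom $[e]$-Distribution together with $[e]$-Necessitation. For the left-to-right direction, I start from the propositional tautologies $\varphi \wedge \psi \to \varphi$ and $\varphi \wedge \psi \to \psi$, apply $[e]$-Necessitation to each, and then apply $[e]$-Distribution to pull the implication out, yielding $[e](\varphi\wedge\psi) \to [e]\varphi$ and $[e](\varphi\wedge\psi)\to[e]\psi$; combining gives $[e](\varphi\wedge\psi) \to ([e]\varphi\wedge [e]\psi)$. For the right-to-left direction, start from the tautology $\varphi \to (\psi \to \varphi\wedge\psi)$, apply Necessitation to obtain $[e](\varphi \to (\psi\to\varphi\wedge\psi))$, and then apply $[e]$-Distribution twice (first extracting $[e]\varphi \to [e](\psi\to\varphi\wedge\psi)$, then extracting $[e](\psi\to\varphi\wedge\psi) \to ([e]\psi \to [e](\varphi\wedge\psi))$); chaining these and reshaping propositionally gives $([e]\varphi\wedge[e]\psi) \to [e](\varphi\wedge\psi)$.

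For item (2), the argument is even shorter. By the definitional abbreviation $\langle e\rangle \theta := \neg [e]\neg\theta$, we have $\langle e\rangle \theta \leftrightarrow \neg [e]\neg\theta$. The Partial Functionality axiom states $[e]\neg\theta \leftrightarrow (pre_e \to \neg[e]\theta)$, so by propositional substitution of equivalents $\neg[e]\neg\theta \leftrightarrow \neg(pre_e \to \neg[e]\theta)$. Finally, classical propositional logic gives $\neg(pre_e \to \neg[e]\theta) \leftrightarrow (pre_e \wedge [e]\theta)$. Chaining these equivalences yields $\langle e\rangle \theta \leftrightarrow (pre_e \wedge [e]\theta)$.

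Neither step poses a real obstacle: both are bookkeeping applications of axioms already in group (VI) of Table \ref{tb0}. The only thing to be slightly careful about is that (2) depends on the abbreviation defining $\langle e\rangle$, which was introduced earlier in the paper, and that in (1) one should present the Necessitation-plus-Distribution deductions explicitly so the lemma can be cited in later reductions (e.g.\ when iterating reduction axioms to push $[e]$ past conjunctions arising from expanding $K_A^\theta$, $K_A^\theta x$ or the finite conjunctions in the abbreviations for $K_A \overline{x}$ and $C_\gA x$).
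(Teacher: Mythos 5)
Your proof is correct and follows the same route as the paper: item (1) is the standard normal-modal-logic derivation from $[e]$-Necessitation and $[e]$-Distribution, and item (2) unfolds the abbreviation $\langle e\rangle\theta:=\neg[e]\neg\theta$ and applies the Partial Functionality axiom plus propositional reasoning. You merely spell out the details that the paper leaves implicit.
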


\begin{proof} The first is a standard consequence of $[e]$-Necessitation and $[e]$-Distributivity, that holds for all normal modalities. The second follows immediately from the Partial Functionality reduction axiom, together with the definition of $\langle e\rangle \theta$ as an abbreviation for $\neg [e]\neg\theta$.
\end{proof}

\begin{lemma}\label{re} \emph{(Replacement of Equivalents)}
Suppose that $\vdash \varphi \leftrightarrow \varphi'$, $\vdash \theta \leftrightarrow \theta'$, $\vdash x=x'$, $\vdash y=y'$ and $\vdash x_i =x'_i$ (for all $i=1,n$) are provable in $\mathbf{DLKV}$. Then the following are also provable in $\mathbf{DLKV}$:
\begin{enumerate}
\item $\vdash x|_\varphi y = x'|_{\varphi'} y'$
\item $\vdash F(x_1, \ldots, x_n)= F(x'_1, \ldots, x'_n)$
\item $\vdash x_A^\varphi = (x')_A^{\varphi'}$
\item $\vdash e(x)=e(x')$
\item $\vdash Px_1 \ldots x_n \leftrightarrow Px'_1 \ldots x'_n$
\item $\vdash \neg\varphi \leftrightarrow \neg \varphi'$
\item $\vdash (\varphi \wedge \theta) \leftrightarrow (\varphi' \wedge \theta')$
\item $\vdash K_A \varphi \leftrightarrow K_A\varphi'$
\item $\vdash [e]\varphi \leftrightarrow [e]\varphi'$
\item $\vdash C_\gA^\theta \varphi \leftrightarrow C_\gA^{\theta'}\varphi'$
    \end{enumerate}
\end{lemma}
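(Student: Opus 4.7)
The plan is to dispatch each of the ten items individually, each time appealing to the hypotheses together with the corresponding axioms and rules of $\mathbf{DLKV}$. The Boolean items (6) and (7) are immediate. For items (8) and (9), $\vdash \varphi \to \varphi'$ (and its converse) yields $\vdash K_A(\varphi \to \varphi')$ and $\vdash [e](\varphi \to \varphi')$ by $K_A$- and $[e]$-Necessitation, after which $K_A$- and $[e]$-Distribution close both directions. Items (5) and (2) follow directly from the Indiscernability and Functionality axioms applied to the premises $\vdash x_i = x'_i$, using Reflexivity to discharge the antecedent. Item (1) is a two-case analysis on $\varphi$ via the Definition by Cases axiom: when $\varphi$ holds we get $x|_\varphi y = x$, which by $\vdash x = x'$ and (since $\varphi'$ holds too, by $\vdash \varphi \leftrightarrow \varphi'$) Definition by Cases in the other direction gives equality with $x'|_{\varphi'}y'$; the $\neg\varphi$ case is symmetric using $\vdash y = y'$.

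For item (4), $[e]$-Necessitation applied to $\vdash x = x'$ yields $\vdash [e](x = x')$, and the Atomic Change reduction axiom (instantiated with the equality predicate) gives $\vdash pre_e \to e(x) = e(x')$. On the other hand, the Survival of Value axiom yields $\vdash \neg pre_e \to e(x) = \uparrow$ and $\vdash \neg pre_e \to e(x') = \uparrow$, so transitivity of equality gives $\vdash \neg pre_e \to e(x) = e(x')$; combining both cases closes item (4). For item (10), the nontrivial step is changing the conditioning formula $\theta$ to $\theta'$: since $K_A^\theta \chi$ unfolds to $K_A(\theta \to \chi)$, the hypothesis $\vdash \theta \leftrightarrow \theta'$ gives $\vdash K_A^\theta \chi \leftrightarrow K_A^{\theta'} \chi$ for any $\chi$. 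Applying this to the Fixed Point axiom yields $\vdash C_\gA^\theta \varphi \to \bigwedge_{A \in \gA} K_A^{\theta'} C_\gA^\theta \varphi$; the Induction axiom (instantiated with $\psi := C_\gA^\theta \varphi$) then delivers $\vdash C_\gA^\theta \varphi \to C_\gA^{\theta'} C_\gA^\theta \varphi$, which together with the Fixed Point consequence $\vdash C_\gA^\theta \varphi \to \varphi$ and C-Necessitation/C-Distribution gives $\vdash C_\gA^\theta \varphi \to C_\gA^{\theta'} \varphi$. The converse direction is symmetric, and changing $\varphi$ to $\varphi'$ on the right-hand side is standard.

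The main obstacle is item (3), which I split into two parts. First, $\vdash x = x'$ combined with Necessitation gives $\vdash K_A(\varphi \to x = x')$, and the Hypothetical Equality axiom yields $\vdash x_A^\varphi = (x')_A^\varphi$, handling the change of term. For the change of condition $\vdash (x')_A^\varphi = (x')_A^{\varphi'}$, I split cases on $\langle K_A\rangle \varphi$. If $\neg \langle K_A\rangle \varphi$, then by item (8) also $\neg \langle K_A\rangle \varphi'$, and the contrapositive of the Non-vacuous Knowledge of Values axiom forces both $(x')_A^\varphi\!\!\uparrow$ and $(x')_A^{\varphi'}\!\!\uparrow$, so both equal $\uparrow$. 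Otherwise $\langle K_A\rangle \varphi$ holds and we consider whether $(x')_A^\varphi\!\!\downarrow$: in the defined case, Non-vacuous Knowledge of Values gives $K_A(\varphi \to x' = (x')_A^\varphi)$, which combined with $K_A(\varphi' \to \varphi)$ yields $K_A(\varphi' \to x' = (x')_A^\varphi)$; Hypothetical Equality then gives $(x')_A^{\varphi'} = ((x')_A^\varphi)_A^{\varphi'}$, and Explicit Value Introspection (Proposition~\ref{theorems}) collapses the right-hand side to $(x')_A^\varphi$, using $\mathcal{A}((x')_A^\varphi) = A$ and $\langle K_A\rangle \varphi'$. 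The undefined subcase follows by symmetry: if $(x')_A^{\varphi'}$ were defined, swapping $\varphi$ with $\varphi'$ in the previous argument would yield $(x')_A^\varphi = (x')_A^{\varphi'}$, contradicting $(x')_A^\varphi\!\!\uparrow$. The careful interplay of Hypothetical Equality, Non-vacuous Knowledge of Values, and Explicit Value Introspection (itself a derived theorem) is what makes this the most delicate case of the lemma.
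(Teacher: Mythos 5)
Your proof is correct. The paper itself states Lemma~\ref{re} without any proof (it is treated as routine and used freely in Lemmas~\ref{one-step reduction} and~\ref{reduction}), so there is no official argument to compare against; your write-up actually supplies the missing details. Items (1)--(2) and (4)--(10) are handled exactly as one would expect from the axioms of Table~\ref{tb0}, and your treatment of item (10) via the Fixed Point and Induction axioms (instantiating Induction with $\psi:=C_\gA^\theta\varphi$ under the new condition $\theta'$) is the standard and correct way to show that conditional common distributed knowledge respects provable equivalence of the condition. The genuinely non-trivial point is item (3), and your two-stage decomposition --- first changing the term via $K_A(\varphi\to x=x')$ and the derived theorem $K_A(\varphi\to x=y)\to x_A^\varphi=y_A^\varphi$, then changing the condition by a case split on $\langle K_A\rangle\varphi$ and on definedness, using Non-vacuous Knowledge of Values, the derived Hypothetical Equality theorem, and Explicit Value Introspection applied to the nested term $((x')_A^\varphi)_A^{\varphi'}$ (legitimate since $\mathcal{A}((x')_A^\varphi)=A$) --- is sound. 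Two small caveats worth making explicit: the facts you call ``Hypothetical Equality'' and ``Explicit Value Introspection'' are derived theorems from Proposition~\ref{theorems} rather than axioms, and since Lemma~\ref{re} concerns $\mathbf{DLKV}$ you should note that their derivations go through unchanged for formulas and terms of the full dynamic language (which they do, because the group (I)--(V) axiom schemes of $\mathbf{DLKV}$ are not restricted to the static fragment); and the symmetry/transitivity of $=$ used in items (1), (3) and (4) should be flagged as consequences of Reflexivity plus Indiscernability.
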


We can now prove a preliminary ``one-step reduction" result:

\begin{lemma}\label{one-step reduction} Let $e$ be any reducible event.
Then, for every 'static' formula $\theta$ in $LKV$ and every 'static' term $x\in Var_{LKV}$, the formula $[e] \theta$ and the term $e(x)$ are also reducible.
\end{lemma}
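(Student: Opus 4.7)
The plan is a simultaneous induction on the subexpression-complexity order $>$, proving jointly that (i) for every static formula $\theta\in Fml_{LKV}$ the formula $[e]\theta$ is reducible, and (ii) for every static term $x\in Var_{LKV}$ the term $e(x)$ is reducible. The induction is well founded because each reduction axiom in groups (VI)--(VII) rewrites $[e]\theta$ (respectively $e(x)$) using $[e]$ and $e(\cdot)$ only on strict subexpressions of $\theta$ (respectively of $x$), together with auxiliary expressions $pre_e$, $\langle e\rangle\varphi$, $post_e(v)$, $e(A)$ and $e[\mathfrak{A}]$. By the assumption that $e$ is reducible, together with the Observation preceding Lemma~\ref{conj-red}, both $pre_e$ and each $post_e(v)$ are already reducible; and by the Existential Dynamic Modality clause of Lemma~\ref{conj-red}, $\langle e\rangle\varphi$ is provably equivalent to $pre_e\wedge [e]\varphi$, and hence reduces whenever $[e]\varphi$ does, i.e.\ by the inductive hypothesis applied to the strictly simpler $\varphi$.

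For the term cases I would case-split on the shape of $x$. Combining Survival of Value with the case-specific axioms one obtains $e(c)=c|_{pre_e}\uparrow$ by Preservation of Constants, $e(v)=post_e(v)|_{pre_e}\uparrow$ by Change of Basic Values, and analogous equalities for $x|_\varphi y$, $F(\overline{x})$ and $x_A^\varphi$ using the Change of Disjunctive, Functional and Hypothetical Terms axioms respectively. In each case the inductive hypothesis supplies static equivalents for the subterms $e(x)$, $e(y)$, $e(x_i)$ and for the conditioning formula $\langle e\rangle\varphi$, after which Lemma~\ref{re} (Replacement of Equivalents) rebuilds the entire right-hand side as a provably equal static term.

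For the formula cases I would case-split on $\theta$: for atoms $P\overline{x}$ the Atomic Change axiom reduces $[e]P\overline{x}$ to $pre_e\to Pe(x_1)\ldots e(x_n)$, handled by the inductive hypothesis on terms; for $\neg\varphi$ and $\varphi\wedge\psi$, Partial Functionality together with the Derived Reduction Axiom for Conjunction and the inductive hypothesis suffice; and for $K_A\varphi$ and $C_{\mathfrak{A}}^\theta\varphi$ the Knowledge Update and $C_{\mathfrak{A}}$-Update axioms reduce to expressions whose only remaining dynamic occurrences are $[e]$ applied to the strict subformula $\varphi$, again handled by the inductive hypothesis (with the new conditioning formula $\langle e\rangle\theta$ disposed of as in paragraph one). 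Replacement of Equivalents then assembles a genuine static formula, and the induction closes.

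The one point that needs careful checking, rather than a genuine obstacle, is that the right-hand sides of the reduction axioms are strictly smaller than their left-hand sides in the order $>$ once $\langle e\rangle\varphi$ is unfolded to $pre_e\wedge [e]\varphi$; this is precisely guaranteed by clauses (1)--(4) in the definition of $>$, together with the fact that $pre_e=\bigwedge\Phi$ is built from formulas $\varphi\in\Phi$ which are $<e$, and hence $<[e]\theta$. With this well-foundedness verified, the two case analyses above go through uniformly and prove the lemma.
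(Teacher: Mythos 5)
Your proposal is correct and follows essentially the same route as the paper: a simultaneous induction on sub-expression complexity over the static formula/term, applying the group (VI)--(VII) reduction axioms case by case, discharging $pre_e$, $post_e(v)$ and $\langle e\rangle\varphi$ via the reducibility of $e$ and Lemma~\ref{conj-red}, and reassembling static equivalents with Replacement of Equivalents (Lemma~\ref{re}). Your explicit appeal to Survival of Value to justify the $\cdot|_{\rho}\!\!\uparrow$ form of the reduced terms is, if anything, slightly more careful than the paper's terse presentation.
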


\begin{proof} Since $e=\Phi/\sigma$ is reducible, its precondition $pre_e=\bigwedge \Phi$ is also reducible. Let us fix some static formula $\rho$ that is provably equivalent to $pre_e$. Throughout the proof, we'll make liberal use of \emph{Replacement of Equivalents} Lemma \ref{re}, without explicitly mentioning it.

We now prove both claims simultaneously, by induction on sub-expression complexity:

\smallskip

For $\theta:=Px_1\ldots x_n$: by the induction hypothesis, for all $i=1,n$ we have $\vdash e(x_i)=x'_i$ for some static terms $x'_1, \ldots, x'_n\in Var_{LKV}$. Using also the Indiscernability axiom and the Atomic Change reduction axiom, $[e] \theta$ is provably equivalent to $pre_e \to P x'_1\ldots x'_n$, and hence also to $\rho\to P x'_1\ldots x'_n$.

The case $\theta:=\neg\varphi$: by the induction hypothesis, $\varphi$ is reducible, so there exists some static formula $\varphi_e$ s.t. $\vdash [e] \varphi \leftrightarrow \varphi_e$ is provable. This, together with our Lemma's assumption and the Partial Functionality reduction axiom, gives us $\vdash [e] \theta \leftrightarrow (\rho\to \neg \varphi_e)$.

The case $\theta \, :=\, \varphi\wedge \theta$ is similar: by induction, there exist static formulas $\varphi_e$ and $\theta_e$ s.t. $\vdash [e] \varphi \leftrightarrow \varphi_e$  and $\vdash [e] \theta \leftrightarrow \theta_e$ are provable. Using now Lemma \ref{conj-red}.1
(the Derived Reduction Axiom for Conjunction) we obtain $\vdash [e] \theta \leftrightarrow (\varphi_e\wedge \theta_e)$.

For $\theta:= K_A\varphi$: by induction, there exists some static formula $\varphi_e$ s.t. $\vdash [e] \rangle \varphi \leftrightarrow \varphi_e$ is provable. Putting this together with the Knowledge Update reduction axiom, we get $\vdash [e] \theta \leftrightarrow (\rho \to K_{e(A)} \varphi_e)$.

For $\theta:= C_{\gA}^\theta \varphi$: we use the induction hypothesis to get static formulas $\varphi_e$ and $\theta_e$ s.t. $\vdash [e] \varphi \leftrightarrow \varphi_e$  and $\vdash [e] \theta \leftrightarrow \theta_e$ are provable. Using now the $C_{\gA}^\theta \varphi$-Update reduction axiom, as well as Lemma \ref{conj-red}.2, we obtain $\vdash [e] \theta \leftrightarrow (\rho \to C_{e[\gA]}^{\rho\wedge \theta_e} \varphi_e)$.

Moving to terms: the case $x:=c\in C$ is trivial, since the Preservation of Constants reduction axiom implies that we have $\vdash e(c)\, =\, c|_{pre_e}\!\!\uparrow$, so $e(c)$ is reducible: $\vdash e(c) \, = \, c|_{\rho}\!\!\uparrow$.

For $x:=v$ (basic variable): since $e$ is reducible, there exists some static term $x'$ s.t. $\vdash post_e(v)= x'$. Using this and the Change of Basic Variables reduction axiom, we obtain that $\vdash e(x)\, =\, x'|_{\rho}\!\!\uparrow$, so $e(x)$ is reducible.

For $x:=y|_\varphi z$: by the induction hypothesis, there exist static terms $y', z'$ and static formula $\varphi'$, s.t.
$\vdash e(y)=y'$, $\vdash e(z)=z'$ and $\vdash [e]\varphi \leftrightarrow \varphi'$ are provable. Using these, as well as the Change of Disjunctive Terms reduction axiom and Lemma \ref{conj-red}.2, we obtain $\vdash e(x)\, =\, (y' |_{\rho \wedge \varphi'} z')|_{\rho}\!\!\uparrow$.

For $x:=F(x_1, \ldots, x_n)$: by the induction hypothesis, there exist static terms $x'_1, \ldots, x'_n$ s.t. $\vdash e(x_i)=x'_i$ is provable for all $i\leq n$. Using the Change of Functional Terms reduction axiom, we obtain $\vdash e(x)\, =\, F(x'_1, \ldots, x'_n)|_{\rho}\!\!\uparrow$.

For $x:=y_A^\varphi$: by induction, there exists a static formula $\varphi'$, s.t. $\vdash [e]\varphi \leftrightarrow \varphi'$ is provable. Using also the Change of Hypothetical Values reduction axiom, as well as Lemma \ref{conj-red}.2, we conclude that $\vdash e(x)\, = \, e(y)_{e(A)}^{\rho\wedge \varphi'}|_{\rho}\!\!\uparrow$.
\end{proof}

We are now in the position to establish our full reduction result:

\begin{lemma}\label{reduction} \emph{(Co-expressivity of $DLKV$ and $LKV$)}
All expressions (formulas, terms and events) $\varepsilon$ of $DLKV$ are reducible. As a consequence, $DLKV$ and $LKV$ have the same expressive power.
\end{lemma}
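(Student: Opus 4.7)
\textbf{Proof plan for Lemma \ref{reduction}.} The plan is to prove reducibility of every $DLKV$-expression by well-founded induction on the sub-expression complexity order $>$, leveraging Lemma \ref{one-step reduction} together with the Replacement of Equivalents Lemma \ref{re}. The point of Lemma \ref{one-step reduction} was that it handles a \emph{single} dynamic layer $[e]\theta$ or $e(x)$ under the assumption that $e$ is already reducible and $\theta, x$ are already static; the present lemma removes both of these assumptions by first reducing the inner sub-expressions to static ones.

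First, I would treat the purely static constructors. For atomic $P\ux$, Boolean combinations, $K_A\varphi$, $C_\gA^\theta\varphi$, and for compound terms $x|_\varphi y$, $F(\ux)$ and $x_A^\varphi$, each immediate sub-expression is strictly smaller in $>$, so by the induction hypothesis it is reducible to some static counterpart; Lemma \ref{re} then assembles these static counterparts into a static formula (or term) provably equivalent (resp.\ equal) to the original. Constants $c$ and basic variables $v$ are already static and need no work.

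Next, the essential dynamic cases. For an event $e = \Phi/\sigma$, by the sub-expression order we have $e > \varphi$ for every $\varphi \in \Phi$ and $e > \sigma(v)$ for every $v \in V$ that actually occurs as a post-condition; each of these is reducible by the induction hypothesis, so $e$ is reducible by the very definition of a reducible event. For $[e]\varphi$, note $[e]\varphi > e$ and $[e]\varphi > \varphi$, so by induction $e$ is reducible and there is a static $\varphi'$ with $\vdash \varphi \leftrightarrow \varphi'$. Replacement gives $\vdash [e]\varphi \leftrightarrow [e]\varphi'$, and Lemma \ref{one-step reduction} applied to the reducible event $e$ and the static formula $\varphi'$ yields a static formula equivalent to $[e]\varphi'$, hence to $[e]\varphi$. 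The case $e(x)$ is completely analogous: reduce $e$ and $x$ to reducible event and static term $x'$ by the induction hypothesis, use Replacement to get $\vdash e(x) = e(x')$, and then Lemma \ref{one-step reduction} to reduce $e(x')$ to a static term.

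Once every $DLKV$-expression has been shown reducible, the co-expressivity claim is immediate: every $DLKV$-formula is provably equivalent (hence, by soundness, semantically equivalent) to a formula of the static fragment $LKV$, and $LKV \subseteq DLKV$ is trivial, so the two logics have the same expressive power. I expect no real obstacle here: the hard combinatorics was already packed into Lemma \ref{one-step reduction}, and the main point is simply to check that the induction on $>$ indeed lets us reduce sub-expressions first and apply the one-step lemma afterwards, which is exactly what Replacement of Equivalents allows.
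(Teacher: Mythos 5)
Your proposal is correct and follows essentially the same route as the paper's proof: well-founded induction on the sub-expression order, handling static constructors via Replacement of Equivalents (Lemma \ref{re}), events by the definition of reducibility applied to their components, and the genuinely dynamic cases $[e]\varphi$ and $e(x)$ by first reducing the inner parts via the induction hypothesis and then invoking the one-step reduction Lemma \ref{one-step reduction}. No gaps; this matches the paper's argument case for case.
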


\begin{proof}
We prove this by induction on the subexpression-complexity of the expression $\varepsilon$.

The base cases $\varepsilon:= c\in C$ and $\varepsilon:=v\in V$ are trivial, since these are already static terms.

The cases $\varepsilon:= x|_\varphi y$, $\varepsilon:= F(\ux)$ and $\varepsilon:=x_A^\varphi$ are straightforward: we use the induction hypothesis and parts 1,2 and 3 of Lemma \ref{re}.

For the case $\varepsilon:= e(x)$: by induction, $e$ and $x$ are reducible; so there exists a static term $x'$ s.t. $\vdash x=x'$ is provable. By Lemma \ref{re}.4, we have $\vdash \varepsilon=e(x')$, and by Lemma \ref{one-step reduction} $e(x'$ is reducible (since $e$ is reducible and $x'$ is static) so we have $\vdash e(x')=x''$ for some static term $x''$. Putting these together and using transitivity of equality, we obtain $\vdash \varepsilon = x''$, so $\varepsilon$ is reducible.

The cases $\varepsilon:= P\ux$, $\varepsilon:= \neg \varphi$, $\varepsilon:=\varphi\wedge \theta$, $\varepsilon:=K_A\varphi$ and $\varepsilon:=C_{\gA}^\theta \varphi$ are again straightforward: use the induction hypothesis and parts 5,6,7,8 and 10 of Lemma \ref{re}.

For the case $\varepsilon:=[e]\varphi$: by induction, $e$ and $\varphi$ are reducible; so there exists a static formula $\varphi'$ s.t. $\vdash \varphi \leftrightarrow \varphi'$ is provable. By \ref{re}.9, we have $\vdash \varepsilon \leftrightarrow [e]\varphi'$, and by \ref{one-step reduction} $[e]\varphi'$ is reducible (since $e$ is reducible and $\varphi'$ is static) so we have $\vdash [e]\varphi' \leftrightarrow \varphi''$ for some static formula $\varphi''$. Putting these together and using transitivity of equivalence, we obtain $\vdash \varepsilon \leftrightarrow \varphi''$, hence $\varepsilon$ is reducible.

Finally, the case $\varepsilon:=e=\Phi/\sigma$ is straightforward: since all the formulas $\varphi\in \Phi$ and all postconditions $\sigma(v)$ (with $v\in V$) are $<e$ in the subexpression complexity order, they are all reducible (by the induction hypothesis), and hence $e$ is also reducible (by definition).
\end{proof}


Finally, we can prove Theorem \ref{CompDLKV} (completeness and decidability of $\mathbf{DLKV}$):

\medskip

\par\noindent\emph{\textbf{Proof of Theorem \ref{CompDLKV}}}: For completeness, let $\theta$ be a valid formula in the language $DLKV$. By Lemma \ref{reduction}, there exists some static formula $\theta'$ in $LKV$ s.t. $\vdash \theta\leftrightarrow \theta'$ is a theorem in the axiomatic system $\mathbf{DLKV}$. By the soundness of $\mathbf{DLKV}$, $\theta'$ is also valid. By
the completeness of the system $\mathbf{LKV}$ (Theorem \ref{CompLKV}), $\vdash \theta'$ is provable in the system $\mathbf{LKV}$ (since $\theta'$ is a static formula), and hence it is also a theorem in the system $\mathbf{DLKV}$. Using this, together with the $\mathbf{DLKV}$-theorem $\vdash \theta\leftrightarrow \theta'$ again, we conclude that $\vdash \theta$ is provable in $\mathbf{DLKV}$, as desired. Decidability follows immediately from the decidability of the static logic $LKV$ (Theorem \ref{CompLKV}) and the co-expressivity of $LKV$ and $DLKV$ (Lemma \ref{reduction}).


\section{Conclusions}


In this paper we axiomatize a decidable but richly expressive logic, that can deal with both ``knowledge that" and ``knowledge what", in a multi-agent setting in which data is modeled using local variables that can be shared, exchanged or modified. Our investigation builds on prior work on knowledge of values and on data-exchange events, as well as more classical work in Epistemic Logic on public announcements, common and distributed knowledge. But our setting has a number of innovative features, such as definite descriptions for hypothetical values, conditional distributed knowledge of values, a conditional version of common distributed knowledge, etc. The proofs also present a number of challenges (due to the presence of the above features).
Due to page limitations, we confined ourselves to semi-public events, but in future work we plan to extend this study to arbitrary data-exchange events (such as secret hacking).






\begin{thebibliography}{8}
\providecommand{\urlalt}[2]{\href{#1}{#2}}
\providecommand{\doi}[1]{doi:\urlalt{http://dx.doi.org/#1}{#1}}



\bibitem{Agotnes} Agotnes, T., Wang, Y.N., ``Resolving Distributed Knowledge'', in: \emph{Artificial Intelligence},  252, pp. 1-21, (2017). \doi{10.1016/j.artint.2017.07.002}.




\bibitem{Baltag2016} Baltag, A., ``To Know is to Know the Value of a Variable'', in: \emph{Adv. in Modal Logic 2016},  pp. 135-155, (2016).


\bibitem{BvB2020a} Baltag, A., van Benthem, J., ``A Simple Logic of Functional Dependence'', in: \emph{Journal of Philosophical logic}, vol. 50, pp. 939-1005,  (2021).   \doi{10.1007/S10992-020-09588-Z}.

\bibitem{BvB24} Baltag, A., van Benthem, J., ``Updates, Generalized p-Morphisms, and (Co-)Recursive Equations'', To appear in van Benthem, J. and F. Liu (eds.), \emph{Graph Games and Logic Design - Recent developments and further directions}, Springer, (2024).

\bibitem{BM} Baltag, A., Moss, L.S., ``Logics for Epistemic Programs'', in: \emph{Synthese},  139(2), pp. 165-224, (2004). \doi{10.1023/B:SYNT.0000024912.56773.5E}.


\bibitem{BMS} Baltag, A., Moss, L.S, Solecki, S., ``The Logic of Public Announcements, Common Knowledge, and Private Suspicions'', in: \emph{Proceedings TARK 98},  pp. 43-56, (1998).


\bibitem{NEW-BMS} Baltag, A., Moss, L.S., Solecki, S., ``Logics for epistemic actions: completeness, decidability, expressivity'', in:
\emph{Logics}, 1 (2), pp. 97-147, (2023). \doi{10.3390/logics1020006}.


\bibitem{Baltag and Renne:2016} Baltag, A., Renne, B., ``Dynamic Epistemic Logic'', in: \emph{Stanford Encyclopedia of Philosophy}, (2016).





\bibitem{BS20} Baltag, A., Smets, S., ``Learning what Others Know'', in: Kovacs, L. and E. Albert (eds.), {\it LPAR23 proceedings of the International Conference on Logic for Programming AI and Reasoning, EPiC Series in Computing},  Volume 73, pp. 90-110, (2020). \doi{10.29007/PLM4}.

\bibitem{BS24} Baltag, A., Smets, S., ``Logics for Data Exchange and Communication'', Conference proceedings of Advances in Modal Logic 2024,  vol. 15, 147-169, (2024).

\bibitem{Lonely} van Benthem, J., ``One is a lonely number'', in: Koepke, P.; Z. Chatzidakis and W. Pohlers,
(eds.) {\it Logic Colloquium02},  pp.96-129, ASL and A.K. Peters, Wellesley MA, (2002). \doi{10.1017/9781316755723.006}.




\bibitem{LDII} van Benthem, J., \emph{Logical Dynamics of Information and Interaction},  Cambridge University Press, Cambridge UK, (2011). \doi{10.1017/CBO9780511974533}.


\bibitem{vBvEK} van Benthem, J., van Eijck, J., Kooi, B., ``Logics of Communication and Change''', in: \emph{Information and Communication}, 204:11, pp. 1620-1662, (2006). \doi{10.1016/j.ic.2006.04.006}.

\bibitem{BRV} Blackburn, P., de Rijke, M., Venema, Y., \emph{Modal Logic}, Cambridge University Press, Cambridge,  (2001).\doi{10.1017/CBO9781107050884}.


\bibitem{DHK} van Ditmarsch, H., van der Hoek, W., Kooi, B., \emph{Dynamic Epistemic Logic}, Springer, Dordrecht, (2007). \doi{10.1007/978-1-4020-5839-4}.

\bibitem{vEGWang} van Eijck, J., Gattinger, M., Wang, Y., ``{Knowing Values and Public Inspection'', in:  \emph{Proceedings 7th Indian  Conference on Logic and its Applications}, Kanpur, 77--90, (2007). \doi{10.1007/978-3-662-54069-5_7}.

\bibitem{FHMV} Fagin, R., Halpern, J., Moses, Y., Vardi, M., \emph{Reasoning About Knowledge}, The MIT Press, Cambridge MA, (1995). \doi{10.7551/mitpress/5803.001.0001}.

\bibitem{Gerbrandy} Gerbrandy, J., \emph{Bisimulations on Planet Kripke}, Doctoral Dissertation, ILLC, UvA. ILLCS DS Series DS-1999-01,     (1998).


\bibitem{Parikh} Parikh, R., ``Levels of Knowledge, Games and Group Action'', in: {\it Research in Economics}, 57, pp. 267-281, (2003). \doi{10.1016/S1090-9443(03)00030-9}.

\bibitem{PaRam} Parikh, R., Ramanujam, R., ``A Knowledge-Based Semantics of Messages'', in: \emph{Journal of Logic, Language and Information}, 12, pp. 453-467, (2003). \doi{10.1023/A:1025007018583}.

\bibitem{Plaza} Plaza, J., ``Logics of Public Communication'', in: \emph{Proceedings 4th International Symposium on Methodologies for Intelligent Systems}, pp. 201-216, (1989). \doi{10.1007/S11229-007-9168-7}.




\bibitem{Suzanne} van Wijk, S., {\it Coalitions in Epistemic Planning}.{\it MSc Thesis},  ILLC, UvA, (2015). \emph{https://www.illc.uva.nl/Research/Publications/Reports/MoL/}

\bibitem{Yanjing} Wang, Y., ``Beyond Knowing That: A New Generation of Epistemic Logics'', in \emph{Hintikka Volume}, Springer, Dordrecht, 499-533, (2018).


\bibitem{WangFan1} Wang, Y., Fan, J., ``Knowing that, Knowing what, and Public Communication: Public Announcement Logic with Kv Operators'', in Proceedings of IJCAI 2013: 1147-1154. AAAI press, (2013).

\bibitem{WangFan2} Wang, Y., Fan, J., ``Conditionally knowing what'', in Advances in Modal Logic Vol. 10: 569-587, College Publications, (2014).

\bibitem{GuWang} Gu, T., Wang, Y., `` ``Knowing value'' logic as a normal modal logic'', in Advances in Modal Logic Vol. 11:362-381, College Publications, (2016).

\bibitem{Ding} Ding, Y., ``The axiomatization and complexity of Knowing-What-Logic on model class K'', \doi{10.48550/arXiv.1609.07684}, arXiv:1609.07684v2 (2016).

\bibitem{Hong} Hong, B., ``Knowing the Value of a Predicate''. Conference LORI 2023, 149-166, (2023). \doi{10.1007/978-3-031-45558-2\_12}.











}
\end{thebibliography}
\end{document}